\newif\ifarxiv
\def\BibTeX{{\rm B\kern-.05em{\sc i\kern-.025em b}\kern-.08em
    T\kern-.1667em\lower.7ex\hbox{E}\kern-.125emX}}
\def\eqdef{\stackrel{\triangle}{=}}
\newcommand{\eqlinebreakshort}{\ensuremath{\nonumber \\ &  \quad}}
\newcommand{\eqbreakshort}{\ensuremath{ \\}}
\newcommand{\degree}{\ensuremath{\mathsf{deg}}}
\newcommand{\diag}{\ensuremath{\mathsf{diag}}}
\newcommand{\sign}{\mathrm{sign}}
\renewcommand\footnotemark{}
\newcommand{\dermat}[1]{\ensuremath{\bJ_{#1}}}
\newcommand{\dist}{\text{dist}}
\newcommand{\logbias}{\ensuremath{\chi}}
\newcommand{\gammaone}{\gamma^*}
\newcommand{\gammatwo}{\gamma^\dagger}
\newcommand{\startindex}{\ensuremath{2}}
\newcommand{\startindexminus}{\ensuremath{1}}
\newcommand{\countrandom}{\ensuremath{t-1}}
\newcommand{\constone}{\ensuremath{c}}
\newcommand{\revise}[1]{\textcolor{black}{#1}}
\newcommand{\revisetwo}[1]{\textcolor{black}{#1}}
\newif\ifsixteen
\newif\ifacc
\begin{document}

\title{
Estimating True Beliefs in Opinion Dynamics with Social Pressure
}

\ifarxiv 
\author{Jennifer Tang, Aviv Adler, Amir Ajorlou, and Ali Jadbabaie
}
\else 
\author{Jennifer Tang, \IEEEmembership{Member, IEEE}, Aviv Adler, \IEEEmembership{Member, IEEE}, Amir Ajorlou, \IEEEmembership{Member, IEEE}, and Ali Jadbabaie, \IEEEmembership{Fellow, IEEE}
}
\fi 

\thanks{The work was supported by ARO MURI W911 NF-19-1-0217 and a Vannevar Bush Fellowship from the Office of the Under Secretary of Defense.}


\thanks{Jennifer Tang, Amir Ajorlou, and Ali Jadbabaie are with the Laboratory for Information and Decision Systems and the Institute for Data, Systems, and Society, Massachusetts Institute of Technology, Cambridge, MA 02139 USA (e-mail:
jstang@mit.edu; ajorlou@mit.edu; jadbabai@mit.edu}

\thanks{Aviv Adler is with Analog Devices Inc, Boston, MA 02110 USA (e-mail:aadler1561@gmail.com)}




\maketitle

\ifarxiv 
\thispagestyle{plain}
\pagestyle{plain}
\fi 


\begin{abstract}
    Social networks often exert social pressure, causing individuals to adapt their expressed opinions to conform to their peers. An agent in such systems can be modeled as having a (true and unchanging) \emph{inherent belief} while broadcasting a \emph{declared opinion} at each time step based on her inherent belief and the past declared opinions of her neighbors. An important question in this setting is \emph{parameter estimation}: how to disentangle the effects of social pressure to estimate inherent beliefs from declared opinions. \revisetwo{This is useful for forecasting when agents' declared opinions are influenced by social pressure while real-world behavior only depends on their inherent beliefs.} To address this, Jadbabaie et al. \cite{socialPressure2021} formulated the \emph{Interacting P\'olya Urn model} of opinion dynamics under social pressure and studied it on complete-graph social networks using an aggregate estimator, and found that their estimator converges to the inherent beliefs unless majority pressure pushes the network to consensus.
    In this work, we study
    this model on arbitrary networks, 
    providing an estimator which converges to the inherent beliefs even in consensus situations. 
    Finally, we bound the convergence rate of our estimator in both consensus and non-consensus scenarios; to get the bound for consensus scenarios (which converge slower than non-consensus) we additionally found how quickly the system converges to consensus.
\end{abstract}

\ifarxiv 
\else 
\begin{IEEEkeywords}
consensus, inverse problem, maximum likelihood estimation, multi-agent system, opinion dynamics
\end{IEEEkeywords} 
\fi 

\ifacc 
\section{Introduction}

Opinion dynamics is the study of how people's opinions evolve over time as they interact with others on social networks. This can provide insights and predictions into how public opinion develops on a variety of political, social, commercial and cultural topics.
For instance, Ancona et al. \cite{Ancona2022} used opinion dynamics models to model the spread of vaccine hesitancy and to develop marketing strategies to help combat it.
%
%
In many opinion dynamics models, there is an assumption that people are truthful in the opinions they share. However, in reality this is not always the case, as people often alter their expressed views to better fit in with their social environment, 
which in turn feeds back into the social environment. The social pressure feedback loop can cause publicly-expressed opinions to become arbitrarily uniform over time \cite{centola_2005}, which can make parameter estimation difficult.  

In this work, we study an \emph{Interacting P\'{o}lya Urn model} for opinion dynamics under social pressure, originating from \cite{socialPressure2021} and developed further in \cite{opiniondynamicsCDC}. This model captures a system of agents with stochastic behaviors who additionally might be untruthful due to a desire to conform to their neighbors. 
This model consists of $n$ agents on a fixed network communicating on an issue with two basic sides, $0$ and $1$. {\color{blue}Each agent has an \emph{inherent} (true and unchanging) belief, which is either $0$ or $1$, and also a \emph{bias parameter} $\gamma$ which indicates to what degree they are willing to share their inherent belief as opposed to conforming to their neighbors. Both the agents' inherent opinions and bias parameters are hidden from their neighbors and outside observers.}
Then the agents communicate their \emph{declared} opinions to their neighbors at discrete time steps: at each step $t = 1, 2, \dots$, all the agents simultaneously declare one of the two opinions (i.e. either $0$ or $1$), which is then observed by their neighbors; the declarations of all the agents at any given step are made at random and independently of each other but with probabilities determined by their inherent belief, bias parameter, and the ratio of the two opinions previously observed by the agent up to the current time. This can represent both scenarios where agents alter their statements (contrary to their actual beliefs) to better fit in with the opinions they have observed from others in the past and scenarios where the agents update their beliefs according to the declared opinions of others, but retain a bias towards their original beliefs.
%

\subsection{Background Literature}

We refer the reader to \cite{socialPressure2021} and \cite{opiniondynamicsCDC} for in-depth discussion of prior work. Here, we discuss some relevant highlights. 

A highly influential opinion dynamics model is the DeGroot model \cite{DeGroot1974}, where agents in a network average their neighbors' opinions in an iterative manner. With this procedure, {\color{blue}on a connected aperiodic graph}, the entire group asymptotically approaches a state where they all share a single opinion, a phenomenon known as consensus.
However in real social networks, consensus is not always reached. To deal with this, other opinion dynamics models were created. Among these is the Friedkin-Johnsen model \cite{friedkin1990}. Each agent in the Friedkin-Johnsen model updates her opinion at each step by averaging her neighbors' opinions (as in the DeGroot model) and then averaging the result with her initial opinion. 

\ifacc
\else
Besides the Interacting P\'{o}lya Urn model in \cite{socialPressure2021}, several other models also include agents that internally retain their initial opinions in some form \cite{Acemoglu2013, gaitonde2020, Ye2019}.
\fi

Ye et al. \cite{Ye2019} study a model in which each agent has both a private and expressed opinion, which evolve differently. Agents' private opinions evolve using the same update as in the Friedkin-Johnsen model, while their public opinions are updated as the average of their own private opinion and the average public opinion of their neighbors. 
Both \cite{centola_2005} and \cite{Ye2019} are very similar to \cite{socialPressure2021}, since agents' expressed opinions may not match their internal beliefs. However, unlike \cite{socialPressure2021}, \cite{Ye2019} assumes opinions are precisely expressed on a continuous interval, which is unrealistic for certain applications. On the other hand \cite{centola_2005} works with binary opinions like \cite{socialPressure2021}, though with a significantly more complex model that includes additional terms and parameters. 

The analysis in \cite{socialPressure2021} is primarily focused on studying whether inherent beliefs are recoverable using an aggregate estimator. This is carried out by establishing the convergence of the dynamics in the network and analyzing the equilibrium state, though the analysis is limited to the complete graph and all agents having the same amount of resistance to social pressure. 
In \cite{opiniondynamicsCDC}, the authors {\color{blue} study the convergence properties of the Interacting P\'{o}lya Urn model introduced in \cite{socialPressure2021} on arbitrary undirected networks, finding that} the proportion of declared opinions of each agent converges almost surely to an equilibrium point in any network configuration. They also determined necessary and sufficient conditions for a network to approach consensus. We note that the definition of consensus used for the Interacting P\'{o}lya Urn model is that all agents will declare a single opinion (all `0' or all `1') with probability tending to $1$. 


\subsection{Contributions}

\ifacc
\else
\fi

\ifacc
\else

\paragraph{Inferring Inherent Beliefs and Bias Parameters}
\fi

In \cite{socialPressure2021}, the authors consider when it is possible to asymptotically determine the inherent beliefs 
of the agents based on their history of declared opinions and those of their neighbors. They study a simplified case in which the social network is an (unweighted) complete graph and all agents have the same, known, degree of bias towards their true beliefs, and consider a specific aggregate estimator which tries to first estimate the proportion of agents with true belief $1$ and then determine which agents those are.
In this setting, they show that the aggregate estimator estimates the proportion of agents with true belief $1$ if and only if the agents do not asymptotically approach consensus (where a large majority causes all agents declare the same opinion with probability approaching $1$). 

{\color{blue} In this work, we consider the problem of estimating the agents' parameters in the general setting presented in \cite{opiniondynamicsCDC} (which analyzed the convergence properties of the Interacting P\'{o}lya Urn model); we also remove the restriction that the social network's graph needs to be undirected}:
\begin{enumerate}
    \item the social network is an arbitrary weighted (and connected) undirected graph, possibly with self-loops;
    \item the agents can have heterogeneous bias parameters, indicating different levels of resistance to social pressure or certainty in their inherent beliefs, which are not known.
\end{enumerate}
{\color{blue} Both the agents' inherent beliefs and bias parameters are unknown and must be inferred from observing the behavior of the network.}
This greatly increases the applicability of the model, as real-life social networks have a variety of different structures and people {\color{blue} have varied reactions to social pressure.}

In this setting, we study the maximum likelihood estimator (MLE), which estimates bias parameters from the history of declared opinions, rather than the aggregate estimator from \cite{socialPressure2021}. We also derive a simplified estimator for inherent beliefs from the MLE, which takes a clean form with a low-dimensional sufficient statistic, consisting of two values which are simple to update at each step. We show that if the history of the agents' declared opinions is known, the MLE almost surely asymptotically converges to the correct inherent beliefs and bias parameters of all the agents in all such networks (even when the network approaches consensus). This resolves the fundamental question posed in \cite{socialPressure2021} of whether such estimation is always possible.

\ifacc
\else
We also examine, asymptotically, how fast the inherent belief estimator converges in the case of consensus (in \Cref{sec::rate_converge_estimator}). 
\fi
\else 
\section{Introduction}

\ifarxiv 
Opinion
\else 
\IEEEPARstart{O}{pinion} 
\fi 
dynamics studies how people's opinions evolve over time as they interact with others on social networks. This can provide insights and predictions about how public opinion develops on a variety of political, social, commercial and cultural topics, as well as guide marketing and political campaign strategies.
For instance, Ancona et al. \cite{Ancona2022} used opinion dynamics models to study the spread of vaccine hesitancy and to develop marketing strategies to help combat it.
%
%
Many common opinion dynamics models assume that people are truthful in the opinions they share. However, in reality this is not always the case, as people often alter their expressed views to better fit in with their social environment, 
which in turn feeds back into the social environment. This social pressure feedback loop can cause publicly-expressed opinions to become arbitrarily uniform over time \cite{centola_2005}, which poses difficulties in estimating and studying the underlying true public opinion.  

In this work, we study an \emph{Interacting P\'{o}lya Urn model} for opinion dynamics under social pressure, originating from \cite{socialPressure2021} and developed further in \cite{opiniondynamicsCDC}, which captures a system of agents with stochastic behaviors who alter their publicly-expressed opinions to conform to their neighbors. 
This model consists of $n$ agents on a fixed network communicating on an issue with two basic sides, denoted $0$ and $1$. Each agent $i$ has an \emph{inherent} (true and unchanging) 
belief $\phi_i$, which is either $0$ or $1$, and a \emph{bias parameter} $\gamma_i$ indicating the ratio of the strength of their attachment to opinions $1$ and $0$, with $\gamma_i = 1$ indicating a neutral position (equal preference for both, though we will assume that no agents are neutral) and $\phi_i = 1 \iff \gamma_i > 1$ (higher preference for $1$ than $0$). 
The agents communicate their \emph{declared opinions} to their neighbors at discrete time steps: at each integer step $t$, each agent $i$ (simultaneously) declares an opinion $\psi_{i,t} \in \{0,1\}$; 
the declarations of the agents at any $t$ are random and independent, and each agent's probability of declaring $1$ is determined by her bias parameter, inherent belief, and the opinions declared by her neighbors in the past. These terms are fully defined in \Cref{sec::parameter-definitions}. 

\revise{This can represent situations where agents alter their statements (contrary to their actual beliefs) to better fit in to their social environment -- for instance, falsely signaling support for a political candidate they actually oppose. This is the primary motivation for the Interacting P\'{o}lya Urn model developed in \cite{socialPressure2021}. \revisetwo{In such a situation, forecasting the behavior of the agents (such as their votes) from social interactions requires separating their inherent beliefs from the pressure their social environment exerts on them.} Furthermore, even though the agents' bias parameters offer a more complete and nuanced picture of their behavior, the true beliefs are often the key factor governing their behavior -- for instance, voting or purchasing patterns.}

\revise{The Interacting P\'{o}lya Urn model can also model situations where the agents honestly update their beliefs according to what they hear from others, but retain a bias towards their original beliefs.}

\subsection{Background Literature}

\revise{Opinion dynamics has been a well-studied topic for many years with a number of mathematical models commonly used to capture specific social phenomena. Two especially important models are the DeGroot model \cite{DeGroot1974}, which seeks to model consensus formation, and the Friedkin-Johnsen model \cite{friedkin1990}, which seeks to model social networks with persistent disagreements. Both models are fundamentally based on iterative averaging: each agent's opinion is represented as a real number, and at each step the agents all simultaneously update their opinion to a (weighted) average of their neighbors' opinions (including their own if the social network has self-loops); Friedkin-Johnsen models persistent disagreement by having each agent also include their own original opinion (which is fixed) in the averaging, thus avoiding consensus.}


\revisetwo{
Many other models for opinions dynamics further build upon the Friedkin-Johnsen model, such as in \cite{semonsen2018opinion, gaitonde2020, Ye2019}.
}
Ye et al. \cite{Ye2019} study a model in which each agent has both a private and expressed opinion, which evolve differently. 
\revise{
Other models that look at opinion dynamics under social pressure include \cite{cheng2019opinion,  liu2021modeling} which consider dynamics similar to that in Hegselmann-Krause \cite{HegselmannKrause2002} and include parameters which measures an agent's resistance to change from their own belief.
\revise{There are also stochastic models which focus on binary opinions including the voter model and some variations \cite{holleyLiggett1975,  Yildiz2013stubborn}. The model in \cite{centola_2005}} features agents who pressure their neighbors into believing one of two opinions, \revisetwo{and studies when the network cascades into total agreement.} 
Other types of P\'{o}lya urn models have also been used to study contagion networks \cite{hayhoe2019,singh2022}. }


%
\revise{The Interacting P\'{o}lya Urn model in \cite{socialPressure2021} was developed to capture the case of agents who lie about their true beliefs in order to fit in with their social environment, i.e. at each step they each have an \emph{inherent (true) belief} (known only to them) and a \emph{declared opinion}, which may not agree. The mechanics of the dynamics was motivated by the well-known Bradley-Terry-Luce discrete choice model, in which an agent's declaration probability is proportional to the number of times she observes an opinion.} 
The analysis in \cite{socialPressure2021} is primarily focused on studying whether inherent beliefs are recoverable from unreliable declared opinions using an aggregate estimator, \revise{in order to estimate outcomes in cases where agents' declared opinions are influenced by social pressure while their actions depend only on their true beliefs (for instance, which candidate they vote for)}. 
This is carried out by establishing the convergence of the dynamics in the network and analyzing the equilibrium state, though the analysis is limited to the complete graph and all agents having the same amount of resistance to social pressure. 
In \cite{opiniondynamicsCDC}, the authors study the convergence properties of the Interacting P\'{o}lya Urn model introduced in \cite{socialPressure2021} on arbitrary undirected networks, finding that the proportion of declared opinions of each agent converges almost surely to an equilibrium point in any network configuration. They also determined necessary and sufficient conditions for a network to approach consensus. We note that the definition of consensus used for the Interacting P\'{o}lya Urn model is that all agents declare a single opinion (all `0' or all `1') with probability tending to $1$ \revisetwo{as the process progresses}. 


\subsection{Contributions}

\ifacc
\else
\fi

\ifacc
\else

\fi

In \cite{socialPressure2021}, the authors consider when it is possible to asymptotically determine the inherent beliefs 
of the agents based on their history of declared opinions and those of their neighbors. They study a simplified case in which the social network is an (unweighted) complete graph and all agents have the same, known, degree of bias towards their inherent beliefs, and consider a specific aggregate estimator which tries to first estimate the proportion of agents with inherent belief $1$ and then determine which agents those are.
In this setting, they show that the aggregate estimator estimates the proportion of agents with inherent belief $1$ if and only if the agents do not asymptotically approach consensus (where a large majority causes all agents declare the same opinion with probability approaching $1$). 

\revise{ In this work, we consider the problem of estimating the agents' hidden parameters (inherent belief and resistance to social pressure) in the general setting presented in \cite{opiniondynamicsCDC} where}
\begin{enumerate}
    \item the social network is an arbitrary weighted (and connected) undirected graph, possibly with self-loops;
    \item the agents can have heterogeneous bias parameters, indicating different levels of resistance to social pressure or certainty in their inherent beliefs, which are not known. \revise{This addresses a limitation listed in \cite{socialPressure2021}.}
\end{enumerate}
\revise{(The focus of \cite{opiniondynamicsCDC} was analyzing the asymptotic dynamics of the Interacting P\'{o}lya Urn model and thus \cite{opiniondynamicsCDC} does not cover estimating inherent beliefs or bias parameters.)
}
 
 Both the agents' inherent beliefs and bias parameters are unknown and must be inferred from observing the behavior of the network.
This greatly increases the applicability of the model, as real-life social networks have a variety of different structures and people have varied reactions to social pressure.

In this setting, we study the maximum likelihood estimator (MLE), which estimates bias parameters from the history of declared opinions, rather than the aggregate estimator from \cite{socialPressure2021}. We also derive a simplified estimator for inherent beliefs from the MLE, which takes a clean form with a low-dimensional sufficient statistic, consisting of two values which are simple to update at each step. We show that if the history of the agents' declared opinions is known, the MLE and the inherent belief estimator almost surely asymptotically converges to the correct inherent beliefs and bias parameters of all the agents in all such networks (even when the network approaches consensus). This resolves the fundamental question posed in \cite{socialPressure2021} of whether such estimation is always possible.

We also show bounds on the convergence rate of the inherent beliefs estimator. These bounds are slower when the system approaches consensus, reflecting the loss of information in the declared opinions. \revise{We show that the convergence rate of the estimator in the case of consensus depends on the network structure through the largest eigenvalue of the normalized adjacency matrix multiplied by the bias parameters.} 
\ifarxiv
\else
\revisetwo{Due to space constraints, some of the detailed proofs for the convergence rate results are not included and can be found in \cite{opinionDynamicsPart2}. Proof sketches are included instead.} 
\fi

\fi 

\section{Model Description}

We use the model from \cite{opiniondynamicsCDC} which is a generalization of the model from \cite{socialPressure2021}. 
\ifacc
\else
Some changes from \cite{socialPressure2021} include the addition that each edge in the network has a (nonnegative) weight denoting how much the two agents' declared opinions influence each other and the use of bias parameters instead of honesty parameters (which are different, but mathematically equivalent, representations of the same process). 
\fi

\subsection{Graph Notation}

Let graph $G = (V,E)$ (\revise{
undirected
} and including self-loops) be a network of $n$ agents (corresponding to the vertices) 
labeled $i = 1, 2, \dots, n$. 
For each edge $(i,j) \in E$, there is a weight $a_{i,j} \geq 0$, where by convention $a_{i,j} = 0$ if $(i,j) \not \in E$. 
The matrix of these weights is $\bA \in \bbR^{n \times n}$. 
The weighted degree of vertex $i$ is $\degree(i) = \sum_j a_{i,j}$. 
\ifacc
\else
The vector of degrees of all agents is
\ifarxiv 
\begin{align}
    \bd \eqdef [\degree(1), \degree(2), \dots, \degree(n)]
\end{align} 
\else 
$\bd \eqdef [\degree(1), \degree(2), \dots, \degree(n)]$
\fi 
and its diagonalization is $\bD = \diag(\bd)$, i.e. the diagonal matrix of the degrees. Let the \emph{normalized adjacency matrix} be 
\ifacc
$    \bW = \bD^{-1} \bA \,.\label{eq::graph_matrix}
$
\else
$    \bW = \bD^{-1} \bA  \label{eq::graph_matrix}
$
whose entries are $w_{i,j}$.
\fi
We assume that $G$ is connected so $\bW$ is irreducible. We denote the largest eigenvalue of a matrix by $\lambda_{\max}(\cdot)$ (the matrices we use this with have real eigenvalues). Let $\bbI\{ \cdot\}$ be the indicator function.
\fi
%
Finally, we denote an all-$0$ vector as $\bzero$ and an all-$1$ vector as $\bone$.

\subsection{Inherent Beliefs and Declared Opinions} \label{sec::parameter-definitions}

We define the Interacting P\'{o}lya Urn model of opinion dynamics under social pressure by defining the key parameters governing the behavior of the agents and their relationship to each other. The basic concept of the model is: each agent $i$ declares at each (integer) step $t$ an opinion $\psi_{i,t} \in \{0,1\}$; in expectation, agent $i$ imitates the (weighted) average opinion they have observed declared by their neighbors (including themselves via self-loops), but biased by an internal \emph{bias parameter $\gamma_i > 0$}. The value of $\gamma_i$ denotes how an observation of a neighbor declaring $1$ is weighted compared to the same neighbor declaring $0$, e.g. $\gamma_i = 2$ denotes that each observation of neighbor $j$ declaring $1$ counts twice as much as when they declare $0$, while $\gamma_i = 1/2$ denotes the converse.\footnote{The honesty parameter in \cite{socialPressure2021} is equivalent to the bias towards the agent's true belief, i.e. a honesty parameter of $\gamma$ with a inherent belief of $0$ corresponds to a bias parameter of $1/\gamma$.}

Then, the inherent belief of agent $i$ is the opinion they are biased toward:
\begin{align}
    \phi_i = \begin{cases} 1 &\text{if } \gamma_i > 1 \\ 0 &\text{if } \gamma_i < 1 \end{cases}
\end{align}
If $\gamma_i = 1$ then the agent is \emph{unbiased} and is considered to not have an inherent belief; since the goal is to estimate the inherent beliefs of the agents, for the remainder of this work we assume that the agent under consideration is not unbiased.
%
%
\revisetwo{The inherent belief and bias parameter of agent $i$ are assumed to be fixed and hidden from observers and other agents.}

To formally state the model, let $b_i^0, b_i^1 >0$ be the \emph{initialization} of agent $i$'s declared opinions, where $b_i^0 + b_i^1 = 1$. Then we define the \emph{declared proportion} of $0$'s (or $1$'s) declared by agent $i$ up to time $t \in \mathbb{Z}_{+}$ as:
\begin{align}
    \beta_{i}^0(t) &= \frac{b_{i}^0}{t} + \frac{1}{t}\sum_{\tau = \startindex}^t (1-\psi_{i,\tau}) \label{eq:def_of_beta_0}
\\
    \beta_{i}^1(t) &=\frac{b_{i}^1}{t} +\frac{1}{t}\sum_{\tau = \startindex}^t \psi_{i,\tau}\,. \label{eq:def_of_beta_1}
\end{align}
Then \revise{$\beta_{i}^0(t), \beta_{i}^1(t) \in (0,1)$ and} $\beta_{i}^0(t) + \beta_{i}^1(t) = 1$; thus to specify these values it is sufficient to specify just $\beta_i(t) \eqdef \beta_i^1(t)$ (i.e. the time-averaged declarations of agent $i$'s declared opinions with initial conditions). 

For any agent $i$ we also denote the total (weighted) proportion of opinions $0$ and $1$ she has observed by time $t$ from her neighbors (including herself via self loop) as
\begin{align}
    \mu^0_i(t) &= \frac{1}{\degree(i)}\sum_{j = 1}^n a_{i,j} \beta_{j}^0(t) = \sum_{j = 1}^n w_{i,j} \beta_{j}^0(t)\\
    \text{ and } \mu^1_i(t) &= \frac{1}{\degree(i)} \sum_{j=1}^n  
 a_{i,j}  \beta_{j}^1(t) = \sum_{j=1}^n  
 w_{i,j}  \beta_{j}^1(t)\,;
\end{align}
as before, $\mu^0_i(t) + \mu^1_i(t) = 1$ by definition so it suffices to specify $\mu_i(t) \eqdef \mu_i^1(t)$. This corresponds to the social environment that agent $i$ finds herself in at time $t$.

Then, at time $t+1$, each agent $i$ will (independently) declare an opinion $\psi_{i,t+1}$ where 
\begin{align}\label{eq::declared_opinion_prob_2}
    \psi_{i,t+1} \eqdef \left\{\begin{array}{lll}
    0 & \text{ with probability } p_i(t) = f(\mu_i(t), \gamma_i)  \\
    1 & \text{ with probability } 1- f(\mu_i(t), \gamma_i) 
    \end{array}\right.\,.
\end{align}
and
\begin{align}
    f(\mu_i(t), \gamma_i) \eqdef \frac{\gamma_i \mu_i(t)}{\gamma_i \mu_i(t) + (1-\mu_i(t))} \,.
\end{align}
The values of $\beta_i(t+1)$ and $\mu_i(t+1)$ for all $i$ are updated according to the declared opinions at time $t$ and the values of $\beta_i(t)$ and $\mu_i(t)$.
Since $\mu_i(t) = \mu_i^1(t)$ and $1-\mu_i(t) = \mu_i^0(t)$, this corresponds to weighting each observation of opinion $1$ as $\gamma_i$ times an equivalent observation of opinion $0$.

We denote as $\cH_t$ the \emph{history} of the network up to time $t$ (consisting of all declared opinions, including initializations, and thus can be used to compute $\beta_i(\tau), \mu_i(\tau)$ for $\tau \leq t$). 
The sequence $\cH_0 \subseteq \cH_1 \subseteq \dots$ \revisetwo{is also the notation we use for the filtration on which we can base the stochastic process. The random variables which generate the $\sigma$-algebras in this filtration are the declared opinions of all agents.}
We also denote the vectors of $\beta_i(t), \mu_i(t)$ over agents $i$ as 
\begin{align}
    \bmu(t) &\eqdef \left[\mu_1(t),...\mu_n(t)\right]^{\ltop}
\text{ and }
    \bbeta(t) \eqdef \left[\beta_1(t),...\beta_n(t)\right]^{\ltop}\,.
\end{align}

In \cite{opiniondynamicsCDC}, it was shown that these dynamics must approach some equilibrium point satisfying
\begin{align}\label{eq::equilibrium_beta}
    \beta_i = f(\mu_i, \gamma_i) ~\text{ for all $i$}
\end{align}
as $t \to \infty$ (with probability $1$). In this work, we consider the following estimation problem (which was considered in \cite{socialPressure2021} for a more restricted model on complete graphs): given the history $\cH_t$ up to time $t$, can we estimate $\gamma_i, \phi_i$ for all agents $i$ in the limit as $t \to \infty$?\footnote{While we assume for simplicity that $b_i^0, b_i^1$ are known to the estimator, \revisetwo{this is not necessary for estimation in the long term as these} terms become negligible in the limit as $t \to \infty$.}


\subsection{Consensus}

An important term for this work is \emph{consensus}, which needs to be defined appropriately for our stochastic system.

\begin{definition}\label{def::consensus}
\emph{Consensus} is approached if 
\begin{align}
    \bbeta(t) \to \bone \text{ or } \bbeta(t) \to \bzero\quad \text{as}\quad t\to\infty\,.
\end{align}
\end{definition}

Since $\beta_i(t)$ represents the fraction \revise{(time-average)} of agent $i$'s declared opinions which are $1$, consensus is approached when this ratio goes to $0$ or $1$. Let the diagonal matrix with $\bgamma$ along the diagonal be
$\bGamma = \diag(\bgamma)\label{eq::diag_gamma_matrix}$
and
let $\dermat{\bone} = \bGamma^{-1} \bW$ and $\dermat{\bzero} = \bGamma \bW$. 

In \cite[Therorem 3]{opiniondynamicsCDC}, it is shown that 
consensus $\bbeta(t) \to \bone$ occurs when $\lambda_{\max}(\dermat{\bone}) \leq 1$ and $\bbeta(t) \to \bzero$ occurs when $\lambda_{\max}(\dermat{\bzero}) \leq 1$.
 Approaching consensus is important for the parameter estimation problem we consider in this work because it represents a major obstacle to solving the estimation problem, as it is an uninformative equilibrium. 

\iftrue 
\subsection{Intuition for the Interacting P\'{o}lya Urn Model}
\revise{
%
\revisetwo{In this section, we give another (intuitive) view of the Interacting P\'{o}lya Urn model.}
(For clearer intuition, we consider unweighted graphs, i.e. where $a_{i,j} = 0$ or $1$.) Typically, urn models start with some composition of balls of different colors in an urn. At each step, a ball is drawn at random from the urn and additional balls are added based on the drawn ball according to some urn functions \revisetwo{(thus affecting future draws)}. In the Interacting P\'{o}lya Urn model, 
\revisetwo{at each step, agent $i$ puts balls corresponding to her \emph{neighbors'} declared opinions into her urn}
(the proportion of balls labeled $1$ in agent $i$'s urn is given by $\mu_i(t)$). 
Then, when agent $i$ declares an opinion, it is modeled by the following: she draws a ball from her urn and declares the corresponding opinion; each ball corresponding with opinion $1$ is $\gamma_i$ times as likely to be drawn as one with opinion $0$ (\revisetwo{as given by function $f$}). 
Note that if $\gamma_i = 1$ then agent $i$ is simply (stochastically) mimicking the opinions her neighbors have declared in the past (plus her initial state, which becomes asymptotically negligible). We remark that the bias parameter is similar to the initial opinions in the Friedkin-Johnsen model \cite{friedkin1990} since they both are fixed parameters that influence all steps; 
however, note that there is a significant difference as the bias parameter can be overwhelmed over time by social pressure, thus leading to consensus.
}

\fi 

\subsection{Organization of results}

\revise{
In \Cref{sec::inferring_inherent} we will introduce and define the inherent opinion and bias parameter estimators we will analyze; in \Cref{sec::consistency-of-estimators} we show that the estimators based on maximum likelihood are \emph{consistent}, i.e. almost surely they converge to the correct result; 
\revisetwo{
and in \Cref{sec::rate_converge_estimator} we will study the convergence rates of these estimators. In \Cref{sec::rate_converge_consensus}, in order to compute the convergence rate of the estimator in the case where the network approaches consensus, we determine a convergence rate result for the declare opinions.}
}


\revise{
\begin{remark}
Our estimator consistency results from \Cref{sec::consistency-of-estimators} also apply to the Interacting P\'{o}lya Urn model on \emph{directed} graphs, since the results are based only on the local environment of an individual agent. However, the convergence rate results in \Cref{sec::rate_converge_estimator,sec::rate_converge_consensus} depend on the convergence rate of the network to consensus, which has only been shown in the case of social networks on \emph{undirected} graphs \revisetwo{(see \cite{opiniondynamicsCDC})}; hence, those results do not generalize to directed graphs.
\end{remark}
}


\section{Estimators for Inferring Inherent Beliefs and Bias Parameters} 

\label{sec::inferring_inherent}

One of the key questions in \cite{socialPressure2021} is whether it is possible to infer the inherent beliefs of agents from the history of declared opinions. The authors of \cite{socialPressure2021} studied the Interacting P\'{o}lya Urn model on the complete graph using an aggregate estimator which keeps track of the fraction of declared opinions of all agents throughout time, and showed that this estimator may not converge to the inherent beliefs of all agents if they approach consensus. Consensus presents difficulties for estimators since asymptotically all agents approach the same behavior regardless of their inherent beliefs.

However, we show that estimators based on maximum likelihood estimation (MLE) almost surely infer the inherent belief of any agent $i$ in the limit, even when consensus is approached. 
This fact is connected to \cite[Lemma 2]{opinionDynamicsPart1} -- each agent declares both opinions infinitely often, yielding sufficient information to determine inherent beliefs over time.

Additionally, unlike \cite{socialPressure2021}, our formulation also allows agents to have different bias parameters. Thus, it is natural to ask how to estimate the bias parameter 
of any agent. Intuitively, after enough time has passed, the values of $\mu_i(t)$ and $\beta_i(t)$ will converge to values close to the equilibrium point. In such a case, we can use \eqref{eq::equilibrium_beta} to estimate the bias parameter $\gamma_i$ and inherent belief $\phi_i$ with
\begin{align}\label{eq::estimator_eq_point}
\widehat{\gamma}^{eq}_i(t) &= \frac{\beta_i(t)}{1 - \beta_i(t)} \frac{1 - \mu_i(t)}{\mu_i(t)}
\\
\widehat{\phi}^{eq}_i(t) &= \bbI\{\beta_i(t) < \mu_i(t) \}\label{eq::estimator_eq_point_inherent}
\end{align}
These estimators are asymptotically consistent, i.e.
\begin{align}
\lim_{t \to \infty} \widehat{\gamma}^{eq}_i(t) &= \gamma_i
\text{ and }
\lim_{t \to \infty} \widehat{\phi}^{eq}_i(t) = \phi_i
\end{align}
when the dynamics converge to an interior equilibrium point \revisetwo{(which is when consensus does not occur)}. 

However, plugging the equilibrium values into \eqref{eq::estimator_eq_point} is not well-defined if $\beta_i(t) $ and $\mu_i(t) $ both converge to either $0$ or $1$ for all $i$, i.e. 
when consensus is approached.
This shows that more careful analysis needs to be done in order to estimate the bias parameters and inherent beliefs in all circumstances. \revise{Simulations indicate that on the networks tested, the estimator in \eqref{eq::estimator_eq_point_inherent} tends to converge to the correct inherent belief, however, the number of time steps needed for the estimator to be correct with high probability can be very large in some situations.
}


\subsection{Definition of Estimators}

We assume at time $t$ the estimator has at its disposal the history of agent $i$ and agent $i$'s neighbors' declarations up to and including time $t$ (recall we denote this as $\cH_{t}$). Given $\cH_{t-1}$, we can compute exactly the values of 
\begin{align}
p_i(t) = \bbP\left[\psi_{i, t} = 1|\cH_{t-1} \right] = f(\mu_i(t-1), \gamma_i)\,.
\end{align}
Note that in general $\bbP\left[\psi_{i, t} = 1 \right]$ is a random variable dependent on $\cH_{t-1}$, while $\bbP\left[\psi_{i, t} = 1|\cH_{t-1} \right]$ is constant.

Our estimator to predict $\gamma_i$ is based on the maximum log-likelihood estimator:

\begin{definition}
The \emph{single-step negative log-likelihood} for a given agent $i$ at time $t > 1$ and parameter $\gamma$ is
 \begin{align}
\ell_i(\gamma, t) &\eqdef -\bigg(\bbI\{\psi_{i,t} = 1\} \log (f(\mu_i(t-1), \gamma)) 
\eqlinebreakshort
+ \bbI\{\psi_{i,t} = 0\} \log (1-f(\mu_i(t-1), \gamma))\bigg)
\end{align}
The negative log-likelihood for a given agent $i$ at time $t$ and parameter $\gamma \in (0, \infty)$ is
\begin{align}
    L_i(\gamma, t) \eqdef \sum_{\tau = \startindex}^t \ell_i(\gamma, \tau)\,.
\end{align}
\end{definition}
\ifarxiv
Here, $\gamma_i$ is the actual bias parameter of agent $i$, whereas $\gamma$ represents a proposed value whose loss we are measuring.
\fi
The MLE for bias parameter $\gamma_i$ gives the value of $\gamma$ that maximizes the likelihood of agent $i$'s declarations, which also minimizes the negative log-likelihood.

\begin{definition}[Estimator for Bias Parameter] \label{def::estimator_bias}
The \emph{maximum likelihood estimator (MLE)} for the bias parameter $\gamma$ at time $t$ is given by
\begin{align}
    \widehat{\gamma}_i(t) \eqdef \arg \min_{\gamma} L_i(\gamma, t)\,.\label{eq::hat_gamma_mle}
\end{align}
\end{definition}
 
Since the inherent belief of an agent is defined as whether the bias parameter is greater than or less than $1$,
given the MLE estimator, we can always predict the inherent belief of agent $i$ by taking 
$
\sign(\log(\widehat{\gamma}_i(t)))
$.

However, if we assume that $\gamma_i \neq 1$, and are only interested estimating the inherent beliefs, this reduces to a simpler form. 
Let $\bar \beta_i(t) = \frac{1}{t-1}\sum_{\tau = \startindex}^t \bbI[\psi_{i,\tau} = 1]$, which a similar quantity to $\beta_i(t)$ except that the arbitrary initial conditions are not included. 
\ifarxiv
(If $t$ is large, then difference between $\beta_i(t)$ and $\bar \beta_i(t)$ is negligible.)
\fi
\begin{definition}[Inherent Belief Estimator]\label{def::estimator_inherent}
    Let 
    \begin{align}
    \widehat{\phi}_{i}(t) &= \frac{1}{2}\sign\left((\countrandom) \bar \beta_i(t) - \left(\sum_{\tau = \startindexminus}^{t-1} \mu_{i}(\tau)\right) \right) + \frac{1}{2}\,.
    \label{eq::steve}
    \end{align}
\end{definition}
Multiplying by $1/2$ and adding $1/2$ maps the output of $\sign(\cdot)$ to $0$ and $1$.  Fundamentally, this estimator requires only comparing 
\iffalse 
$
\bar \beta_i(t) > \frac{1}{\countrandom}\sum_{\tau = \startindexminus}^{t-1} \mu_i(\tau)\,.
$
\else 
\begin{align}
\bar \beta_i(t) \gtrless \frac{1}{\countrandom}\sum_{\tau = \startindexminus}^{t-1} \mu_i(\tau)\,.
\end{align}
\fi 
Note that  $\widehat{\phi}_{i}(t)$ does not depend on knowing the bias parameter, as it only assumes that $\gamma \neq 1$, and the estimator is simple to compute as it only requires the aggregate count of an agent's declarations and her neighborhood's declarations. 

Intuitively, this compares agent $i$'s actual declarations against its expected declarations if $\gamma_i = 1$ (i.e. if the agent were unbiased); however, the consistency of this estimator is derived from that of the MLE for the bias parameter given in \Cref{def::estimator_bias}. 
We show this derivation in \Cref{sec::derive_inherent_estimator}.


%



\label{sec::empirical_estimators}


\revise{
In numerical simulations where the network does not approach consensus, the estimators $\widehat \phi_i(t)$ and $\widehat \phi_i^{eq}(t)$ perform similarly. Depending on the agent, either estimator may converge to the correct prediction faster. In networks where consensus is approached, the simulations suggest that both estimators will also eventually converge to the correct inherent belief with probability one, however, in certain networks, estimator $\widehat \phi_i^{eq}(t)$ may take significantly longer time to converge to all accurate predictions, as shown in \Cref{fig::estimator_convergence}.
}

\begin{figure}
\centering
\includegraphics[scale = .5]{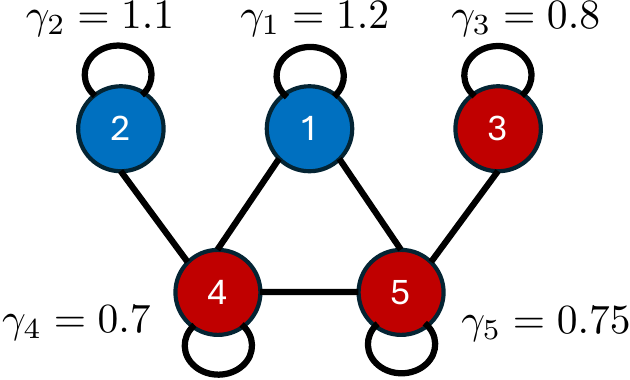}
\caption{\label{fig::5-node-graph} \revise{Example network used in numerical simulations. Blue nodes have inherent belief $\phi_i = 1$ (i.e. $\gamma_i > 1$) and red have inherent belief $\phi_i = 0$ (i.e. $\gamma_i < 1$). \revisetwo{All edges have weight $1$.} As per \cite[Theorem 3]{opiniondynamicsCDC}, this network approaches a consensus of $\bzero$.}}
\end{figure}

\begin{figure}
\centering
\includegraphics[scale = .6]{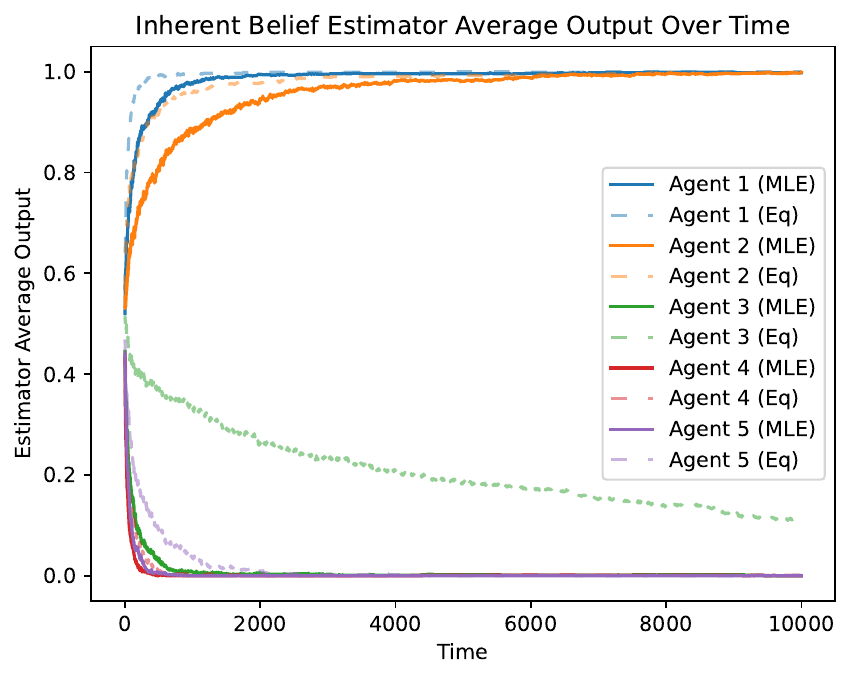}
\caption{\label{fig::estimator_convergence}\revise{Simulation for the network from \Cref{fig::5-node-graph}, comparing estimators for inherent beliefs. `MLE' is the estimator \eqref{eq::steve} (solid) and `Eq' is the estimator \eqref{eq::estimator_eq_point_inherent} (dashed). $1000$ instances of the network were run; each line corresponds to the average prediction of the estimator at the given time over these instances.
Note that for certain agents, the average of `Eq' converges faster than `MLE'; however, when given an agent in general agreement with its neighbors, such as Agent 3, `Eq' can converge extremely slowly (dashed green line).
}
}
\end{figure}

\section{Consistency of Estimators} \label{sec::consistency-of-estimators}

\revise{In this section we show that the MLE based estimators for the bias parameter and inherent belief almost surely converge to the correct value. The key to showing this result is to first show that when two bias parameters, $\gammaone$ and $\gammatwo$, are compared using log-likelihood ratios, in the long run, the true bias parameter (WLOG let this be $\gammaone$) will have the higher likelihood. 
First, in \Cref{sec::consistency_preliminaries} we will examine preliminaries regarding log-likelihoods. Then, in \Cref{sec::consistency_martingale} we show that the difference of the log-likelihood ratios for $\gammaone$ and $\gammatwo$ summed over time is a submartingale whose predictable quadratic variation is bounded by (a multiple of) its predictable expected value. This property allows us, in \Cref{sec::freedman}, to apply Freedman's inequality to get a concentration bound on the process which can be used to show that after some finite time $\gammaone$ will always have a higher likelihood than $\gammatwo$. Then in \Cref{sec::consistency_final}, we show that how comparing just two bias parameters can be extended to show consistency of the bias estimator, and by extension the inherent belief estimator.}

\subsection{Preliminaries}

\label{sec::consistency_preliminaries}

\subsubsection{Bounds on $\mu_i(t)$}

\label{sec::bounds_on_mu}

\newcommand{\mindeg}{\ensuremath{\kappa}}

\begin{lemma} \label{lem::bounds_on_mu}
    Letting $\mindeg \eqdef \min_i ( \min(b_i^0, b_i^1)) > 0$, for any agent $i$ and time $t$, 
    \begin{align}
        \mu_i(t) \in \Big[\frac{\mindeg}{t}, 1-\frac{\mindeg}{t}\Big] \,.
    \end{align}
\end{lemma}

\begin{proof}
    This follows since by definition $b_i^0, b_i^1 \geq \mindeg$ for any $i$; thus by equations \eqref{eq:def_of_beta_0}, \eqref{eq:def_of_beta_1} we know that $\beta_i^0(t), \beta_i^1(t) \geq \mindeg/t$ so $\beta_i(t) = \beta_i^1(t) = 1 - \beta_i^0(t)$ satisfies $\beta_i(t) \in [\frac{\mindeg}{t}, 1 - \frac{\mindeg}{t}]$. But each $\mu_i(t)$ is a weighted average of $\beta_j(t)$, and hence $\mu_i(t) \in [\frac{\mindeg}{t}, 1 - \frac{\mindeg}{t}]$ for all $i,t$.
\end{proof}

Note that this means that any agent $i$ will (almost surely) declare both $0$ and $1$ infinitely many times, even if the network approaches consensus, because either opinion has probability $\geq \Theta(1/t)$ at step $t$ (and $\sum_t 1/t = \infty$).

\subsubsection{Negative Log-Likelihood Properties}
\label{sec::symmetric_parameterization}
We will analyze in depth the MLE which is key to our analysis. We start by introducing an alternative representation for $\ell_i(\gamma, t)$. 
%
%
 Let $\tilde \psi_{i, t} = 2 \psi_{i,t} - 1$,
which takes values  $-1$ and $+1$, instead of $0$ and $1$, which gives a more symmetric representation of the process.
Since $f(\mu_i(t), \gamma)$ is still the probability of $\tilde \psi_{i, t} = 1$,
\ifarxiv 
\begin{align}
    \ell_i(\gamma, t) &= -\log\left( \frac{1}{1 + e^{ -\tilde \psi_{i,t} \log\left(\gamma \frac{\mu_{i}(t-1)}{1-\mu_{i}(t-1)} \right) }}\right)
    \\ & = \log\left( {1 + e^{ -\tilde \psi_{i,t} \log\left(\gamma \frac{\mu_{i}(t-1)}{1-\mu_{i}(t-1)} \right) }}\right)\label{eq::ell_exp_form}
    \,.
\end{align}
\else 
\begin{align}
    \ell_i(\gamma, t) &= \log\left( {1 + e^{ -\tilde \psi_{i,t} \log\left(\gamma \frac{\mu_{i}(t-1)}{1-\mu_{i}(t-1)} \right) }}\right)\label{eq::ell_exp_form}
    \,.
\end{align}
\fi 

We reparameterize $\gamma$ and $\mu_i(t)$ as follows:
\begin{align}
    \logbias \eqdef \log \gamma ~\text{ and }~
    \nu_{i}(t) \eqdef \log \frac{\mu_{i}(t)}{1 - \mu_{i}(t)}
    \label{eq::lambda_nu_def}\,.
\end{align}

Using $\logbias$ symmetrizes the bias parameter across $\bbR$ (so $\logbias = 0$ represents an unbiased agent).
%
%
We thus define some quantities which take $\logbias = \log \gamma$ as the argument instead of $\gamma$ and use them where convenient:
\begin{align}
\tilde \ell_i(\logbias, t) \eqdef \ell_i(\gamma, t) \text{ and } \tilde L_i(\logbias, t) \eqdef L_i(\gamma, t)\,.
\end{align}


For this section to \Cref{sec::freedman} 
\ifacc
\else
\fi
we will fix an agent $i$ and then use $\gammaone$ and $\gammatwo$ to represent any two possible choices for $\gamma_i$.  We then show that if we know that one of these is the true value of $\gamma_i$, in the limit it is almost surely possible to determine which one (\Cref{thm::binary-success}); this result will then be used to show that $\lim_{t \to \infty} \widehat{\gamma}_i = \gamma_i$ almost surely (\Cref{thm::mle-consistency}).
Define
\begin{align}
Z(t) = Z(\gammaone, \gammatwo, t) \eqdef L_i(\gammatwo, t) - L_i(\gammaone, t)\,.
\label{eq::loss_diff_gamma1_gamma2}
\end{align}
If $Z(t)$ is positive, intuitively, $\gammaone$ fits the observed behavior better than $\gammatwo$, so we expect $\gammaone$ to be the true parameter.
Indeed, if $\gammaone$ is the true parameter, then 
\begin{align}
    \bbE&[Z(t) | \cH_{t-1}]  
    = \sum_{\tau = \startindex}^t \bbE\bigg[\bbI\{\psi_{i,\tau} = 1\} \log \frac{f(\mu_i(\tau-1),\gammaone)}{f(\mu_i(\tau-1),\gammatwo)} 
    \eqlinebreakshort
    + \bbI\{\psi_{i,\tau} = 0\} \log \frac{1-f(\mu_i(\tau-1),\gammaone)}{1-f(\mu_i(\tau-1),\gammatwo)} \Big | \cH_{\tau-1}\bigg]
    \\& = \sum_{\tau = \startindex}^t D_{\kl}(f(\mu_i(\tau-1),\gammaone) \| f(\mu_i(\tau-1),\gammatwo))\label{eq::likelihood_as_kl}
\end{align}
which is always a nonnegative quantity.


\begin{proposition}\label{prop::likelihood_properties}
    $L_i(\gamma, t)$ is a stochastic process which satisfies the following properties:
    \begin{enumerate}[(a)]
        \item For fixed $\gamma$, $L_i(\gamma, t)$ (and $\tilde L_i(\logbias, t)$) is an increasing function in $t$
        \item \label{item::convexity_ell} For fixed $t$, $\tilde L_i(\logbias, t)$ is a strictly convex function in $\logbias$
        \item \label{item::bounds_on_ell_inc} $\ell_i(\gamma, t) \in [0, \infty)$, and for a fixed $t$,
        \begin{itemize}
            \item If $\tilde \psi_{i,t} = -1$, then $\ell_i(\gamma, t)$ is a decreasing function in $\gamma$ (and $\tilde \ell_i(\logbias, t)$ is decreasing in $\logbias$)
            \item If $\tilde \psi_{i,t} = 1$, then $\ell_i(\gamma, t)$ is an increasing function in $\gamma$ (and $\tilde \ell_i(\logbias, t)$ is increasing in $\logbias$)
        \end{itemize}
        \item If there exists $t_1, t_2 \leq t$ where $\tilde \psi_{i, t_1} = 1$ and $\tilde \psi_{i, t_2} = -1$, then $\tilde L_i(\logbias,t)$ has unique finite minimum as a function in $\logbias$. Also $L_i(\gamma,t)$ has the same minimum at $\gamma = e^{\logbias}$.
        \item \label{item::likelihood_properties_exp_ell} For any $\gamma \neq \gamma_i$,
        \begin{align}
            \bbE[\ell_i(\gamma, t)|\cH_{t-1}] > \bbE[\ell_i(\gamma_i, t)|\cH_{t-1}] 
        \end{align}
    \end{enumerate}
\end{proposition}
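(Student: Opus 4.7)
The plan is to work throughout from the symmetric softplus representation in \eqref{eq::ell_exp_form}, which writes each summand as $\tilde{\ell}_i(\logbias, t) = \log(1 + e^{-\tilde\psi_{i,t}(\logbias + \nu_i(t-1))})$, an affine reparametrization of the softplus $s(x) = \log(1+e^{-x})$. Once this form is in place, parts (a)--(d) reduce to one-line calculus facts about $s$, while (e) is a conditional application of Gibbs' inequality. The only step requiring a modicum of care is (d).

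For (a), each $\ell_i(\gamma, t)$ is the negative log of a probability in $(0,1)$ and thus lies in $[0,\infty)$; so $L_i(\gamma, t+1) - L_i(\gamma, t) = \ell_i(\gamma, t+1) \geq 0$, giving monotonicity in $t$ (and the same for $\tilde L_i$). For (b), differentiating $\tilde{\ell}_i(\cdot, t)$ twice in $\logbias$ gives $\sigma(y)(1-\sigma(y)) > 0$, where $\sigma$ is the logistic sigmoid evaluated at the appropriate affine argument; hence each summand of $\tilde L_i$ is strictly convex and the finite sum inherits strict convexity. For (c), the first derivative of the same summand is $-\tilde\psi_{i,t}\sigma(-\tilde\psi_{i,t}(\logbias+\nu_i(t-1)))$, whose sign is controlled entirely by $\tilde\psi_{i,t}$; translating via $\gamma = e^{\logbias}$ then yields the claimed monotonicity in $\gamma$.

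For (d), the main task is coercivity on $\mathbb{R}$. If there exists $t_1 \leq t$ with $\tilde\psi_{i,t_1} = 1$, then the summand $\log(1 + e^{-(\logbias + \nu_i(t_1-1))}) \to \infty$ as $\logbias \to -\infty$; symmetrically, the existence of $t_2 \leq t$ with $\tilde\psi_{i,t_2} = -1$ forces the corresponding summand to $\infty$ as $\logbias \to +\infty$. All remaining summands are bounded below by $0$, so $\tilde L_i(\cdot, t) \to \infty$ at both ends of $\mathbb{R}$. Combined with continuity and the strict convexity established in (b), this produces a unique finite minimizer $\logbias^{\star}$; transporting along $\gamma = e^{\logbias}$ gives the matching unique minimizer $\gamma = e^{\logbias^{\star}}$ of $L_i(\cdot, t)$.

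For (e), condition on $\cH_{t-1}$ so that $\mu_i(t-1)$ is determined, and set $p^{\star} = f(\mu_i(t-1), \gamma^{\star})$ and $p = f(\mu_i(t-1), \gamma)$. Expanding the conditional expectation exactly as in the derivation of \eqref{eq::likelihood_as_kl} yields
\begin{align}
\bbE[\ell_i(\gamma, t) - \ell_i(\gamma^{\star}, t) \mid \cH_{t-1}] = D_{\kl}(p^{\star} \,\|\, p) \geq 0,
\end{align}
with equality iff $p = p^{\star}$ by Gibbs' inequality. The bound \eqref{eq::bounds_mu_Mt} in \Cref{sec::bounds_on_mu} ensures $\mu_i(t-1) \in (0,1)$ strictly, and direct differentiation gives $\partial_\gamma f(\mu,\gamma) = \mu(1-\mu)/(1+(\gamma-1)\mu)^2 > 0$, so $f(\mu_i(t-1), \cdot)$ is strictly increasing. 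Hence $p = p^{\star}$ forces $\gamma = \gamma^{\star}$, and the inequality is strict whenever $\gamma \neq \gamma^{\star}$. The only mild subtlety throughout the proof is ensuring the two-sided coercivity in (d) and the strict positivity of $\mu_i(t-1)$ in (e); everything else is a direct consequence of the softplus structure.
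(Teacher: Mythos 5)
Your proof is correct and follows essentially the same route as the paper's: parts (a)--(d) come from the softplus form $\tilde\ell_i(\logbias,t) = \log(1+e^{-\tilde\psi_{i,t}(\logbias+\nu_i(t-1))})$, with (b) via the second derivative, (c) via the sign of the first derivative, (d) via coercivity plus strict convexity, and (e) via the KL expansion from \eqref{eq::likelihood_as_kl}. The only place you add detail the paper glosses over is in (e), where you explicitly close the equality case by observing that $\partial_\gamma f(\mu,\gamma) = \mu(1-\mu)/(1+(\gamma-1)\mu)^2 > 0$ and $\mu_i(t-1) \in (0,1)$ strictly, so $f(\mu_i(t-1),\gamma) = f(\mu_i(t-1),\gamma^*)$ forces $\gamma = \gamma^*$; the paper just asserts the KL is zero ``iff $\gamma_1 = \gamma_2$'' without spelling out why equal outputs of $f$ imply equal inputs.
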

\vspace{-1.5pc}

\ifacc
We show the proof of property (\ref{item::convexity_ell}) below; proofs of the other properties are omitted.
\begin{proof}
    \begin{align}
        \frac{d^2}{d\logbias^2} \tilde\ell_i(\logbias,t) &= \frac{d^2}{d\logbias^2} \log\big(1 + e^{-\tilde \psi_{i,t}(\logbias + \nu_i(t-1))}\big)
        \\ &= \frac{d}{d\logbias} \frac{-\tilde \psi_{i,t} e^{-\tilde \psi_{i,t}(\logbias + \nu_i(t-1))}}{1 + e^{-\tilde \psi_{i,t}(\logbias + \nu_i(t-1))}}\label{eq::first_derivative_tilde_ell}
        \\ &= -\tilde \psi_{i,t} \frac{d}{d\logbias} \frac{1}{1 + e^{\tilde \psi_{i,t}(\logbias + \nu_i(t-1))}}
        \\ &= \tilde \psi_{i,t}^2 \frac{e^{\tilde \psi_{i,t}(\logbias + \nu_i(t-1))}}{(1 + e^{\tilde \psi_{i,t}(\logbias + \nu_i(t-1))})^2}
        \\ &= \frac{e^{\tilde \psi_{i,t}(\logbias + \nu_i(t-1))}}{(1 + e^{\tilde \psi_{i,t}(\logbias + \nu_i(t-1))})^2}
        \\ &> 0\,.
    \end{align}
    Thus $\tilde \ell_i(\logbias,t)$ is convex for all $t$, and so $\tilde L_i(\logbias,t) = \sum_{\tau=\startindex}^t \tilde \ell_i(\logbias,\tau)$ is also convex.
\end{proof}
\else

\begin{proof}
\indent
\begin{enumerate}[(a)]
    \item For each $t$, $\ell_i(\gamma, t)$ is nonnegative, so $L_i(\gamma, t)$ must be increasing. (Similar for $\tilde L_i(\logbias, t)$).
    \item 
  \ifarxiv 
    \begin{align}
        \frac{d^2}{d\logbias^2} \tilde\ell_i(\logbias,t) &= \frac{d^2}{d\logbias^2} \log\big(1 + e^{-\tilde \psi_{i,t}(\logbias + \nu_i(t-1))}\big)
        \\ &= \frac{d}{d\logbias} \frac{-\tilde \psi_{i,t} e^{-\tilde \psi_{i,t}(\logbias + \nu_i(t-1))}}{1 + e^{-\tilde \psi_{i,t}(\logbias + \nu_i(t-1))}}\label{eq::first_derivative_tilde_ell}
        \\ &= -\tilde \psi_{i,t} \frac{d}{d\logbias} \frac{1}{1 + e^{\tilde \psi_{i,t}(\logbias + \nu_i(t-1))}}
        \\ &= \tilde \psi_{i,t}^2 \frac{e^{\tilde \psi_{i,t}(\logbias + \nu_i(t-1))}}{(1 + e^{\tilde \psi_{i,t}(\logbias + \nu_i(t-1))})^2}
        \\ &= \frac{e^{\tilde \psi_{i,t}(\logbias + \nu_i(t-1))}}{(1 + e^{\tilde \psi_{i,t}(\logbias + \nu_i(t-1))})^2}
        \\ &> 0\,.
    \end{align}

    Note that $\tilde \psi_{i,t}^2 = 1$. 
    \else
    \begin{align}
        \frac{d^2}{d\logbias^2} \tilde\ell_i(\logbias,t) &= \frac{d^2}{d\logbias^2} \log\big(1 + e^{-\tilde \psi_{i,t}(\logbias + \nu_i(t-1))}\big)
        \\ &= \frac{e^{\tilde \psi_{i,t}(\logbias + \nu_i(t-1))}}{(1 + e^{\tilde \psi_{i,t}(\logbias + \nu_i(t-1))})^2}
        > 0\,.
    \end{align}
    \fi 
    Thus $\tilde \ell_i(\logbias,t)$ is convex for all $t$, and so $\tilde L_i(\logbias,t) = \sum_{\tau=\startindex}^t \tilde \ell_i(\logbias,\tau)$ is also convex.

    \item Using \eqref{eq::ell_exp_form}, changing the sign of $\tilde \psi_{i, t}$ changes the sign on the exponent. If $\tilde \psi_{i, t}$ is positive, then the quantity in the exponent is decreasing as $\gamma$ increases. The range is $[0, \infty)$ since the quantity in the $\log$ is greater than or equal to $1$.
    \item Follows from (\ref{item::convexity_ell}) and (\ref{item::bounds_on_ell_inc}). The function $\tilde L_i(\logbias, t)$ must be convex and go to infinity at both ends. Since $L_i(\gamma,t) = \tilde L_i(\logbias, t)$, it has the same minimum.
    \item Result follows from \eqref{eq::likelihood_as_kl} setting $\gammaone = \gamma_i$ and $\gammatwo = \gamma$. The KL divergence must always be nonnegative and equal to zero iff $\gammaone = \gammatwo$.
\end{enumerate}
\end{proof}
\fi

\subsection{Log-Likelihood Ratios and Martingales}

\label{sec::consistency_martingale}



To properly analyze the quantity \eqref{eq::loss_diff_gamma1_gamma2}, we need the following definitions.
Unless otherwise stated, $\gammaone$ is the true parameter from which the random data is generated. 
\ifacc
\else
The following definitions will be used starting from this section to \Cref{sec::freedman}.
\fi
%
The \emph{loss difference} is
\begin{align}
    Z(t) &\eqdef Z(\gammaone, \gammatwo, t) 
    \\ z(t) &\eqdef z(\gammaone, \gammatwo, t) \eqdef \ell_i(\gammatwo, t) - \ell_i(\gammaone, t)\,.
\end{align}

The \emph{predictable expected value} is
\begin{align}
    X(t) &\eqdef X(\gammaone,\gammatwo, t) \eqdef \sum_{\tau = \startindex}^t \bbE[z(\tau) | \cH_{\tau-1}] 
    \\ x(t) &\eqdef x(\gammaone,\gammatwo, t) \eqdef \bbE[z(t) | \cH_{t-1}]\,.
\end{align}

The \emph{loss martingale} is
\begin{align}
    Y(t) &\eqdef Y(\gammaone,\gammatwo, t) \eqdef X(t) - Z(t)
    \\ y(t) &\eqdef y(\gammaone,\gammatwo, t) \eqdef x(t) - z(t)\,.
\end{align}

The \emph{predictable quadratic variation} is
\begin{align}
    W(t) \eqdef W(\gammaone,\gammatwo, t) &\eqdef  \sum_{\tau = \startindex}^t \var[z(\tau) | \cH_{\tau-1}]
    \\& =  \sum_{\tau = \startindex}^t \var[y(\tau) | \cH_{\tau-1}]
    \\ w(t) \eqdef w(\gammaone,\gammatwo, t) &\eqdef \var[z(t) | \cH_{t-1}] 
    \\& = \var[y(t) | \cH_{t-1}] \,.
\end{align}

%
We give some preliminary results about these processes.

\begin{proposition} 
We have the following properties:
    \begin{enumerate}[(a)]
        \item $Z(t)$ is a submartingale and $X(t)$ is strictly increasing
        \item $Y(t)$ is a martingale
        \item $W(t)$ is strictly increasing 
    \end{enumerate}
\end{proposition}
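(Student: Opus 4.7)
The plan is to verify each claim by writing out the one-step increment and invoking the KL-divergence identity \eqref{eq::likelihood_as_kl} together with the explicit bounds on $\mu_i(t)$ from \Cref{sec::bounds_on_mu}. Throughout, I would assume $\gamma_1\neq\gamma_2$ (otherwise $Z$, $X$, $W$ all vanish identically and there is nothing to prove).

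First I would handle (a) and (b) since both are essentially immediate from the definitions. For (a), the increment of $Z$ is $z(\tau)$, whose conditional mean under the true parameter $\gamma_1$ is exactly the KL divergence in \eqref{eq::likelihood_as_kl}, so $\bbE[Z(t)-Z(t-1)\mid\cH_{t-1}]\ge 0$, giving the submartingale property. The same identity gives $X(t)-X(t-1)=x(t)\ge 0$; strictness follows because the bound $\mu_i(\tau-1)\in[1/M_i(\tau-1),\,1-1/M_i(\tau-1)]$ from \eqref{eq::bounds_mu_Mt} keeps $\mu_i(\tau-1)$ strictly inside $(0,1)$, and since $\gamma\mapsto f(\mu,\gamma)$ is strictly monotone in $\gamma$ for such $\mu$ (visible from the second form of \eqref{eq::f_def}, and also from the strict convexity and monotonicity of $\tilde\ell_i$ established in the previous proposition), $f(\mu_i(\tau-1),\gamma_1)\neq f(\mu_i(\tau-1),\gamma_2)$, so the KL divergence is strictly positive. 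For (b), $Y(\tau)-Y(\tau-1)=x(\tau)-z(\tau)$; because $x(\tau)$ is $\cH_{\tau-1}$-measurable by construction, $\bbE[x(\tau)-z(\tau)\mid\cH_{\tau-1}]=x(\tau)-\bbE[z(\tau)\mid\cH_{\tau-1}]=0$, which is the martingale property.

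For (c), I need to show $w(\tau)=\var[z(\tau)\mid\cH_{\tau-1}]>0$ for every $\tau\ge \startindex$. Conditional on $\cH_{\tau-1}$, $z(\tau)$ is a deterministic function of the single Bernoulli variable $\psi_{i,\tau}$, taking the value $\log[f(\mu_i(\tau-1),\gamma_1)/f(\mu_i(\tau-1),\gamma_2)]$ when $\psi_{i,\tau}=1$ and $\log[(1-f(\mu_i(\tau-1),\gamma_1))/(1-f(\mu_i(\tau-1),\gamma_2))]$ when $\psi_{i,\tau}=0$. The strict monotonicity argument from the previous paragraph shows these two values are distinct (equality would force $f(\mu_i(\tau-1),\gamma_1)=f(\mu_i(\tau-1),\gamma_2)$), while the same bound on $\mu_i(\tau-1)$ implies both outcomes $\psi_{i,\tau}\in\{0,1\}$ occur with strictly positive conditional probability. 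Hence $w(\tau)>0$ and $W$ is strictly increasing.

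The computations themselves are short; the only delicate point, which I would flag as the main (mild) obstacle, is making the strictness rigorous — all boundary degeneracies at $\mu_i\in\{0,1\}$ are ruled out by \eqref{eq::bounds_mu_Mt}, and $\gamma_1\neq\gamma_2$ is assumed from the start, so no auxiliary estimates are needed.
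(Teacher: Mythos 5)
Your proof is correct and follows essentially the same route as the paper: (a) reduces to the nonnegativity of the KL divergence in \eqref{eq::likelihood_as_kl}, (b) is the tower-property computation, and (c) uses the fact that $\mu_i(\tau-1)$ is strictly interior so $z(\tau)$ takes two genuinely distinct values each with positive conditional probability. If anything, your treatment of (c) is more explicit than the paper's, which compresses the same reasoning into "$z(t)$ is non-constant so long as $\gamma_1\neq\gamma_2$."
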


\ifacc
\else
\begin{proof}
\indent
\begin{enumerate}[(a)]
    \item The two statements are equivalent. From \Cref{prop::likelihood_properties} (\ref{item::likelihood_properties_exp_ell}), we have
    \begin{align}
        x(t) 
        &= \bbE[\ell_i(\gammatwo, t)| \cH_{t-1}] - \bbE[\ell_i(\gammaone, t)| \cH_{t-1}] > 0 \,.
    \end{align}
    \item This follows from the definitions of $Y(t)$ and $y(t)$.
    \begin{align}
        \bbE[y(t) | \cH_{t-1}] &= \bbE[x(t) - z(t) | \cH_{t-1}] 
        \\&= \bbE[\bbE[z(t) | \cH_{t-1}] - z(t) | \cH_{t-1}] = 0
    \end{align} 
    and
    \begin{align}
        \bbE[Y(t) | \cH_{t-1}] = Y(t-1) + \bbE[y(t) | \cH_{t-1}] = Y(t-1)\,.
    \end{align}
    \item Because of the bounds on $\mu_i(t)$ given in \Cref{lem::bounds_on_mu}, each $\ell_i(\gamma, t)$ must be non-constant so long as $\gamma \neq 1$. Then $z(t)$ is non-constant so long as $\gammaone \neq \gammatwo$. The quantity $w(t)$ is the conditional variance of $z(t)$ which therefore must always be positive. The quantity $W(t)$ is a sum of $w(t)$ so it must be increasing. 
\end{enumerate}
\end{proof}
\fi
%
Next we determine bounds on our quantities. 
%
\ifacc
\begin{lemma}
\label{lem::xt_1overct}
If $\gammaone \neq \gammatwo$, then there is some $t_0 = t_0(\gammaone, \gammatwo)$ and $c_0 = c_0(\gammaone, \gammatwo) > 0$ such for all $t > t_0$
\begin{align}
x(t) \geq c_0 (\mindeg/t)\,.
\end{align}

Additionally, there are some constants $k$, $t_1$ (which depend on $t_0, \gammaone, \gammatwo$) such that for all $t > t_1$,
\begin{align}
    X(t) > k c_0 \mindeg \log (t) \,.
\end{align}

\label{lem::var_as_exp}
Similarly, there exists a constant $c_1 = c_1(\gammaone, \gammatwo) > 0$ and $t_2$ such that for all $t > t_2$
\begin{align}
    w(t) \leq c_1 x(t)\,.
\end{align}
This also implies
\begin{align}
    W(t) \leq c_1 X(t)\label{eq::Wt_less_cXt}
    \,.
\end{align}

\end{lemma}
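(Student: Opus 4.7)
The plan is to exploit the Bernoulli structure of $z(t)$ conditional on $\cH_{t-1}$ and reduce everything to explicit estimates on the single-coordinate KL divergence. Let me write $p_j \eqdef f(\mu_i(t-1), \gamma_j)$ for $j = 1,2$; then conditionally on $\cH_{t-1}$, $z(t)$ takes the value $a \eqdef \log(p_1/p_2)$ with probability $p_1$ and the value $b \eqdef \log((1-p_1)/(1-p_2))$ with probability $1-p_1$. Equation \eqref{eq::likelihood_as_kl} already gives $x(t) = D_{\kl}(p_1 \| p_2)$, and a direct two-point variance calculation yields $w(t) = p_1(1-p_1)(a-b)^2$. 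The algebraic identity I would use throughout is the odds-ratio formula $f(\mu,\gamma)/(1-f(\mu,\gamma)) = \gamma \mu / (1-\mu)$, which immediately shows that $a - b = \log(\gamma_1/\gamma_2) = \logbias_1 - \logbias_2$ is \emph{independent of $\mu$}.

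For the lower bound on $x(t)$, I would Taylor-expand $D_{\kl}(p_1 \| p_2)$ in the two boundary regimes. As $\mu \to 0$, $p_j \sim \gamma_j \mu$ and a short expansion gives
\begin{align*}
    D_{\kl}(p_1 \| p_2) = \mu \, [\gamma_1 \log(\gamma_1/\gamma_2) + \gamma_2 - \gamma_1] + O(\mu^2),
\end{align*}
with the bracketed coefficient strictly positive whenever $\gamma_1 \neq \gamma_2$ (the classical inequality $x\log(x/y) + y - x \geq 0$ with equality iff $x = y$). A symmetric expansion as $\mu \to 1$, combined with \eqref{eq::reciprocal_for_f}, gives a positive linear-in-$(1-\mu)$ leading term. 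Together with continuity and strict positivity of $\mu \mapsto D_{\kl}(p_1\|p_2)$ on $(0,1)$, this shows the function $\mu \mapsto D_{\kl}(p_1\|p_2)/\min(\mu, 1-\mu)$ extends to a continuous positive function on $[0,1]$, yielding a constant $c_0 > 0$ with $D_{\kl}(p_1\|p_2) \geq c_0 \min(\mu, 1-\mu)$ uniformly. Plugging in $\min(\mu_i(t-1), 1-\mu_i(t-1)) \geq \mindeg/t$ from \Cref{sec::bounds_on_mu} gives $x(t) \geq c_0 \mindeg/t$, and summing together with $\sum_{\tau=2}^t 1/\tau \geq \log t - 1$ yields $X(t) > k c_0 \mindeg \log t$ for any $k < 1$ once $t$ is sufficiently large.

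For the variance bound, I would form the ratio $w(t)/x(t) = p_1(1-p_1)(\logbias_1 - \logbias_2)^2 / D_{\kl}(p_1\|p_2)$ and reduce to showing that $p_1(1-p_1)/D_{\kl}(p_1\|p_2)$ is uniformly bounded on $(0,1)$. The same boundary expansions handle this: as $\mu \to 0$, $p_1(1-p_1) \sim \gamma_1 \mu$ while $D_{\kl} \sim \mu \cdot [\gamma_1 \log(\gamma_1/\gamma_2) + \gamma_2 - \gamma_1]$, so the ratio tends to a finite limit; the analogous statement holds as $\mu \to 1$; on any compact subinterval of $(0,1)$ both numerator and denominator are continuous, bounded, and bounded away from zero. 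This gives $w(t) \leq c_1 x(t)$ for all $t > t_2$. To conclude $W(t) \leq c_1 X(t)$, I would sum and observe that the finite prefix $\sum_{\tau=2}^{t_2} w(\tau)$ is bounded by a constant which, since $X(t) \to \infty$ by the first part of the lemma, can be absorbed into a slightly enlarged $c_1$ for $t$ large.

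The main obstacle is ensuring that $D_{\kl}(p_1 \| p_2)$ degrades only \emph{linearly} in $\min(\mu, 1-\mu)$ as $\mu$ approaches the boundary, rather than quadratically — a generic Pinsker-type bound $D_{\kl} \geq 2(p_1-p_2)^2$ would only give $O(\mu^2)$ and would not suffice for a $1/t$ lower bound on $x(t)$. The linear behavior relies on $p_1$ and $p_2$ vanishing at \emph{different} rates, which is exactly the statement that $a - b = \log(\gamma_1/\gamma_2)$ is a nonzero constant; this single odds-ratio identity simultaneously drives both the lower bound on $x(t)$ and the proportional upper bound on $w(t)$.
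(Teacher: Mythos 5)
Your proof is correct, but takes a genuinely different route from the paper for the central lower bound on $x(t)$. The paper bounds $x(t) = D_{\kl}(p_1 \| p_2)$ from below via the squared Hellinger distance, $D_{\kl} \geq 2H^2$, and then explicitly computes the Bernoulli Hellinger distance to obtain the closed-form constant
\begin{align*}
x(t) \;\geq\; \frac{(\sqrt{\gamma_1}-\sqrt{\gamma_2})^2\,\mu(1-\mu)}{\max\left\{\frac{\gamma_1+\gamma_2}{2},1\right\}^2}\,,
\end{align*}
which is then combined with $\mu(1-\mu)\geq \tfrac12\min(\mu,1-\mu)\geq \tfrac12\mindeg/t$. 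The same Hellinger bound is recycled for the variance step: the paper writes $w(t) = \frac{\gamma_1\mu(1-\mu)}{(1+(\gamma_1-1)\mu)^2}(\log\frac{\gamma_1}{\gamma_2})^2 \leq C\,\mu(1-\mu)$ and then divides by the Hellinger lower bound to read off $c_1$. You instead Taylor-expand $D_{\kl}(p_1\|p_2)$ at both boundaries $\mu\to 0$ and $\mu \to 1$, verify the leading coefficient $\gamma_1\log(\gamma_1/\gamma_2)+\gamma_2-\gamma_1$ is strictly positive, and extract $c_0$ (and $c_1$) by continuity and compactness of $D_{\kl}/\min(\mu,1-\mu)$ (resp.\ $p_1(1-p_1)/D_{\kl}$) on $[0,1]$. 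Both arguments are sound, and both hinge on the same underlying fact you isolate explicitly: the odds ratio $a-b = \log(\gamma_1/\gamma_2)$ is a nonzero constant independent of $\mu$, which forces $D_{\kl}$ to vanish only linearly (not quadratically) at the boundary — a point you correctly flag as the reason a generic Pinsker bound would fail. The paper's route has the advantage of producing explicit closed-form constants $c_0, c_1$; yours is more elementary (no Hellinger machinery) and makes the boundary mechanism transparent, at the cost of implicit constants coming from the compactness argument. One minor remark: since your ratio functions extend continuously to all of $[0,1]$, the resulting bounds $w(t)\leq c_1 x(t)$ actually hold for \emph{all} $t$ (not just $t>t_2$), so the final prefix-absorption step for $W(t)\leq c_1 X(t)$ is unnecessary — though it is harmless and does no damage to the argument.
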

 Combining \Cref{lem::xt_1overct} and \eqref{eq::likelihood_as_kl} gives that for $\gammaone \neq \gammatwo$,
\begin{align}
    \lim_{t\to \infty} \bbE[Z(\gammaone, \gammatwo, t)] 
    &=  \lim_{t\to \infty} X(\gammaone, \gammatwo, t)   
    = \infty
    \label{eq::xt_infty}
    \,.
\end{align}

\begin{remark}
    {\color{blue}The fact that $\bbE[Z(t)] \to \infty$ relies on $\mu_i(t) \in \left[\mindeg/ t , 1 - \mindeg/ t\right]$, 
    as discussed in \Cref{sec::bounds_on_mu} (\Cref{lem::bounds_on_mu}).}
    
    If instead, $\mu_{i}(t)$ scales as $1/t^2$, 
    then the limit of $\bbE[Z(t)]$ would be finite. In such a scenario, randomness might make $Z(t)$ unreliable for distinguishing between $\gammaone$ and $\gammatwo$.

    Changes to the model which may cause the condition $\mu_i(t) \in \left[\mindeg/ t , 1 - \mindeg/ t\right]$ to fail include putting higher weight on previously declared opinions or having the network add more agents at each time step $t$. 
\end{remark}


\else
First we bound the predictable expected value $X(t)$.
Since $\gammaone$ is the true bias, like in \eqref{eq::likelihood_as_kl}, we can write
\begin{align}
    x(t) = D_{\kl}\left(f(\mu_i(t), \gammaone) \| f(\mu_i(t), \gammatwo)\right)
    \label{eq::xt_kl}\,.
\end{align}

\ifacc
\else
\begin{lemma}\label{lem::xt_bound_hellinger}
For each time $t$, we can bound
\begin{align}
x(t) \geq \frac{(\sqrt{\gammaone} - \sqrt{\gammatwo})^2 \mu_i(t)(1-\mu_i(t))}{\max\{ \frac{\gammaone + \gammatwo}{2}, 1 \}^2}\,.
\end{align}
\end{lemma}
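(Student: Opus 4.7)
The plan is to invoke the standard inequality $D_{\kl}(P \| Q) \geq 2\,H^2(P,Q)$ relating KL divergence to squared Hellinger distance, specialize it to the Bernoulli distributions $P = \mathrm{Bern}(p)$, $Q = \mathrm{Bern}(q)$ with $p = f(\mu_i(t),\gamma_1)$ and $q = f(\mu_i(t),\gamma_2)$, and then algebraically isolate the factor $(\sqrt{\gamma_1}-\sqrt{\gamma_2})^2\mu_i(t)(1-\mu_i(t))$. For brevity I write $\mu = \mu_i(t)$ and abbreviate $\alpha(\gamma) \eqdef 1 + (\gamma-1)\mu$, so that $f(\mu,\gamma) = \gamma\mu/\alpha(\gamma)$ and the companion identity $1 - f(\mu,\gamma) = (1-\mu)/\alpha(\gamma)$ holds.

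Expanding $H^2 = 1 - \sqrt{pq} - \sqrt{(1-p)(1-q)}$ and pulling the common denominator $\sqrt{\alpha(\gamma_1)\alpha(\gamma_2)}$ out of both radicals gives
\[H^2 = 1 - \frac{(1-\mu) + \mu\sqrt{\gamma_1\gamma_2}}{\sqrt{\alpha(\gamma_1)\alpha(\gamma_2)}}.\]
The key algebraic step is the identity
\[\alpha(\gamma_1)\alpha(\gamma_2) = \bigl[(1-\mu)+\mu\sqrt{\gamma_1\gamma_2}\bigr]^2 + \mu(1-\mu)(\sqrt{\gamma_1}-\sqrt{\gamma_2})^2,\]
which I would verify by direct expansion (the cross terms $2\mu(1-\mu)\sqrt{\gamma_1\gamma_2}$ cancel). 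Setting $u = (1-\mu) + \mu\sqrt{\gamma_1\gamma_2}$ and $v^2 = \mu(1-\mu)(\sqrt{\gamma_1}-\sqrt{\gamma_2})^2$, rationalizing $\sqrt{u^2+v^2} - u = v^2/(\sqrt{u^2+v^2}+u)$ yields
\[H^2 = \frac{\mu(1-\mu)(\sqrt{\gamma_1}-\sqrt{\gamma_2})^2}{\sqrt{\alpha(\gamma_1)\alpha(\gamma_2)}\,\bigl(\sqrt{\alpha(\gamma_1)\alpha(\gamma_2)} + u\bigr)}.\]

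To finish, I would upper-bound the denominator. By AM--GM, $\sqrt{\alpha(\gamma_1)\alpha(\gamma_2)} \leq (\alpha(\gamma_1)+\alpha(\gamma_2))/2 = (1-\mu) + \tfrac{\gamma_1+\gamma_2}{2}\mu$, and the same bound holds for $u$ because $\sqrt{\gamma_1\gamma_2} \leq (\gamma_1+\gamma_2)/2$. Since $(1-\mu) + \tfrac{\gamma_1+\gamma_2}{2}\mu$ is a convex combination of $1$ and $(\gamma_1+\gamma_2)/2$, it is at most $\max\{1,(\gamma_1+\gamma_2)/2\}$. Consequently the denominator is at most $2\max\{1,(\gamma_1+\gamma_2)/2\}^2$, and combining with $x(t) = D_{\kl}(P \| Q) \geq 2H^2$ recovers exactly the stated bound.

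I do not foresee major obstacles; the algebraic identity for $\alpha(\gamma_1)\alpha(\gamma_2)$ needs careful bookkeeping but is routine, and the main subtlety is making the factor-of-two accounting land exactly at $\max\{1,(\gamma_1+\gamma_2)/2\}^2$ rather than a constant multiple thereof. An alternative approach, differentiating $D_{\kl}(f(\mu,\gamma_1) \| f(\mu,\gamma))$ twice in the log-bias coordinate $\logbias = \log\gamma$ and applying Taylor's theorem with remainder, gives the correct order but requires estimating a worst-case intermediate point; the Hellinger route avoids this and produces the stated constant directly.
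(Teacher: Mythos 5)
Your proof is correct and follows essentially the same route as the paper's: lower-bound the KL by twice the squared Hellinger distance, factor the Hellinger numerator as $(\sqrt{\gamma_1}-\sqrt{\gamma_2})^2\mu(1-\mu)$ via the identity $\alpha(\gamma_1)\alpha(\gamma_2) - \bigl[(1-\mu)+\mu\sqrt{\gamma_1\gamma_2}\bigr]^2 = \mu(1-\mu)(\sqrt{\gamma_1}-\sqrt{\gamma_2})^2$, and bound the denominator by $\max\{1,(\gamma_1+\gamma_2)/2\}^2$ via AM--GM. The only cosmetic difference is that you bound the two denominator factors $\sqrt{\alpha(\gamma_1)\alpha(\gamma_2)}$ and $(1-\mu)+\mu\sqrt{\gamma_1\gamma_2}$ separately, whereas the paper first relaxes $B(B+A) \leq 2B^2$ and then bounds $B^2$; both land on the identical constant.
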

\begin{proof}
\revise{
To start, since $x(t)$ can be expressed as a KL divergence,} 
\ifarxiv
we will lower bound this KL divergence by using squared Hellinger distance, specifically,
\begin{align}\label{eq::kl_vs_H}
D_{\kl}(P \| Q) \geq 2 H^2(P, Q)
\end{align}
\else
we can use
$\label{eq::kl_vs_H}
D_{\kl}(P \| Q) \geq 2 H^2(P, Q)
$
\fi
which we can derive from \cite[7.3]{polyanskiy2014lecture}.
\ifarxiv
For discrete distributions $p$ and $q$ over set $[1,\dots, k]$, 
\begin{align}
H^2(p, q) = 1-  \sum_{i = 1}^k \sqrt{ p(i) q(i)}\,.
\end{align}
\fi
This gives that
\ifarxiv 
\begin{align}
H^2&(f(\mu, \gammaone), f(\mu, \gammatwo)) 
\\&=
1 - \sqrt{\frac{\gammaone \mu \gammatwo \mu}{(\gammaone \mu + (1-\mu))(\gammatwo \mu + (1-\mu))}} 
\eqlinebreakshort
- \sqrt{\frac{(1-\mu) (1-\mu)}{(\gammaone \mu + (1-\mu))(\gammatwo \mu + (1-\mu))}}
\\ & = 1 - \frac{\sqrt{\gammaone \gammatwo} \mu + (1-\mu)}{\sqrt{(\gammaone \mu + (1-\mu))(\gammatwo \mu + (1-\mu))}}\,.
\end{align}
\else 
\begin{align}
H^2&(f(\mu, \gammaone), f(\mu, \gammatwo)) 
\\ & = 1 - \frac{\sqrt{\gammaone \gammatwo} \mu + (1-\mu)}{\sqrt{(\gammaone \mu + (1-\mu))(\gammatwo \mu + (1-\mu))}}\,.
\end{align}
\fi 

Let
\begin{align}
A &= \sqrt{\gammaone \gammatwo} \mu + (1-\mu)
\\ B &= \sqrt{(\gammaone \mu + (1-\mu))(\gammatwo \mu + (1-\mu))}\,.
\end{align}

\ifarxiv
Note that $B>A$ since squared Hellinger distance is always between $0$ and $1$.
\fi
\ifarxiv 
Then
\begin{align}
H^2&(f(\mu, \gammaone), f(\mu, \gammatwo)) = \frac{B - A}{B} 
\\&= \frac{B^2 - A^2}{B(B+A)} \geq \frac{B^2 - A^2}{2 B^2}
\,.
\end{align}
We can compute
\begin{align}
& B^2 - A^2 
\\&=(\gammaone \mu + (1-\mu))(\gammatwo \mu + (1-\mu)) - (\sqrt{\gammaone \gammatwo} \mu + (1-\mu))^2
\\& = (\gammaone + \gammatwo - 2\sqrt{\gammaone \gammatwo}) \mu (1-\mu)
\\& = (\sqrt{\gammaone } - \sqrt{\gammatwo})^2 \mu(1-\mu)
\end{align}
and using AM-GM
\begin{align}
B^2 &=  (\gammaone \mu + (1-\mu))(\gammatwo \mu + (1-\mu))
\\&\leq \left(\frac{\gammaone + \gammatwo}{2} \mu + (1-\mu)\right)^2
\leq \left(\max\left\{\frac{\gammaone + \gammatwo}{2}, 1\right\}\right)^2\,.
\end{align}
\else 
We will use 
\begin{align}
 \frac{B - A}{B} &= \frac{B^2 - A^2}{B(B+A)} \geq \frac{B^2 - A^2}{2 B^2}\,,
\end{align}
the fact that
$B^2 - A^2  = (\sqrt{\gammaone } - \sqrt{\gammatwo})^2 \mu(1-\mu)$ and that
\begin{align}
B^2 &\leq \left(\frac{\gammaone + \gammatwo}{2} \mu + (1-\mu)\right)^2
\leq \left(\max\left\{\frac{\gammaone + \gammatwo}{2}, 1\right\}\right)^2
\end{align}
\fi 
This results in 
\begin{align}
H^2&(f(\mu, \gammaone), f(\mu, \gammaone)) \geq \frac{(\sqrt{\gammaone } - \sqrt{\gammatwo})^2 \mu(1-\mu)}{2 \max\left\{\frac{\gammaone + \gammatwo}{2}, 1\right\}^2}\,.
\end{align}
Combining this with 
\ifarxiv
\eqref{eq::kl_vs_H} and 
\fi
\eqref{eq::xt_kl} completes the proof. 
\end{proof}

\fi

\begin{lemma}\label{lem::xt_1overct}
If $\gammaone \neq \gammatwo$, then there is some $t_0 = t_0(\gammaone, \gammatwo)$ and $c_0 = c_0(\gammaone, \gammatwo) > 0$ such that for all $t > t_0$
\begin{align}
x(t) \geq c_0 (\mindeg/t)\,.
\end{align}

Additionally, there are some constants $k$, $t_1$ (which depend on $t_0, \gammaone, \gammatwo$) such that for all $t > t_1$,
\begin{align}
    X(t) > k c_0 \mindeg \log (t) \,.
\end{align}
\end{lemma}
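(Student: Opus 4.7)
The plan is to chain together two ingredients that are already in hand: the Hellinger-based lower bound from \Cref{lem::xt_bound_hellinger}, and the boundary bounds $\mu_i(t) \in [\mindeg/t,\, 1-\mindeg/t]$ from \Cref{sec::bounds_on_mu}. The overall strategy is (i) lower bound $x(t)$ by a constant times $\mu_i(t)(1-\mu_i(t))$, (ii) lower bound the product $\mu_i(t)(1-\mu_i(t))$ by something of order $\mindeg/t$, and (iii) sum to get the logarithmic bound on $X(t)$.

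For step (i), \Cref{lem::xt_bound_hellinger} gives directly
\begin{align}
x(t) \;\geq\; \frac{(\sqrt{\gamma_1}-\sqrt{\gamma_2})^2}{\max\{(\gamma_1+\gamma_2)/2,\,1\}^2}\, \mu_i(t)(1-\mu_i(t)),
\end{align}
and the leading coefficient is strictly positive precisely because $\gamma_1 \neq \gamma_2$. For step (ii), the map $\mu \mapsto \mu(1-\mu)$ on the interval $[\mindeg/t,\,1-\mindeg/t]$ attains its minimum at the endpoints, so $\mu_i(t)(1-\mu_i(t)) \geq (\mindeg/t)(1 - \mindeg/t)$. Choosing $t_0 \geq 2\mindeg$ ensures $1 - \mindeg/t \geq 1/2$, so for $t > t_0$ we get $\mu_i(t)(1-\mu_i(t)) \geq \mindeg/(2t)$. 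Combining, we set
\begin{align}
c_0 \;=\; \frac{(\sqrt{\gamma_1}-\sqrt{\gamma_2})^2}{2\,\max\{(\gamma_1+\gamma_2)/2,\,1\}^2}
\end{align}
and obtain $x(t) \geq c_0\, \mindeg/t$ for all $t > t_0$, establishing the first assertion.

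For the logarithmic lower bound on $X(t)$, the idea is standard integral comparison. Since $X(t) = \sum_{\tau=\startindex}^t x(\tau)$ and each $x(\tau) \geq 0$, we can throw away the first $t_0$ terms and use the per-step bound on the remainder:
\begin{align}
X(t) \;\geq\; \sum_{\tau = t_0 + 1}^{t} c_0\,\frac{\mindeg}{\tau} \;\geq\; c_0\, \mindeg \bigl(\log t - \log(t_0+1) - 1\bigr).
\end{align}
For any fixed $k \in (0,1)$, the right-hand side exceeds $k\,c_0\,\mindeg \log t$ once $t$ is large enough (this determines $t_1$), giving the second inequality.

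I do not expect any real obstacle: the one mildly delicate point is handling the case where $\mu_i(t)$ is near the boundary, which is exactly what forces the $\mindeg/t$ rate rather than a constant, and this is already captured by the minimum-at-endpoints observation above. The whole argument is just chaining the Hellinger bound, the boundary bound, and the harmonic-sum estimate.
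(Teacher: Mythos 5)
Your proof is correct and follows essentially the same route as the paper's: invoke \Cref{lem::xt_bound_hellinger} for the per-step bound $x(t) \gtrsim \mu_i(t)(1-\mu_i(t))$, lower bound the product by $\mindeg/(2t)$ via the boundary bounds on $\mu_i(t)$ from \Cref{sec::bounds_on_mu}, and then sum the harmonic series to get the $\log t$ bound, arriving at the same $c_0$. The only cosmetic difference is that the paper bounds $\mu(1-\mu) \geq \tfrac{1}{2}\min\{\mu,1-\mu\}$ directly (valid for all $\mu \in (0,1)$) rather than using concavity plus the $t_0 \geq 2\mindeg$ endpoint argument; both give the same conclusion.
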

The value of $\mindeg$ is defined in \Cref{{lem::bounds_on_mu}}.

\begin{proof}
From \Cref{lem::xt_bound_hellinger}, it suffices to let 
\begin{align}
c_0 = \frac{1}{2}\frac{(\sqrt{\gammaone} - \sqrt{\gammatwo})^2}{\max\{ \frac{\gammaone + \gammatwo}{2}, 1 \}^2}\,.
\end{align}

Then using \Cref{lem::bounds_on_mu},
\begin{align}
\mu_i(t)(1-\mu_i(t)) \geq \frac{1}{2} \min\{\mu_i(t), 1-\mu_i(t)\} \geq \frac{1}{2}\frac{\mindeg}{t}\,.
\end{align}
This
gives that
$
x(t) \geq c_0 (\mindeg/ t) 
$
which implies
\begin{align}
    X(t) &= \sum_{s = \startindex}^t x(s) 
    \geq \sum_{s = t_0}^t c_0 \frac{\mindeg}{s} 
    \geq k_0 c_0 \mindeg \log (t)\,.
\end{align}
Constant $k_0$ accounts for some loss which occurs since $X(t)$ is a sum of terms $x(t)$, and for small $t$, the results may not be exact.
\end{proof}

The 
stochastic process $Z(\gammaone, \gammatwo, t)$ is a likelihood ratio test for determining whether $\gammaone$ or $\gammatwo$ is the true parameter. Combining \Cref{lem::xt_1overct} and \eqref{eq::likelihood_as_kl} gives that for $\gammaone \neq \gammatwo$,
\begin{align}
    \lim_{t\to \infty} \bbE[Z(\gammaone, \gammatwo, t)] 
    &=  \lim_{t\to \infty} X(\gammaone, \gammatwo, t)   
    = \infty
    \label{eq::xt_infty}
    \,.
\end{align}

\ifarxiv 
The likelihood ratio $Z(\gammaone, \gammatwo, t)$ on average is very large as $t$ gets large. This means that $Z(\gammaone, \gammatwo, t)$ can be used to distinguish which of the two parameters, $\gammaone$ or $\gammatwo$, is the true parameter governing the data. 
\fi
If $Z(\gammaone, \gammatwo, t)$ is very large (positive), then $\gammaone$ is the true parameter.  If $Z(\gammaone, \gammatwo, t)$ is very small (negative), then $\gammatwo$ is the true parameter.

\begin{remark}
    The fact that $\bbE[Z(t)] \to \infty$ relies on $\mu_i(t) \in \left[\mindeg/ t , 1 - \mindeg/ t\right]$, 
    as discussed in \Cref{sec::bounds_on_mu} (\Cref{lem::bounds_on_mu}).
    
    If instead, $\mu_{i}(t)$ scales as $1/t^2$, 
    then the limit of $\bbE[Z(t)]$ would be finite. In such a scenario, randomness might make $Z(t)$ unreliable for distinguishing between $\gammaone$ and $\gammatwo$.

    Changes to the model which may cause the condition $\mu_i(t) \in \left[\mindeg/ t , 1 - \mindeg/ t\right]$ to fail include putting higher weight on previously declared opinions or having the network add more agents at each time step $t$.  
\end{remark}


\begin{lemma} \label{lem::bdd-step-size}
For any $t$, we have
\begin{align}
    |Z(t) - Z(t-1)| &\leq \left|\log \frac{\gammaone}{\gammatwo} \right| 
    \\ |Y(t) - Y(t-1)|&\leq \left|\log \frac{\gammaone}{\gammatwo} \right| 
\end{align}
\end{lemma}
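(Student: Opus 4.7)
}

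The plan is to work with the symmetric parameterization $\logbias = \log\gamma$ introduced just above the statement, and exploit the fact that $\tilde\ell_i(\logbias,t)$ is $1$-Lipschitz in $\logbias$. First I would write $Z(t) - Z(t-1) = z(t) = \tilde\ell_i(\logbias_2,t) - \tilde\ell_i(\logbias_1,t)$ and use \eqref{eq::first_derivative_tilde_ell} to observe that
\begin{align}
\left|\frac{d}{d\logbias}\tilde\ell_i(\logbias,t)\right| = \left|\frac{-\tilde\psi_{i,t}\, e^{-\tilde\psi_{i,t}(\logbias + \nu_i(t-1))}}{1 + e^{-\tilde\psi_{i,t}(\logbias + \nu_i(t-1))}}\right| \leq 1,
\end{align}
since the right-hand side is at most $1$ (it has the form of a sigmoid), and $|\tilde\psi_{i,t}| = 1$. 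The mean value theorem then immediately gives $|z(t)| \leq |\logbias_1 - \logbias_2|$, which is the first inequality.

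For the martingale increment $y(t) = x(t) - z(t)$, the naive triangle-inequality bound would cost a factor of $2$, so some care is needed. Let $z_{+}$ and $z_{-}$ denote the two possible values of $z(t)$ when $\tilde\psi_{i,t} = +1$ and $\tilde\psi_{i,t} = -1$ respectively, and let $p_{\pm} = \bbP[\tilde\psi_{i,t} = \pm 1 \mid \cH_{t-1}]$. Since $x(t) = p_{+}z_{+} + p_{-}z_{-}$, the two possible values of $y(t)$ are $p_{-}(z_{-} - z_{+})$ and $p_{+}(z_{+} - z_{-})$, so $|y(t)| \leq |z_{+} - z_{-}|$.

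The crux of the argument is then a short direct computation showing $z_{+} - z_{-} = \logbias_1 - \logbias_2$. Setting $g(\logbias) \eqdef \log(1 + e^{-(\logbias + \nu_i(t-1))})$ and $h(\logbias) \eqdef \log(1 + e^{\logbias + \nu_i(t-1)})$, one has
\begin{align}
g(\logbias) - h(\logbias) = \log\frac{1 + e^{-(\logbias+\nu_i(t-1))}}{1 + e^{\logbias+\nu_i(t-1)}} = -(\logbias + \nu_i(t-1)),
\end{align}
and hence
\begin{align}
z_{+} - z_{-} &= \bigl[g(\logbias_2) - g(\logbias_1)\bigr] - \bigl[h(\logbias_2) - h(\logbias_1)\bigr] \\
&= \bigl[g(\logbias_2) - h(\logbias_2)\bigr] - \bigl[g(\logbias_1) - h(\logbias_1)\bigr] = \logbias_1 - \logbias_2.
\end{align}
Combining this with the bound $|y(t)| \leq |z_{+} - z_{-}|$ yields the second inequality.

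The main (mild) obstacle is the tightness of the second bound: only by noticing the cancellation in $g - h$, which collapses to a purely linear function of $\logbias$, does one avoid the spurious factor of $2$ that a generic triangle-inequality argument would introduce. Everything else is a direct Lipschitz estimate.
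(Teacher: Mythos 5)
Your proposal is correct. Both bounds agree with the paper's conclusion, and the core computation is the same: both you and the paper observe that the two conditional outcomes of $z(t)$ differ by exactly $|\logbias_1-\logbias_2|$, and that $y(t)$ is therefore squeezed into a window of that width (the paper phrases this as ``$Y(t-1)$ lies in the same $|\log(\gamma_1/\gamma_2)|$-sized interval as $Y(t)$'' because $Y(t)$ is binary with mean $Y(t-1)$; you phrase it as $|y(t)|\le|z_+-z_-|$ since $y(t)\in\{p_-(z_--z_+),\,p_+(z_+-z_-)\}$ — these are the same argument). Where you genuinely diverge is in the $Z$ bound: the paper derives it by showing that $z_+$ and $z_-$ have opposite signs (so $0$, and hence $Z(t-1)$, sits between the two endpoints of the increment interval), whereas you bypass the sign analysis entirely by noting $|\partial_\logbias\tilde\ell_i|<1$ and applying the mean value theorem. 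Your Lipschitz route is arguably cleaner and avoids the case split on $\gamma_1\lessgtr\gamma_2$; the paper's explicit increment computation has the side benefit of making the sign structure (which is reused in the intuition for the likelihood-ratio test) visible. Your cancellation $g(\logbias)-h(\logbias)=-(\logbias+\nu_i(t-1))$ is a nice way to see that $z_+-z_-$ is exactly linear in $\logbias$ rather than recomputing the logs; the paper does the same cancellation more directly in terms of $f$.
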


\ifacc
The proof is omitted.
\else
\begin{proof}
Let $\mu = \mu_i(t)$.
Since $\gammatwo \neq \gammaone$, we know that either:
\begin{enumerate}[(i)]
\item $\log\left(\frac{f(\mu,\gammaone)}{f(\mu,\gammatwo)}\right) < 0 < \log\left(\frac{1-f(\mu,\gammaone)}{1-f(\mu,\gammatwo)}\right)$ (if $\gammaone < \gammatwo$), 
\item
$\log\left(\frac{1-f(\mu,\gammaone)}{1-f(\mu,\gammatwo)}\right) < 0 < \log\left(\frac{f(\mu,\gammaone)}{f(\mu,\gammatwo)}\right)$ (if $\gammaone > \gammatwo$).
\end{enumerate}

Since
\begin{align}
    Z(t) = \begin{cases} 
    Z(t-1) + \log\big(\frac{f(\mu,\gammaone)}{f(\mu,\gammatwo)}\big) &\text{if } \psi_{i,t} = 1 
    \\ Z(t-1) + \log\big(\frac{1-f(\mu,\gammaone)}{1-f(\mu,\gammatwo)}\big) &\text{if } \psi_{i,t} = 0  \end{cases}
\end{align}
and by 
\ifarxiv 
\begin{align}
&\left| \log\left(\frac{f(\mu,\gammaone)}{f(\mu,\gammatwo)}\right) - \log\left(\frac{1-f(\mu,\gammaone)}{1-f(\mu,\gammatwo)}\right)\right| 
\\&=\left| \log \frac{\gammaone}{\gammatwo}\frac{\gammatwo \mu + (1-\mu)}{\gammaone \mu + (1-\mu)}  -\log \frac{\gammatwo \mu -  (1-\mu)}{\gammaone \mu + (1-\mu)} \right|
\\&= \left| \log \frac{\gammaone}{\gammatwo} \right|
\end{align}
\else 
\begin{align}
&\left| \log\left(\frac{f(\mu,\gammaone)}{f(\mu,\gammatwo)}\right) - \log\left(\frac{1-f(\mu,\gammaone)}{1-f(\mu,\gammatwo)}\right)\right| 
&= \left| \log \frac{\gammaone}{\gammatwo} \right|
\end{align}
\fi 
this means that $Z(t-1)$ and $Z(t)$ are both in the same $\left| \log \frac{\gammaone}{\gammatwo} \right|$-sized interval ($Z(t)$ is one of the endpoints and $Z(t-1)$ is somewhere in the middle). Additionally, $Y(t)$ (given history $\cH_{t-1}$) is also a binary random variable whose possible outcomes are $\left| \log \frac{\gammaone}{\gammatwo} \right|$ apart, and since $\bbE[Y(t) \, | \, \cH_{t-1}] = Y(t-1)$ this interval also must contain $Y(t-1)$, and we are done.
\end{proof}
\fi
%
%
Finally, we bound the predictable quadratic variation $W(t)$.
\ifacc 
\else  
\ifarxiv
Since $W(t)$ is defined as a variance, the following standard identity is helpful for finding an upper bound. Suppose that variable $X$ takes two values, $a$ with probability $p$ and $b$ with probability $(1-p)$, then
\begin{align}
    \var[X] &= p(1-p) (a-b)^2\,.
\end{align}
Applied to $w(t)$, we get that
\else
Using standard variance calculation, we get
\fi
\ifarxiv 
\begin{align}
    &w(t) = \var[z(t) | \cH_{t-1}]
    \\& = f(\mu_i(t), \gammaone) \left(1 - f(\mu_i(t), \gammaone) \right)
    \eqlinebreakshort
    \left(\log \frac{f(\mu_i(t), \gammaone)}{f(\mu_i(t), \gammatwo)} - \log \frac{1 - f(\mu_i(t), \gammaone)}{1 - f(\mu_i(t), \gammatwo)} \right)^2
    \\ & = \left(\frac{\gammaone \mu_i(t) }{1 + (\gammaone -1) \mu_i(t)} \right) 
    \left( \frac{1- \mu_i(t) }{1 + (\gammaone -1) \mu_i(t)}\right) 
    \eqlinebreakshort
    \left(\log \frac{\gammaone}{\gammatwo}\frac{1 + (\gammatwo -1) \mu_i(t)}{1 + (\gammaone -1) \mu_i(t)} - \log \frac{1 + (\gammatwo -1) \mu_i(t)}{1 + (\gammaone -1) \mu_i(t)} \right)^2
    \\ & = \frac{\gammaone \mu_i(t) (1- \mu_i(t))}{(1 + (\gammaone -1) \mu_i(t))^2}  \left(\log \frac{\gammaone}{\gammatwo}\right)^2\label{eq::lrt_variance}\,.
\end{align}
\else 
\begin{align}
    &w(t) = \var[z(t) | \cH_{t-1}]
      = \frac{\gammaone \mu_i(t) (1- \mu_i(t))}{(1 + (\gammaone -1) \mu_i(t))^2}  \left(\log \frac{\gammaone}{\gammatwo}\right)^2\label{eq::lrt_variance}\,.
\end{align}
\fi 

\fi  

\begin{lemma}\label{lem::var_as_exp}
When $\gammaone \neq \gammatwo$, there exists a constant $c_1 = c_1(\gammaone, \gammatwo) > 0$ and $t_1$ such that for all $t > t_1$
\begin{align}
    w(t) \leq c_1 x(t)\,.
\end{align}

This also implies
\begin{align}
    W(t) \leq c_1 X(t)\label{eq::Wt_less_cXt}
    \,.
\end{align}

\end{lemma}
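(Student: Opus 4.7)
The plan is to directly take the ratio $w(t)/x(t)$ using the explicit expressions we already have on hand, and observe that the problematic $\mu_i(t)(1-\mu_i(t))$ factor cancels between numerator and denominator, leaving a quantity bounded by a constant depending only on $\gamma_1$ and $\gamma_2$.

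More concretely, from \eqref{eq::lrt_variance} we have the closed form
\begin{align}
w(t) = \frac{\gamma_1 \, \mu_i(t)(1-\mu_i(t))}{(1+(\gamma_1-1)\mu_i(t))^2}\left(\log\frac{\gamma_1}{\gamma_2}\right)^2,
\end{align}
and from \Cref{lem::xt_bound_hellinger} we have the lower bound
\begin{align}
x(t) \geq \frac{(\sqrt{\gamma_1}-\sqrt{\gamma_2})^2 \, \mu_i(t)(1-\mu_i(t))}{\max\{(\gamma_1+\gamma_2)/2, 1\}^2}.
\end{align}
Both expressions contain the factor $\mu_i(t)(1-\mu_i(t))$, so dividing gives
\begin{align}
\frac{w(t)}{x(t)} \leq \frac{\gamma_1 \left(\log\frac{\gamma_1}{\gamma_2}\right)^2 \max\{(\gamma_1+\gamma_2)/2, 1\}^2}{(\sqrt{\gamma_1}-\sqrt{\gamma_2})^2 \, (1+(\gamma_1-1)\mu_i(t))^2}.
\end{align}
The only remaining dependence on $t$ is through $(1+(\gamma_1-1)\mu_i(t))^2$, which for $\mu_i(t) \in [0,1]$ is uniformly bounded below by $\min\{1,\gamma_1\}^2 > 0$ (when $\gamma_1 \geq 1$ the expression is $\geq 1$, and when $\gamma_1 < 1$ it is minimized at $\mu = 1$, giving $\gamma_1^2$). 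Defining
\begin{align}
c_1 \eqdef \frac{\gamma_1 \left(\log\frac{\gamma_1}{\gamma_2}\right)^2 \max\{(\gamma_1+\gamma_2)/2, 1\}^2}{(\sqrt{\gamma_1}-\sqrt{\gamma_2})^2 \min\{1,\gamma_1\}^2}
\end{align}
then yields $w(t) \leq c_1 x(t)$. Since $\gamma_1 \neq \gamma_2$, $c_1$ is finite and positive, and in fact the bound holds for every $t$ (so $t_1$ can be taken as small as we like).

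The second claim, $W(t) \leq c_1 X(t)$, then follows immediately by summing the pointwise inequality over $\tau = \startindex, \ldots, t$, since both $W(t)$ and $X(t)$ are defined as such sums.

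There is no serious obstacle here: the work has already been absorbed into \Cref{lem::xt_bound_hellinger} (which provided the Hellinger-style lower bound on the KL divergence) and into the elementary variance computation giving \eqref{eq::lrt_variance}. The only thing to watch is that $(1+(\gamma_1-1)\mu)^2$ does not vanish on $[0,1]$, which is clear since $\gamma_1 > 0$, so the denominator in the ratio stays bounded away from zero uniformly in $t$.
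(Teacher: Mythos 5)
Your proof is correct and follows essentially the same approach as the paper: both bound $w(t)$ via \eqref{eq::lrt_variance} using $(1+(\gamma_1-1)\mu_i(t))^2 \geq \min\{1,\gamma_1\}^2$, then apply the Hellinger-style lower bound on $x(t)$ from \Cref{lem::xt_bound_hellinger} to cancel the $\mu_i(t)(1-\mu_i(t))$ factor, arriving at the same constant $c_1$. Your observation that the bound holds for all $t$ (so $t_1$ is immaterial) is also consistent with the paper's proof, which does not actually invoke any threshold.
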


\ifacc
The proof is omitted
\else
\begin{proof}
Starting with \eqref{eq::lrt_variance}, we have that
\begin{align}
w(t) &\leq  \frac{\gammaone \left(\log \frac{\gammaone}{\gammatwo}\right)^2 }{\left(\min\{1, \gammaone \}\right)^2}  \mu_i(t) (1- \mu_i(t))
\\ & \leq  \frac{\gammaone \left(\log \frac{\gammaone}{\gammatwo}\right)^2 }{\left(\min\{1, \gammaone \}\right)^2} \frac{\max\{ \frac{\gammaone + \gammatwo}{2}, 1 \}^2}{(\sqrt{\gammaone} - \sqrt{\gammatwo})^2 } x(t)
\end{align}
and thus we can set
$c_1$ to be the coefficient in front of $x(t)$.
Since $W(t)$ is a sum of $w(t)$ and $X(t)$ is a sum of $x(t)$, we naturally have
$
    W(t) \leq c_1 X(t)
$
for all $t$.
\end{proof}
\fi

\fi

\ifacc
\else
\subsection{Concentration by Freedman's Inequality}
\fi

\label{sec::freedman}

We want to show that the test $Z(t) > 0$ works to distinguish whether $\gammaone$ or $\gammatwo$ is the true parameter. We do this by showing that if $\gammaone$ is the true parameter, then almost surely $Z(t) \leq 0$ (i.e. the test fails) for only finitely many $t$. 
\ifacc
We show this by applying Freedman's inequality (\cite{tropp2011} and \cite{freedman1975} (Thm 1.6)) and \Cref{lem::xt_1overct} (proof omitted):
\else
We show this by applying Freedman's inquality (this formulation taken from \cite{tropp2011} (Thm 1.1), but originally from \cite{freedman1975} (Thm 1.6)):
\begin{theorem}[Freedman's Martingale Inequality \cite{tropp2011}]
\label{thm::freedman}
If $Y(t)$ is a martingale with steps $y(t) = Y(t) - Y(t-1)$ such that $|y(t)| \leq \alpha$ almost surely (and $Y(0) = 0$), and the \emph{predictable quadratic variation} of $Y(t)$ is
\begin{align}
    W(t) = \sum_{\tau=\startindex}^t \bbE[y(\tau)^2 | Y(1), \dots, Y(\tau-1)]
\end{align}
then for any $s, \sigma^2 > 0$,
\begin{align}
    \bbP[\exists \, t : Y(t) \geq s, W(t) \leq \sigma^2] \leq \exp \Big(\frac{-s^2/2}{\sigma^2 + \alpha s/3} \Big)\,.
\end{align}
\end{theorem}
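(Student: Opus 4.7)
The plan is to prove Freedman's inequality by the standard exponential supermartingale (Chernoff) method, adapted to the martingale setting so that the MGF of $Y(t)$ is controlled through its predictable quadratic variation $W(t)$ rather than through independence. The boundedness $|y(t)|\leq\alpha$ is what lets the MGF be bounded in terms of the conditional second moment, and the predictability of $W(t)$ (it is $\cH_{t-1}$-measurable once we use $\var[y(t)|\cH_{t-1}]$) is what makes the exponential tilt into a genuine supermartingale.

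The first step is the workhorse lemma: for any $\cH$-measurable random variable $X$ with $\bbE[X\mid\cH]=0$ and $|X|\leq\alpha$,
\begin{align}
\bbE\!\left[e^{\theta X}\,\middle|\,\cH\right] \;\leq\; \exp\!\bigl(g(\theta)\,\bbE[X^{2}\mid\cH]\bigr),\qquad g(\theta)\eqdef\frac{e^{\theta\alpha}-1-\theta\alpha}{\alpha^{2}}.
\end{align}
This follows from Taylor-expanding $e^{\theta x}$, using $|x|^{k}\leq\alpha^{k-2}x^{2}$ for $k\geq 2$ under the bound $|x|\leq\alpha$, and taking the conditional expectation (the zero-mean assumption kills the linear term). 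Applying this coordinatewise to the increments $y(t)$, the process
\begin{align}
S(t)\;\eqdef\;\exp\!\bigl(\theta Y(t)-g(\theta)\,W(t)\bigr)
\end{align}
satisfies $\bbE[S(t)\mid\cH_{t-1}]\leq S(t-1)$ because $W(t)-W(t-1)=w(t)$ is $\cH_{t-1}$-measurable and equals $\bbE[y(t)^{2}\mid\cH_{t-1}]$. Hence $S(t)$ is a nonnegative supermartingale with $S(0)=1$, so $\bbE[S(t)]\leq 1$ for all $t$.

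The second step is the stopping-time argument. Define $\tau\eqdef\inf\{t:Y(t)\geq s\}$ (with $\inf\emptyset=\infty$), and consider the event $E=\{\exists\,t:Y(t)\geq s,\ W(t)\leq\sigma^{2}\}$. On $E$ we have $\tau<\infty$ and, because $W(\cdot)$ is nondecreasing, $W(\tau)\leq\sigma^{2}$, so $S(\tau)\geq\exp(\theta s-g(\theta)\sigma^{2})$ on $E$. Applying optional stopping to the bounded stopping time $\tau\wedge t$ and Markov's inequality,
\begin{align}
\bbP[E\cap\{\tau\leq t\}]\cdot e^{\theta s-g(\theta)\sigma^{2}}\;\leq\;\bbE[S(\tau\wedge t)\mathbf{1}_{E\cap\{\tau\leq t\}}]\;\leq\;\bbE[S(\tau\wedge t)]\;\leq\;1,
\end{align}
and letting $t\to\infty$ gives $\bbP[E]\leq\exp(-\theta s+g(\theta)\sigma^{2})$ for every $\theta>0$.

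The final step is to optimize $\theta$. Using the elementary inequality $g(\theta)\leq\frac{\theta^{2}/2}{1-\alpha\theta/3}$ valid for $\theta\in(0,3/\alpha)$ (itself a Bernstein-type algebraic exercise comparing the series for $e^{\theta\alpha}-1-\theta\alpha$ against a geometric series), the choice $\theta=s/(\sigma^{2}+\alpha s/3)$ yields exactly the advertised bound $\exp\!\bigl(-s^{2}/(2\sigma^{2}+2\alpha s/3)\bigr)$. The main obstacle, and the point that deserves the most care, is the stopping-time construction: we need a single $\tau$ at which both inequalities $Y(\tau)\geq s$ and $W(\tau)\leq\sigma^{2}$ are attained, despite the event $E$ being a union over all times and despite $\tau$ potentially being infinite. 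The predictability of $W(t)$ together with its monotonicity is what makes this clean; everything else is MGF bookkeeping.
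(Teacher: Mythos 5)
Your proposal is correct and follows the standard exponential-supermartingale proof of Freedman's inequality (the MGF lemma for bounded zero-mean increments, the supermartingale $\exp(\theta Y(t)-g(\theta)W(t))$, optional stopping via the hitting time of level $s$, and the Bernstein-style bound $g(\theta)\leq\frac{\theta^2/2}{1-\alpha\theta/3}$ with $\theta=s/(\sigma^2+\alpha s/3)$). The paper itself does not prove this theorem but cites it directly from \cite{tropp2011} and \cite{freedman1975}, and your argument reproduces essentially the proof given in those references.
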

\ifarxiv
This inequality is an extension of Bernstein's inequality to martingales, where the variance of each step is not fixed but is itself a random variable dependent on the history.
\fi
Using \Cref{thm::freedman}, we get our result for our test:
\fi

\begin{theorem}\label{thm::binary-success}
    If $\gamma_i \in \{\gammaone, \gammatwo\}$, then the likelihood ratio test $Z(t) = L_i(\gammatwo, t) - L_i(\gammaone, t)$ is such that
    \begin{align}
        Z(t) \begin{cases} 
        > 0 \text{ if } \gamma_i = \gammaone \\
        < 0 \text{ if } \gamma_i = \gammatwo
        \end{cases}
    \end{align}
    for all but finitely many $t$.
\end{theorem}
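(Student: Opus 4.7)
By the symmetry $(\gamma_1,\gamma_2) \leftrightarrow (\gamma_2,\gamma_1)$ (which negates $Z$), it is enough to treat the case $\gamma^\ast = \gamma_1$ and show that a.s.\ $Z(t) > 0$ for all but finitely many $t$. Writing $Z(t) = X(t) - Y(t)$, this is equivalent to $Y(t) < X(t)$ eventually. The strategy is to apply Freedman's inequality to $Y(t)$, exploiting Lemma~\ref{lem::xt_1overct} (lower bound on $X(t)$) and Lemma~\ref{lem::var_as_exp} ($W(t) \leq c_1 X(t)$), with the difficulty being that $X(t)$ grows only like $\log t$, so a naive per-$t$ Freedman bound gives only polynomial decay $t^{-c}$ for an uncontrolled constant $c$. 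To get summability, I will apply Freedman on \emph{geometrically spaced blocks} of $X$-values rather than pointwise in $t$.

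\textbf{Step 1: Doubling sequence in $X$.} Since $x(t) \leq |\log(\gamma_1/\gamma_2)| =: \alpha$ by Lemma~\ref{lem::bdd-step-size}, $X(t)$ is nondecreasing with bounded increments, and by \eqref{eq::xt_infty} it diverges. Define $T_k = \inf\{t : X(t) \geq 2^k\}$. Then $T_k < \infty$ and $X(T_k) \in [2^k, 2^k + \alpha)$.

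\textbf{Step 2: Blockwise Freedman bound.} For each $k$, consider the block event
\[
E_k \;=\; \{\,\exists\, t \in [T_k, T_{k+1}] : Y(t) \geq X(t)\,\}.
\]
On $E_k$ we have $Y(t) \geq X(t) \geq 2^k$ at such a $t$, and since $W$ is nondecreasing and $W(t) \leq c_1 X(t) \leq c_1(2^{k+1}+\alpha)$ throughout the block (Lemma~\ref{lem::var_as_exp}), we get the inclusion
\[
E_k \;\subseteq\; \bigl\{\exists\, t : Y(t) \geq 2^k,\ W(t) \leq c_1(2^{k+1}+\alpha)\bigr\}.
\]
Applying Freedman's inequality with $s = 2^k$, $\sigma^2 = c_1(2^{k+1}+\alpha)$, and step bound $\alpha$,
\[
\bbP[E_k] \;\leq\; \exp\!\left(-\frac{2^{2k}/2}{c_1(2^{k+1}+\alpha) + \alpha\, 2^k/3}\right) \;=\; \exp\!\bigl(-\Theta(2^k)\bigr).
\]

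\textbf{Step 3: Borel--Cantelli.} The series $\sum_k \bbP[E_k]$ converges, so almost surely only finitely many $E_k$ occur. Since every sufficiently large $t$ lies in some block $[T_k, T_{k+1}]$, this yields $Y(t) < X(t)$, equivalently $Z(t) > 0$, for all but finitely many $t$. Swapping $\gamma_1$ and $\gamma_2$ gives the $\gamma^\ast = \gamma_2$ case.

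\textbf{Main obstacle.} The conceptual subtlety is that $X(t)$ grows only logarithmically in $t$, so working time-pointwise via Freedman produces at best a $t^{-c}$ tail whose summability depends on unfavorable constants (in particular on how close $\gamma_2$ is to $\gamma_1$). Passing to a subsequence where the $X$-values double converts the tail into an exponentially small one per block and decouples the Borel--Cantelli argument from these constants; the bounded-increment property from Lemma~\ref{lem::bdd-step-size} is what makes the doubling well-defined and ensures the block endpoints behave controllably.
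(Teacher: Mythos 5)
Your proof is correct, and it takes a genuinely different (though related) route from the paper's. The paper does not, as your "main obstacle" note suggests, attempt a naive per-$t$ union bound; instead it indexes events by the integer $s$ and shows that whenever $Z(t) \le 0$ (and $X(t) > 2$) there is an integer separator $s$ with $W(t)/(2c_1) \le s \le Y(t)$. It then applies Freedman to the events $A_s = \{\exists t: Y(t) \ge s,\ W(t) \le 2c_1 s\}$, getting $\bbP[A_s] \le e^{-\xi s}$, and Borel--Cantelli over $s \in \bbZ_{>0}$. Your version instead partitions time into stopping-time blocks $[T_k, T_{k+1}]$ where $X$ roughly doubles, and applies Freedman once per block. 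Both decouple the Borel--Cantelli sum from the slow $\log t$ growth of $X(t)$; the paper's integer-separator device is somewhat slicker since it avoids introducing the stopping times $T_k$, while your geometric blocking produces a doubly-exponential tail $e^{-\Theta(2^k)}$ and makes the summability visibly insensitive to the constants $c_0, c_1, \kappa$. One small thing worth making explicit in your write-up: the block endpoint bound $X(T_{k+1}) < 2^{k+1} + \alpha$ needs $|x(t)| \le \alpha$, which follows from $x(t) = \bbE[z(t) \mid \cH_{t-1}]$ and $|z(t)| \le \alpha$ (from \Cref{lem::bdd-step-size}), but the paper never states the bounded-increment property of $X$ directly, so that inference should be spelled out. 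Also note that \Cref{lem::var_as_exp} gives $W(t) \le c_1 X(t)$ only for $t$ beyond some $t_1$; like the paper you implicitly absorb the initial segment, which is fine but could be flagged.
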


\ifacc
\else
\begin{proof}
For the proof, suppose that $\gamma_i = \gammaone$. If $\gamma_i = \gammatwo$, the same proof holds except with $-Z(t)$.

For the purposes of finding a contradiction, assume that $Z(t) \leq 0$ infinity often. Since $Z(t) = X(t) - Y(t)$, we have
\begin{align}
    Z(t) \leq 0 \iff X(t) \leq Y(t)\,.
\end{align}
Using \eqref{eq::Wt_less_cXt} from \Cref{lem::var_as_exp}, we have
\begin{align}
     X(t) \leq Y(t) \iff W(t) \leq c_1 X(t) \leq c_1 Y(t)\,.
\end{align}

Now suppose there are infinite values of $t$ where $Z(t) \leq 0$. This means there are infinite values where $W(t) \leq c_1 Y(t)$. 
From \eqref{eq::xt_infty}, we know that $X(t) \to \infty$ as $t\to \infty$.
This means that there is a $t$ where $X(t) > 2$.
In that case, we have
\begin{align}
    Y(t) - \frac{1}{2 c_1} W(t) \geq X(t) -\frac{1}{2} X(t) \geq 1\,.
\end{align}
This means that so long as $t$ is large enough so that $X(t) > 2$, we have that
\begin{equation}
    \begin{aligned}
    &Y(t) - \frac{1}{2 c_1} W(t) \geq 1 
    \\
    \implies &\exists s \in \bbZ_{>0} \text{ such that }
\frac{1}{2c_1} W(t) \leq s \leq Y(t)\,.
    \end{aligned}
\label{eq::integer_s_exists}
\end{equation}

Let us define a set of \emph{bad times} where the estimator fails:
\begin{align}
    \cT \eqdef \{t>1: Z(t) \leq 0, X(t) > 2 \} = \{ \tilde t_1, \tilde t_2,\dots\}
\end{align}
where they are ordered $\tilde t_1< \tilde t_2 < \dots$.
We assume that $\cT$ is infinite and then derive a contradiction.
We define
\begin{align}
    \tilde s_k \eqdef \max_{s}\left\{s \in \bbZ_{>0}: \frac{1}{2c_1} W(\tilde t_k) \leq s \leq Y(\tilde t_k) \right\}\,.
\end{align}
From \eqref{eq::integer_s_exists}, we know such an $\tilde s_k$ exists for each $k$. In fact,
$
    \tilde s_k = \lfloor Y(\tilde t_k)\rfloor \,.
$
If $\cT$ is infinite, then this produces an infinite sequence of integers $\tilde s_1, \tilde s_2, \dots$. Because $X(t) \to \infty$, which implies that $Y(\tilde t_k) \to \infty$, we have also that
\begin{align}
    \lim_{k \to \infty} \tilde s_k = \infty\,.
\end{align}

While the sequence $\tilde s_1, \tilde s_2, \dots$ could have many copies of the same integer, the set $\{s : s = \tilde s_k \text{ for some } k\}$ must be infinite since $\lim_{k \to \infty} \tilde s_k = \infty$. In other words, if $\cT$ is infinite there must be infinitely many positive integers $s$ such that
\begin{align}
    Y(t) \geq s \text{ and } W(t) \leq 2 c_1  s\,.
\end{align}

\ifacc
Applying Freedman's inequality,
\else
Applying \Cref{thm::freedman}, 
\fi
we get that there are infinitely many $ s$ such that
\begin{align}
      \bbP&[\exists \, t : Y(t) \geq s, W(t) \leq 2 c_1  s] \leq \exp \Big(\frac{- s^2/2}{ 2 c_1  s + \alpha  s/3} \Big) 
      \eqlinebreakshort
      = \exp\left(\frac{ s}{4 c_1 + 2 \alpha/3} \right) = \exp(-\xi  s)\label{eq::applying_freedman}
\end{align}
where $\xi = 1/(4 c_1 + 2 \alpha/3)$. \Cref{lem::bdd-step-size} then gives the bound
\begin{align}
    y(t) \leq \left|\log \frac{\gammaone}{\gammatwo} \right| \eqdef \alpha\,.
\end{align}

For each $s \in \bbZ_{>0}$, let event $A_s$ be 
\begin{align}
    A_s = \left\{ \exists t: Y(t)>s, W(t) \leq 2 c_1 s \right \}\,.
\end{align}
Then,
\begin{align}
    \sum_{s = 1}^\infty \bbP[A_s] \leq \sum_{s = 1}^\infty \exp(-\xi s) < \infty
\end{align}
and therefore by Borel-Cantelli, almost surely only finitely many events $A_s$ can occur. This is a contradiction and proves our result.
\end{proof}
\fi

\subsection{Combining Results to Prove Consistency of Estimators}

\label{sec::consistency_final}

The above concentration result shows that the MLE $\widehat{\gamma}_i(t)$ \eqref{eq::hat_gamma_mle} converges asymptotically to the true $\gamma_i$.
\begin{theorem}
   \label{thm::mle-consistency}
For any agent $i$, almost surely, 
\begin{align}
    \lim_{t \to \infty} \widehat{\gamma}_i(t) = \gamma_i\,.
\end{align}

\end{theorem}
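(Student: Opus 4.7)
The plan is to combine the strict convexity of $\tilde{L}_i(\logbias, t)$ in $\logbias$ (\Cref{prop::likelihood_properties}(\ref{item::convexity_ell})) with the pointwise likelihood-ratio result in \Cref{thm::binary-success}. First, reparametrize so that $\widehat{\logbias}_i(t) \eqdef \log \widehat{\gamma}_i(t) = \arg\min_{\logbias} \tilde L_i(\logbias, t)$. By \Cref{prop::likelihood_properties}(d) this minimizer exists and is unique as soon as agent $i$ has declared each opinion at least once; by \cite[Lemma 2]{opinionDynamicsPart1} this happens almost surely for all sufficiently large $t$, so $\widehat{\gamma}_i(t)$ is eventually well-defined a.s.

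Fix $\epsilon > 0$ and apply \Cref{thm::binary-success} to the two pairs $(\gamma_1, \gamma_2) = (\gamma_i, \gamma_i e^{\epsilon})$ and $(\gamma_1, \gamma_2) = (\gamma_i, \gamma_i e^{-\epsilon})$. Since $\gamma_i$ is the true parameter, the theorem yields a (random) $T(\epsilon) < \infty$ almost surely such that for all $t \geq T(\epsilon)$,
\begin{align}
\tilde L_i(\logbias_i + \epsilon, t) > \tilde L_i(\logbias_i, t) \quad \text{and} \quad \tilde L_i(\logbias_i - \epsilon, t) > \tilde L_i(\logbias_i, t).
\end{align}

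A short convexity argument then forces $\widehat{\logbias}_i(t) \in (\logbias_i - \epsilon, \logbias_i + \epsilon)$. Indeed, if $\widehat{\logbias}_i(t) \geq \logbias_i + \epsilon$, then $\logbias_i + \epsilon$ is a convex combination of $\logbias_i$ and $\widehat{\logbias}_i(t)$, so by convexity
\begin{align}
\tilde L_i(\logbias_i + \epsilon, t) \leq \max\{\tilde L_i(\logbias_i, t), \tilde L_i(\widehat{\logbias}_i(t), t)\} = \tilde L_i(\logbias_i, t),
\end{align}
contradicting the first displayed inequality; the symmetric argument rules out $\widehat{\logbias}_i(t) \leq \logbias_i - \epsilon$. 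Applying this to a countable sequence $\epsilon_k \downarrow 0$ and taking the intersection of the almost-sure events gives $\widehat{\logbias}_i(t) \to \logbias_i$ a.s., and exponentiating yields $\widehat{\gamma}_i(t) \to \gamma_i$ a.s.

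The main obstacle is promoting the pointwise statement of \Cref{thm::binary-success} (true parameter versus one fixed alternative) to a two-sided bracketing around $\gamma_i$ that pins down the MLE. Strict convexity of $\tilde L_i$ in $\logbias$ is the key structural fact that makes this leap cheap: only two alternative values of $\gamma$ need to be beaten in order to trap the minimizer in an $\epsilon$-neighborhood, rather than an uncountable continuum. The only subtlety is that the $T(\epsilon_k)$ are random and possibly unbounded in $k$, but since each $\{T(\epsilon_k) < \infty\}$ has probability one, their countable intersection does too.
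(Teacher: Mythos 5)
Your proof follows essentially the same route as the paper's: reparametrize to $\logbias = \log\gamma$, apply \Cref{thm::binary-success} to the two alternatives $\logbias_i \pm \epsilon$, and use strict convexity of $\tilde L_i(\cdot, t)$ to trap the minimizer in $[\logbias_i - \epsilon, \logbias_i + \epsilon]$ for all large $t$. You are slightly more careful than the paper in spelling out the countable intersection over $\epsilon_k \downarrow 0$ and the eventual well-definedness of the minimizer, but these are just explicitizations of steps the paper leaves implicit; the argument is the same.
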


\begin{proof}

We take advantage of the alternative parameterization of $\logbias = \log \gamma$ (from \Cref{sec::symmetric_parameterization}). Let the MLE for $\logbias$ be 
\begin{align}
\widehat{\logbias}_i (t) = \arg \min_{\logbias} \tilde{L_i}(\logbias, t)\,.
\end{align}
We will show that $\widehat{\logbias}_i(t)$ converges to the true parameter, which we call $\logbias_i$, so $\widehat{\gamma}_i(t)$ converges to the true $\gamma_i$. 

For any fixed $\epsilon > 0$, let $a = \logbias_i - \epsilon$ and $b =  \logbias_i + \epsilon$. 
From \Cref{thm::binary-success}, there exists some time $t_a$ so that $\tilde{L}_i(a, t) > \tilde{L}_i(\logbias_i, t)$ for all $t > t_a$, and there exists some time $t_b$ so that $\tilde{L}_i(b, t) > \tilde{L}_i(\logbias_i, t)$ for all $t>t_b$.

At all times $t > \max\{t_a, t_b\} \eqdef t(\epsilon)$, the value of $\tilde{L}_i(\logbias_i, t)$ is less than both $\tilde{L}_i(a, t)$ and $\tilde{L}_i(b, t)$. By \Cref{prop::likelihood_properties}(\ref{item::convexity_ell}), the function $\tilde{L}_i(\logbias, t)$ is convex in $\logbias$, and thus the minimum of $\tilde{L}_i(\logbias, t)$ at any $t > t(\epsilon)$ must be in $[a,b] = [\logbias_i - \epsilon, \logbias_i + \epsilon]$. 

Thus, for every $\epsilon > 0$, we can always find a $t(\epsilon)$ where for all $t > t(\epsilon)$ we have that $\widehat{\logbias}_i(t)$ is within $\epsilon$ of $\logbias_i$, and thus
$
    \lim_{t \to \infty} \widehat{\logbias}_i(t) = \logbias_i
$
completing the proof.
\end{proof}

\label{sec::derive_inherent_estimator}

This also shows that the inherent belief estimator from \Cref{def::estimator_inherent} almost surely converges to the correct result.

\begin{theorem} 
Almost surely, if $\gamma_i \neq 1$, then
\begin{align}
    \lim_{t \to \infty} \widehat{\phi}_i(t) = \phi_i
\end{align}
\end{theorem}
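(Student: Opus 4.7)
The plan is to show that the inherent belief estimator $\widehat{\phi}_i(t)$ is equivalent to a sign test on the MLE $\widehat{\gamma}_i(t)$ (comparing to $1$), and then invoke \Cref{thm::mle-consistency} to conclude almost-sure convergence.

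The first step is to compute the derivative of $\tilde L_i(\logbias, t)$ at $\logbias = 0$. Starting from \eqref{eq::first_derivative_tilde_ell}, one obtains
\begin{align}
\frac{d}{d\logbias}\tilde \ell_i(0, t) = \mu_i(t-1) - \psi_{i,t},
\end{align}
by case-checking $\tilde \psi_{i,t} = \pm 1$ and simplifying using $e^{-\nu_i(t-1)} = (1-\mu_i(t-1))/\mu_i(t-1)$. Summing over $\tau = \startindex$ to $t$,
\begin{align}
\frac{d}{d\logbias}\tilde L_i(0, t) = \sum_{\tau=\startindexminus}^{t-1}\mu_i(\tau) - (\countrandom)\bar\beta_i(t).
\end{align}
Thus the inner quantity in the sign of \Cref{def::estimator_inherent} is exactly $-\frac{d}{d\logbias}\tilde L_i(0,t)$, so $\widehat{\phi}_i(t) = 1$ iff $\frac{d}{d\logbias}\tilde L_i(0,t) < 0$.

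Next I would use strict convexity of $\tilde L_i(\logbias, t)$ in $\logbias$ (\Cref{prop::likelihood_properties}(\ref{item::convexity_ell})) together with the existence of a unique finite minimizer (once both opinions have been declared, which happens almost surely for large $t$ by \cite[Lemma 2]{opinionDynamicsPart1}). Under these conditions, the sign of $\frac{d}{d\logbias}\tilde L_i(0,t)$ is negative iff the minimizer $\widehat{\logbias}_i(t)$ lies strictly to the right of $0$, i.e.\ iff $\widehat{\gamma}_i(t) > 1$. Therefore
\begin{align}
\widehat{\phi}_i(t) = \bbI\{\widehat{\gamma}_i(t) > 1\}
\end{align}
for all sufficiently large $t$.

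Finally I would apply \Cref{thm::mle-consistency}: almost surely $\widehat{\gamma}_i(t) \to \gamma_i$, and since by assumption $\gamma_i \neq 1$, for all sufficiently large $t$ we have $\widehat{\gamma}_i(t) > 1$ iff $\gamma_i > 1$, i.e.\ iff $\phi_i = 1$. Combining with the equivalence above yields $\widehat{\phi}_i(t) \to \phi_i$ almost surely. The only real obstacle is the derivative computation identifying the sign statistic with $\tfrac{d}{d\logbias}\tilde L_i(0,t)$; once that algebraic identity is in hand, convexity and MLE consistency do the rest.
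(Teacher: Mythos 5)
Your proposal is correct and matches the paper's own proof in every essential step: identify the sign statistic with $-\frac{\partial}{\partial\logbias}\tilde L_i(\logbias,t)\big|_{\logbias=0}$ via the same derivative computation, use strict convexity (\Cref{prop::likelihood_properties}(\ref{item::convexity_ell})) to conclude that the sign of the derivative at $0$ determines which side of $0$ the minimizer lies on, and then apply \Cref{thm::mle-consistency}. The one small addition you make over the paper's write-up is explicitly invoking the unique finite minimizer condition (both opinions must have been declared) and the a.s.\ occurrence of this via \cite[Lemma 2]{opinionDynamicsPart1} --- the paper leaves this implicit, so your version is slightly more careful but not a different argument.
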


\begin{proof}

This is equivalent to
\begin{align}
\lim_{t \to \infty}\left(\sum_{\tau = \startindexminus}^{t-1} \mu_{i}(\tau)\right) -  (\countrandom) \bar \beta_i(t)  \begin{cases} 
< 0 & \text{ if } \phi_i = 1
\\> 0 &\text{ if } \phi_i = 0
\end{cases}
\end{align}

The result follows from three facts:
\begin{enumerate}[(i)]
    \item letting $\logbias_i = \log(\gamma_i)$ and $\widehat{\logbias}_i(t) = \argmin_\logbias \tilde{L}_i(\logbias,t) = \log(\widehat{\gamma}_i(t))$ be the maximum likelihood estimator of $\logbias_i$, then $\lim_{t \to \infty} \widehat{\logbias}_i(t) = \logbias_i$;
    \item for any $t$, $\tilde{L}_i(\logbias,t)$ is strictly convex in $\logbias$;
    \item $\left(\sum_{\tau = \startindexminus}^{t-1} \mu_{i}(\tau)\right) -  (\countrandom) \bar \beta_i(t) = \frac{\partial}{\partial \logbias} \tilde{L}_i(\logbias,t) \Big |_{\logbias=0}$.
\end{enumerate} 
Fact (i) follows directly from \Cref{thm::mle-consistency} and (ii) is \Cref{prop::likelihood_properties}(\ref{item::convexity_ell}). Fact (ii) also shows that
\begin{align}
    \widehat{\logbias}_i(t) > 0 \iff \frac{\partial}{\partial \logbias} \tilde{L}_i(\logbias,t) \Big |_{\logbias=0} < 0 \,;
\end{align}
and (assuming $\logbias_i \neq 0$),  $\phi_i = 1 \iff \logbias_i > 0$. Thus facts (i) and (ii) show that (almost surely)
\begin{align}
    &\phi_i = 1 \implies \widehat{\logbias}_i(t) > 0 \text{ for all sufficiently large } t  
    \\ &\implies \frac{\partial}{\partial \logbias} \tilde{L}_i(\logbias,t) \Big |_{\logbias=0} < 0 \text{ for all sufficiently large } t  
\end{align}
Thus, only fact (iii) remains to be shown.

Using $\tilde{L}_i(\logbias,t) = \sum_t \tilde{\ell}_i(\logbias,t)$ and
\begin{align}
 \tilde \ell_i(\logbias, t) = \log \left(1 + e^{-\tilde \psi_{i,t} (\logbias + \nu_i(t-1)) } \right)
\end{align}
to evaluate the derivative
\ifarxiv 
(as in \eqref{eq::first_derivative_tilde_ell})
\else 
\fi 
 at $\logbias = 0$, we get
 \ifarxiv 
\begin{align}
    \frac{\partial}{\partial \logbias} \tilde \ell_i(\logbias, t) \bigg |_{\logbias = 0}
& = \frac{-\tilde \psi_{i,t}}{e^{\tilde \psi_{i,t} \nu_i(t-1) } + 1}
\\& = \begin{cases}
\frac{1}{\frac{1-\mu_{i}(t-1)}{\mu_{i}(t-1) }+1} & \text{ if } \tilde \psi_{i, t} = -1 \\
\frac{-1}{\frac{\mu_{i}(t-1)}{1 -\mu_{i}(t-1) }+1} & \text{ if } \tilde \psi_{i, t} = 1
\end{cases}
\\& = \begin{cases}
\mu_{i}(t-1) & \text{ if } \tilde \psi_{i, t} = -1 \\
\mu_{i}(t-1) - 1 & \text{ if } \tilde \psi_{i, t} = 1
\end{cases}
\\ & =  \mu_{i}(t-1) - \bbI\{\psi_{i,t} = 1\}
\,.
\end{align}
\else 
\begin{align}
    \frac{\partial}{\partial \logbias} \tilde \ell_i(\logbias, t) \bigg |_{\logbias = 0}
& =   \mu_{i}(t-1) - \bbI\{\psi_{i,t} = 1\}
\,.
\end{align}
\fi 

And thus the derivative of 
the entire negative log-likelihood evaluated at $0$ is given by
\begin{align}
\frac{\partial}{\partial \logbias} \tilde L_i(\logbias, t) \bigg |_{\logbias = 0}  
& = \sum_{\tau = \startindex}^t \mu_{i}(\tau-1) - \bbI\{\psi_{i,t} = 1\}
\\ & = \left(\sum_{\tau = \startindexminus}^{t-1} \mu_{i}(\tau)\right) -  (\countrandom) \bar \beta_i(t)
\,.
\end{align}
This shows (iii) and completes the proof.
%
%
\end{proof}

\ifacc

\else

\section{Convergence Rates for Inferring Inherent Beliefs}

\label{sec::rate_converge_estimator}

While we have shown that the estimator for inherent beliefs in \Cref{def::estimator_inherent} will eventually correctly converge to an agent's inherent belief, we are also interested in how fast it converges. 
Since the estimator $\widehat{\phi}_i(t)$ only takes values of $0$ and $1$ (and is therefore always exactly correct or exactly wrong) we say the estimator \emph{converges by time $t^*$} if
\begin{align}
    \widehat{\phi}_i(t) = \phi_i \text{ for all } t \geq t^* \,.
\end{align}
The question is: for any $\delta > 0$, how many steps $t^*$ does it take for the estimator to have a $1-\delta$ probability of converging? Or, in other words, for what $t^*$ do we have
\begin{align}
\bbP\left[\exists t \geq t^* : \widehat{\phi}_i(t) \neq \phi_i\right] \leq \delta \,? \label{eq::estimator_less_delta}
\end{align}

\revise{In this section, the goal is to characterize how the estimator converges based on the convergence rate of declared opinions. Then in the remainder of this section, we determine bounds for the worst-case convergence rate. In the next section, we find bounds when declared opinions approach consensus.}
As in \Cref{sec::freedman} we fix the agent $i$ and omit it from the notation in this section.
Also as in \Cref{sec::freedman}, we assume that agent $i$ has inherent belief $\phi_i = 1$ (so $\gamma_i > 1$). \revise{For agents where $\gamma_i < 1$, the same analysis holds replacing $\gamma_i$ with $1/\gamma_i$.}

 The analysis for the convergence rate of the inherent belief estimator is similar to the analysis above for showing the convergence of the MLE. We use many of the same symbols in this proof as we did for the proof in the previous section (such as $X(t)$, $Y(t)$, etc.) but these will represent different (though analogous) quantities. 

\iffalse
Note that $\widehat{\phi}_i(t) = \phi_i = 1$ if and only if
\begin{align}
    Z(t) \eqdef t \bar \beta_i(t) - \sum_{\tau = 0}^{t-1} \mu_i(\tau) > 0 \,.
\end{align}
As before, let $z(t) = Z(t) - Z(t-1)$ and let $X(t)$ be the cumulative predictable expected value and $W(t)$ be the cumulative predictable quadratic variation of $Z(t)$:
\begin{align}
    x(t) \eqdef \bbE[z(t)|\cH_t] &\text{ and } X(t) \eqdef \sum_{\tau=1}^t x(\tau)
    \\ w(t) \eqdef \var[z(t)|\cH_t] &\text{ and } W(t) \eqdef \sum_{\tau=1}^t w(\tau)\,.
\end{align}

\begin{proposition}
\label{prop::estimator_convergence_function}
If $X(t) > g(t)$, and $X(t_0) > 2$ for some $t_0$, then
\begin{align}
\bbP\left[\exists t \geq t^* : \widehat{\phi}_i(t) \neq \phi_i\right] \leq \delta
\end{align}
holds for
\begin{align}
t^* \geq g^{-1}\left( \frac{1}{\xi_i} \log \frac{1}{\delta (e^{\xi_i} - 1)} \right)
\end{align}
where
\begin{align}
\xi_i \frac{1}{4 c_i + 2/3} \label{eq::def_for_xi_i}
\end{align}
and $c_i = \frac{\gamma_i}{\gamma_i - 1}$.
\end{proposition}

\begin{proof}
First, note that $g$ must be a monotonic function. 

Then we create a martingale $Y(t) = X(t) - Z(t)$ (which has the same cumulative expected value). Explicitly computing $x(t)$ as a function of $\mu_i(t)$ and applying the fact that $\mu_i(t) \in [\mindeg/t, 1-\mindeg/t]$ then yields $X(t) > c_2 \log (t)$ for some constant $c_2$ (which depends on $\gamma_i$). This is because $f(\mu_i(t),\gamma_i)$ and $1 - f(\mu_i(t),\gamma_i)$ is at minimum of order $1/t$.

Explicitly computing $w(t)$ and $x(t)$ also yields $w(t) \leq c_1 x(t)$ for the constant $c_1 \eqdef \frac{\gamma_i}{\gamma_i-1}$. As before, we then set $t_0 = e^{2/c_2}$, i.e. a value such that $X(t_0) \geq 2$ is guaranteed. Hence, for $t > t_0$, we have $X(t)-1 \geq X(t)/2 > W(t)/(2c_1)$, and so if $Y(t) > X(t)$ (in which case $Z(t) < 0$ and the estimator fails) we get that there must be an integer $s$ such that $W(t)/(2c_1) < s < Y(t)$. We then let event $A_s$ for integer $s$ be
\begin{align}
A_s \eqdef \{\exists t > t_0 :  Y(t) > s, W(t) < 2c_1 s\} 
\end{align}
We then use Freedman's Inequality (\Cref{thm::freedman}) which bounds the probability of $A_s$ above as exponentially small in $s$. Next, we find an $s_0$ such that the probability that $A_s$ is true for any integer $s > s_0$ is less than $\delta$. We then relate this value of $s_0$ to $t$ using that $X(t) > s_0 + 1$ and $X(t) > c_2 \log (t)$.
\else

Note that $\widehat{\phi}_i(t) = \phi_i = 1$ if and only if
\begin{align}
    Z(t) \eqdef (\countrandom) \bar \beta_i(t) - \sum_{\tau = \startindexminus}^{t-1} \mu_i(\tau) > 0 \,.
\end{align}

Then $Z(t)$ is a stochastic process with differences
\begin{align}
    z(t) &\eqdef Z(t) - Z(t-1) = \bbI\{\psi_{i, t} = 1 \} - \mu_i(t-1) \,.
\end{align}
We then make a martingale $Y(t)$ as in \Cref{sec::freedman}: first, the expected updates and (cumulative) predictable expected value
\begin{align}
x(t) & \eqdef \bbE[z(t)| \cH_{t-1}] \text{ and } X(t) \eqdef \sum_{\tau=\startindex}^t x(\tau)\,.
\end{align}
\ifarxiv 
We then derive $x(t)$ as
\begin{align}
x(t) &= f(\mu_i(t-1), \gamma_i) - \mu_i(t-1)
\\ & = \frac{(\gamma_i - 1) \mu_i(t-1) (1 - \mu_i(t-1))}{1 + (\gamma_i - 1) \mu_i(t-1)}
\label{eq::compute_xt_for_estimator}
\end{align}
\fi 

Note that since $\gamma_i > 1$ and $\mu_i(t-1) \in (0,1)$, we have $x(t) > 0$ for all $t$, and hence $X(t)$ is increasing and $Z(t)$ is a submartingale.

\begin{proposition}
\label{prop::estimator_convergence_function}
If $X(t) > g(t)$ for some increasing function $g$, and $X(t_0) > 2$ for some $t_0$, then
\begin{align}
\bbP\left[\exists t \geq t^* : \widehat{\phi}_i(t) \neq \phi_i\right] \leq \delta
\end{align}
holds for
\begin{align}
t^* &\geq \max \left\{ g^{-1}\left( \frac{1}{\xi_i} \log \frac{1}{\delta (e^{\xi_i} - 1)} \right), t_0 \right\}
\\ \xi_i &=\frac{1}{4 c_i + 2/3} \label{eq::def_for_xi_i}
\end{align}
and $c_i = \frac{\gamma_i}{\gamma_i - 1}$.
\end{proposition}

\ifarxiv 

\begin{proof}
First, note that $g$ must be a monotonic function. 
We define
\begin{align}
    y(t) & \eqdef x(t) - z(t) \text{ and } Y(t) \eqdef \sum_{\tau = \startindex}^t y(\tau) = X(t) - Z(t)\,.
\end{align}

Then $Y(t)$ is a martingale, as $x(t) = \bbE[z(t)|\cH_{t-1}] \implies \bbE[y(t)|\cH_{t-1}] = 0$. Furthermore, the martingale $Y(t)$ has bounded step sizes%
\ifarxiv
:
\begin{align}
|y(t)| &\leq \bbE[\bbI\{\psi_{i, t} = 0\} - \mu_i(t-1)  ] \eqbreakshort &~~ - (\bbI\{\psi_{i, t} = 0 \} - \mu_i(t-1))
\\ & = \bbE[\bbI\{\psi_{i, t} = 0\} ] - \bbI\{\psi_{i, t} = 0\} 
\\ & \leq 1
\end{align}
\else 
.
\fi 
%
We define the predictable quadratic variation as:
\begin{align}
    w(t) &\eqdef \var[y(t) | \cH_{t-1}] \text{ and } W(t) \eqdef \sum_{\tau = \startindex}^t w(\tau)
\end{align}
Noting that (given $\cH_{t-1}$) the value $\mu_i(t)$ is fixed yields
\begin{align}
w(t) &= \var[\psi_{i, t}| \cH_{t-1}]
 = \frac{\gamma_i \mu_i(t-1) (1 - \mu_i(t-1))}{(1 + (\gamma_i - 1) \mu_i(t-1))^2}
\end{align}
This then implies bounds on the ratio between $w(t)$ and $x(t)$:

\ifarxiv
\begin{align}
&\frac{w(t)}{x(t)} =  \frac{\gamma_i}{\gamma_i - 1}\frac{1}{1 + (\gamma_i - 1) \mu_i(t-1)} \in \Big(\frac{1}{\gamma_i-1}, \frac{\gamma_i}{\gamma_i-1} \Big)
\\ &\implies \frac{1}{\gamma_i - 1}x(t) \leq w(t) \leq \frac{\gamma_i}{\gamma_i - 1}x(t)
\end{align}
\else
\begin{align}
 \frac{1}{\gamma_i - 1}x(t) \leq w(t) \leq \frac{\gamma_i}{\gamma_i - 1}x(t)
\end{align}
\fi

Since $c_i = \frac{\gamma_i}{\gamma_i - 1}$, this gives
\ifarxiv
\begin{align}
    w(t) &\leq c_i x(t)
    \\ \implies W(t) &\leq c_i \sum_{\tau = \startindex}^t x(\tau) \leq c_i X(t) 
\end{align}
\else
\begin{align}
 W(t) &\leq c_i \sum_{\tau = \startindex}^t x(\tau) \leq c_i X(t) 
\end{align}
\fi
Rewriting as $W(t)/c_i \leq X(t)$, we note that $X(t) \geq 2$ implies $W(t)/(2c_i) \leq X(t) - 1$; this means that when $X(t) \geq 2$, if $Y(t) > X(t)$ there must be some integer $s$ such that $W(t)/(2c_i) < s < Y(t)$.

Let $t_0$ be when $X(t_0) \geq 2$.
For any $t > t_0$, the estimator being wrong then implies
\begin{align}
Z(t) < 0 &\iff X(t) - Y(t) < 0
\\ &\iff Y(t) > X(t)
\\ &\implies \exists s \in \bbZ_{+}: X(t) - 1 < s < Y(t)
\\ &\implies \exists s\in \bbZ_{+}:  W(t)/(2c_i) < s < Y(t)
\end{align}
Let event $A_s$ for integer $s$ be defined as
\begin{align}
A_s = \{\exists t> t_0: X(t) - 1 < s < Y(t)\}
\end{align}
If $A_s$ occurs, we call $s$ a \emph{separator}; by the above, for any $t > t_0$, if the estimator is wrong at time $t$ there must be a separator corresponding to $t$ (note however that one separator $s$ can work for multiple $t$). We now apply Freedman's Inequality (\Cref{thm::freedman}) to bound the probability that any given $s$ is a separator:
%
%
%
\begin{align}
\bbP[A_s] & = \bbP[\exists t> t_0: X(t) - 1 < s < Y(t)] 
\\ &\leq \bbP[\exists t: Y(t) \geq s, W(t) \leq 2 c_i s  ]  
\\& \leq \exp\left( \frac{-s}{2(2c_i + 1/3)}\right)\,.
\end{align}
However, given some $s_0$, we want to bound the probability that \emph{any} integer $s > s_0$ is a separator. We thus define
\begin{align}
B_{s_0} &= \{\not \exists \text{ } s \in \bbZ > s_0
\text{ such that } A_s \text{ holds}\} \,.
\end{align}
Using \eqref{eq::def_for_xi_i} gives:
\begin{align}
1 - \bbP[B_{s_0}] &= \sum_{s = s_0+1}^{\infty}  \bbP[A_s] 
\leq \sum_{s = s_0+1}^{\infty} \exp\left( -\xi_i s\right)
\\ & = \frac{e^{- \xi_i (s_0+1) }} {e^{\xi_i} - 1}\,.
\end{align}
If $t$ is such that
\begin{align}
s_0+1 &\leq g(t) \leq X(t) 
 \implies
t \geq g^{-1}(s_0 + 1)\label{eq::bound_t_s0_invert_log}
\end{align}
and $B_{s_0}$ holds,
then $Z(t) > 0$ and the estimator is correct.
Thus, if \eqref{eq::bound_t_s0_invert_log} holds, then
\begin{align}
\bbP[Z(t) < 0] &\leq 1 - \bbP[B_{s_0}] 
\leq \frac{e^{-\xi_i (s_0+1)}}{e^{\xi_i} - 1}\,.
\end{align}

If we want
\begin{align}
\bbP[Z(t) < 0] \leq \frac{e^{-\xi_i (s_0+1)}}{e^{\xi_i} - 1} \leq \delta
\end{align}
then
\begin{align}
e^{-\xi_i (s_0+1)} &\leq \delta (e^{\xi_i} - 1)
\\ \implies s_0 + 1 &\geq \frac{1}{\xi_i} \log \frac{1}{\delta (e^{\xi_i} - 1)}\,.
\end{align}
Combining this with 
\eqref{eq::bound_t_s0_invert_log}
gives
\begin{align}
&g(t) \geq \frac{1}{\xi_i} \log \frac{1}{\delta (e^{\xi_i} - 1)}\,.
\end{align}

\end{proof}

\else 
\begin{proof}[Proof Sketch]

\revise{
We define
$y(t) \eqdef x(t) - z(t)$ and $ Y(t) \eqdef \sum_{\tau = \startindex}^t y(\tau) = X(t) - Z(t)$
where $Y(t)$ is a bounded step martingale. We define the predictable quadratic variation as
$ w(t) \eqdef \var[y(t) | \cH_{t-1}]$ and  $W(t) \eqdef \sum_{\tau = \startindex}^t w(\tau)$
and derive that
\begin{align}
 \frac{1}{\gamma_i - 1}x(t) \leq w(t) \leq \frac{\gamma_i}{\gamma_i - 1}x(t)
\end{align}
Since $c_i = \frac{\gamma_i}{\gamma_i - 1}$, this gives
$
 W(t) \leq c_i \sum_{\tau = \startindex}^t x(\tau) \leq c_i X(t) 
$, which for $t > t_0$ means that $W(t)/(2c_i) \leq X(t)/2 < X(t)-1$ and so there is an integer $s$ such that $W(t)/(2c_i) \leq s < X(t)$. Thus, if $Y(t) > X(t)$ for some $t \geq t_0$, there must be some integer $s$ such that $W(t)/(2c_i) < s < Y(t)$.
Let event $A_s$ for integer $s$ be
$A_s = \{\exists t> t_0: W(t)/(2c_i) < s < Y(t)\}\,.
$
Freedman's Inequality gives that
\begin{align}
\bbP[A_s]  \leq \exp\left( \frac{-s}{2(2c_i + 1/3)}\right) = \exp\left( -\xi_i s\right)\,.
\end{align}
Define
$B_{s_0} = \{\text{there is no } s \in \bbZ > s_0
\text{ such that } A_s \text{ holds}\}$. First, note that
\begin{align}
    \sum_{s=0}^\infty \bbP[A_s] \leq \sum_{s=0}^\infty \exp(-\xi_i s) < \infty 
\end{align}
so by Borel-Cantelli, only finitely many events $A_s$ can happen (almost surely). Thus $B_{s_0}$ holds (almost surely) for some $s_0$.
We can then also compute 
\begin{align}
1 - \bbP[B_{s_0}] &\leq \sum_{s = s_0+1}^{\infty} \exp\left( -\xi_i s\right) = \frac{e^{- \xi_i (s_0+1) }} {e^{\xi_i} - 1}\,.
\end{align}
Then, if $B_{s_0}$ holds, for all $t$ such that
\begin{align}
s_0+1 &\leq g(t) \leq X(t) 
 \implies
t \geq g^{-1}(s_0 + 1)\label{eq::bound_t_s0_invert_log}
\end{align}
we know that $Z(t) > 0$ and the estimator is correct.
If we want
$\bbP[Z(t) < 0] \leq 1 - \bbP[B_{s_0}] \leq \delta$
then this 
gives
\begin{align}
g(t) \geq \frac{1}{\xi_i} \log \frac{1}{\delta (e^{\xi_i} - 1)}\,.
\end{align}
}
\end{proof}
\fi 

To use \Cref{prop::estimator_convergence_function}, we need to determine how quickly $\mu_i(t)$ or $\beta_i(t)$ approaches $0$.
Before doing a calculation of this, we show what happens if we use a worst-case bound on $\mu_i(t)$. Since $\mu_i(t-1) \in [\mindeg/t, 1-\mindeg/t]$, 
\ifarxiv 
we have
\begin{align}
    x(t) \geq \frac{\gamma_i-1}{\gamma_i} \frac{1}{2} \frac{\mindeg}{t}
\end{align}
as $\mu_i(t) (1-\mu_i(t)) \geq \frac{1}{2} \min(\mu_i(t),1-\mu_i(t))$. This 
\else 
this
\fi 
implies that for $t \geq 3$,
\begin{align} \label{eq::X-bound-below-inherent}
    X(t) \geq \frac{\gamma_i-1}{\gamma_i} \frac{1}{4} \mindeg \log(t) \,.
\end{align} 

\ifarxiv 
Using \eqref{eq::X-bound-below-inherent}, we know that $X(t) \geq 2$ when 
\begin{align}
    t \geq e^{8 \frac{\gamma_i}{\gamma_i-1}\mindeg^{-1}} \eqdef t_0\,.
    \label{eq::bound_t_from_Xt2}
\end{align}
\fi 
%
Then the estimator has converged with probability $\geq 1-\delta$ for all $t$ such that
\ifarxiv 
\begin{align}
&\frac{\gamma_i - 1}{4\gamma_i}\mindeg \log(t) \geq \frac{1}{\xi_i} \log \frac{1}{\delta (e^{\xi_i} - 1)} \label{eq::rate-estimation-equation}
\\ \implies & 
t \geq \left(\frac{1}{\delta(e^{\xi_i} - 1)} \right)^{\frac{4}{\mindeg} \frac{\gamma_i}{\gamma_i - 1} \left(4 \frac{\gamma_i}{\gamma_i - 1} + \frac{2}{3} \right) } = \Theta\left((1/\delta)^c\right)\label{eq::rate_worst_case_mu}
\end{align}
\else 
\begin{align}
t \geq \left(\frac{1}{\delta(e^{\xi_i} - 1)} \right)^{\frac{4}{\mindeg} \frac{\gamma_i}{\gamma_i - 1} \left(4 \frac{\gamma_i}{\gamma_i - 1} + \frac{2}{3} \right) } = \Theta\left((1/\delta)^c\right)\label{eq::rate_worst_case_mu}
\end{align}
\fi 
%
where $c$ is a constant depending on $\mindeg$ and $\gamma_i$.
Thus, we know that for some $t^*$ on the order of $(1/\delta)^c$, the inherent belief estimator converges by time $t^*$ with probability at least $1-\delta$. 

However, since \eqref{eq::X-bound-below-inherent} is a lower bound, corresponding to using the lower bound of $\Theta(1/t)$ for $\mu_i(t)$, the computed convergence rate \eqref{eq::rate_worst_case_mu} is too low. This raises the question of improving it by using better bounds on $\mu_i(t)$, thus yielding a better bound of $X(t)$ in \eqref{eq::X-bound-below-inherent}. This can be divided into two cases: consensus and non-consensus. 

If the system does not approach consensus, 
then $X(t)$ is linear since 
%
for large enough $t$, $x(t)$ will be very close to a constant, and thus $X(t) \geq K t$ for some constant $K$. Using \Cref{prop::estimator_convergence_function} then gives that we converge for all $t$ such that
\begin{align}
t \geq \max\left\{ \frac{2}{K}, \frac{1}{K} \frac{1}{\xi_i} \log \frac{1}{\delta (e^{\xi_i} - 1)}\right\} = \Theta\left( \log \frac{1}{\delta} \right)\label{eq::order_rate_for_interior}
\end{align}
(where $\xi_i = \frac{1}{4c_i + 2/3}$ as defined in \Cref{prop::estimator_convergence_function}).

When the system approaches consensus $X(t)$ will be sublinear as $\mu_i(t) \to 0$ as $t \to \infty$; however, by analyzing the rate of convergence to consensus, we will obtain a better bound on $\mu_i(t)$ than $\Theta(1/t)$, which will yield a more precise convergence rate for the inherent belief estimator. 

\revise{For the consensus case, \Cref{fig::delta_plot} shows an experiment approximating $\delta$ for given $t^*$.}

\begin{figure}
\centering
\includegraphics[scale = .5]{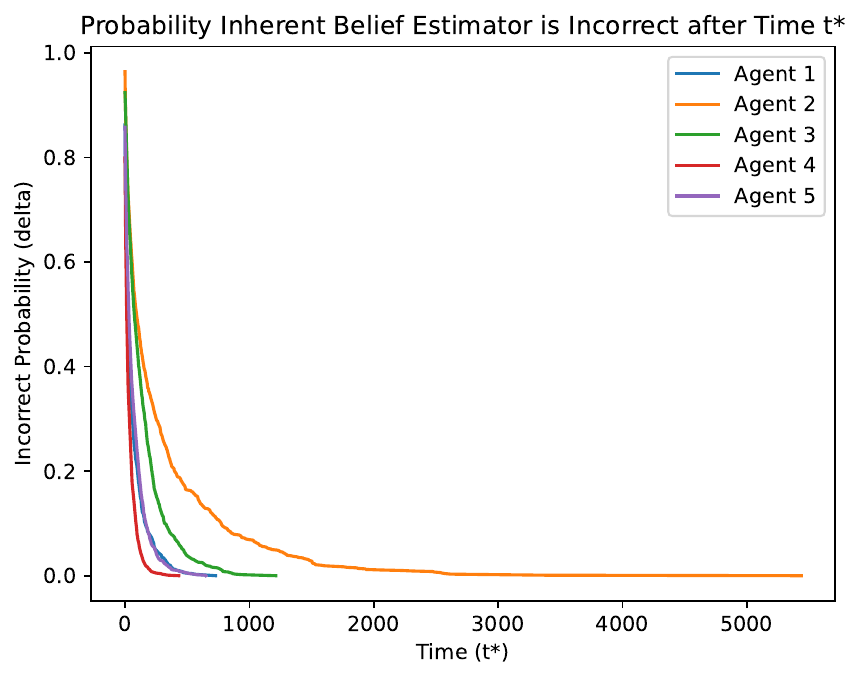}
\caption{\label{fig::delta_plot}
\revise{Empirical plot (over $1000$ experiments), for each agent $i$ of the network in \Cref{fig::5-node-graph}, of the probability $\delta_i(t^*)$ that the MLE inherent belief estimator \eqref{eq::steve} is wrong for some $t \geq t^*$. Each experiment was run for $100000$ steps; the curves are cut off at the first $t^*$ such that \eqref{eq::steve} was always correct for all $t^* \leq t \leq 100000$ (i.e. $\delta = 0$ empirically).}}
\end{figure}


\section{Bounds on Rate of Convergence to Consensus}

\label{sec::rate_converge_consensus}

\revise{In this section, we look at what the rate of convergence to consensus is, i.e. how quickly the time-averaged declared opinions $\beta_i(t)$ for each agent $i$ converges to $0$ or $1$ in the case of consensus.
The ultimate goal is to be able to characterize the probability of error of inherent belief estimator \eqref{eq::steve} using \Cref{prop::estimator_convergence_function}, illustrating how quickly we should expect the estimation of inherent beliefs to be correct.}

Consensus occurs when either $ \lambda_{\max}(\dermat{\bzero})$ or $\lambda_{\max}(\dermat{\bone})$ is less than $1$ 
(see \cite{opiniondynamicsCDC})
. For this section, WLOG suppose that 
$\lambda_{\max}(\dermat{\bzero}) < 1$ meaning that consensus to $\bzero$ occurs. \revise{We will show that in the case of consensus,  the convergence rate of $\beta_i(t)$ and the estimator $\eqref{eq::steve}$ is a function of $ \lambda_{\max}(\dermat{\bzero})$. This eigenvalue is how the network structure influences these quantities.}

We define that $\beta_i(t)$ for agent $i$ converges to $0$ at a rate of $t^r$ if for every constant $\epsilon > 0$, we have that
  \begin{align}
        \lim_{t \to \infty} \frac{\beta_i(t)}{t^{r+\epsilon}} &= 0 
        ~~\text{ and }~~ \lim_{t \to \infty} \frac{\beta_i(t)}{t^{r- \epsilon}} = \infty\,.
    \end{align}
(We note that this definition does exclude subpolynomial terms,  for instance, $t^r \log t$ and $t^r / \log t$ will both satisfy the conditions.) The goal is to show that the rate $r = \lambda - 1$. 

\begin{figure}
    \centering
    \includegraphics[scale = .5]{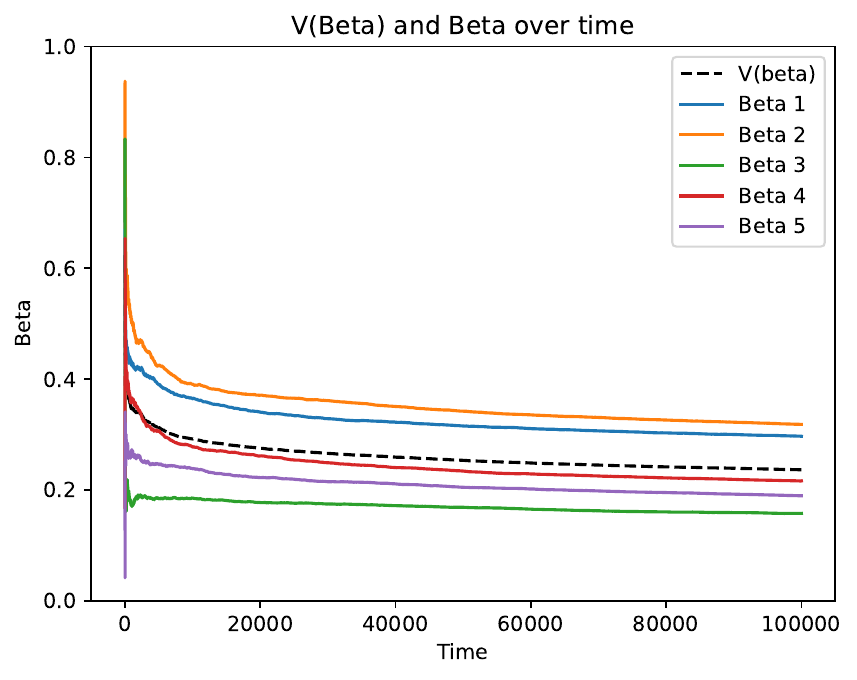}
    \caption{\label{fig::beta_decay}
    \revise{Plot illustrating example evolution of $V(\bbeta(t))$ and $\beta_i(t)$ for each agent $i$ on the network in \Cref{fig::5-node-graph}.}}
\end{figure}

\revise{In this section, first, we define a function $V$ which is a scalar function of $\bbeta$. We define linear processes $h$ which can be used to upper and lower bound $V(\bbeta(t))$. We then normalize process $h$ to show the rate of convergence of $V(\bbeta(t))$. Afterwards, we show that convergence rate of each $\beta_i(t)$ is a constant multiplied by the convergence rate $V(\bbeta(t))$. Finally, using the convergence rate of $\beta_i(t)$ with \Cref{prop::estimator_convergence_function} gives a bound on the convergence rate of inherent belief estimate \eqref{eq::steve}. An illustration of the decay rate of $\beta_i(t)$ and of $V(\bbeta(t))$ for a given experiment is shown in \Cref{fig::beta_decay}.}

\revise{\subsection{Definition of $V$ and Linearized Processes}}

Let $\lambda = \lambda_{\max}(\dermat{\bzero})$.
 Let $\bv$ be the associated (left) eigenvector of eigenvalue $\lambda$. Since $\dermat{\bzero}$ is irreducible (since $\bW$ is irreducible) and a nonnegative matrix, the 
Perron-Frobenius theorem \cite{Berman1994} implies that  
$\bv$ is a positive eigenvector. Scale $\bv$ so that $\bv^\ltop \bone = 1$. 
Let
\begin{align}
\label{eq::V_linear_def}
V(\bbeta) = \bv ^{\ltop} \bbeta\,.
\end{align}

For convenience, we represent the declared opinions at a given step $t$ as a (random) vector $\bpsi(\bbeta(t))$ depending on the state $\bbeta(t)$. Thus the dynamics follow the update
\begin{align}
\label{eq::beta_formula_psi}
    \bbeta(t+1) &= \frac{t}{t+1}\bbeta(t) + \frac{1}{t+1} \bpsi(\bbeta(t))
\end{align}
and $\bpsi(\bbeta(t))$ is a vector with $i$th component given by
\begin{align}
    \psi_i(\bbeta(t)) = \begin{cases}
    1 & \text{w.p. }  f(\mu_i(t), \gamma_i) \\
    0 & \text{w.p. }  1 - f(\mu_i(t), \gamma_i)
    \end{cases}\,.
\end{align}

Since consensus to $\bzero$ occurs, this implies that $\lim_{t\to \infty}V(\bbeta(t)) = 0$.
\ifarxiv
We will show that the rate at which $V(\bbeta(t))$ approaches $0$ is also the rate at which each $\beta_i(t) \to 0$ up to a constant factor.
\fi

To analyze the convergence of $V(\bbeta(t))$, we will define linear functions which we can use to upper and lower bound the value of $V(\bbeta(t)$ when $\bbeta(t)$ is in a certain region.



\newcommand{\consttwo}{\ensuremath{\zeta}}

For $\consttwo > 0$, let random vector $\bar{\bpsi}(\bbeta(t), \consttwo)$ be such that the $i$th component is given by 
\begin{align}
    \bar{\psi}_i(\bbeta(t), \consttwo) = \begin{cases} \begin{cases} 1 &\text{w.p. } \consttwo \gamma_i \mu_i(t) \\ 0 &\text{w.p. } 1 - \consttwo \gamma_i \mu_i(t) \end{cases} &\text{if } \consttwo \gamma_i \mu_i(t) \leq 1 \\ \consttwo \gamma_i \mu_i(t)  &\text{if } \consttwo \gamma_i \mu_i(t) > 1 \end{cases} \label{eq::linearized-declaration}
\end{align}
(The case where $\consttwo \gamma_i \mu_i(t) > 1$ is a technicality we need to consider)
where $\bar{\bpsi}(\bbeta(t), \consttwo)$ is also maximally coupled with ${\bpsi}(\bbeta(t))$. 
 This means that the joint distribution between $\bar{\bpsi}(\bbeta(t), \consttwo)$ and ${\bpsi}(\bbeta(t))$ is such that
\begin{align}
\bbP[\bar{\psi}_i(\bbeta(t), \consttwo) = {\psi}_i(\bbeta(t))]
\end{align}
is maximized. 
\ifarxiv 
Note that by \eqref{eq::linearized-declaration},
\begin{align}
    \bbE[\bar{\psi}_i(\bbeta(t),\consttwo) \,|\, \mu_i(t)] &= \consttwo \gamma_i \mu_i(t) ~\text{ and}
    \\ \var[\bar{\psi}_i(\bbeta(t),\consttwo) \,|\, \mu_i(t)] &\leq \consttwo \gamma_i \mu_i(t) 
    \label{eq::variance_bound_psi}
    \,.
\end{align}
where the variance term follows from
\begin{align}
    \var[\bar{\psi}_i(\bbeta(t),\consttwo) \,|\, \mu_i(t)] &= \max(\consttwo \gamma_i \mu_i(t) (1 - \consttwo \gamma_i \mu_i(t)), 0) \,.
\end{align}
\else 
\fi 

\begin{definition}[Linearized Process]
Given $t_0$ and $\bbeta(t_0)$, for each constant value $\alpha > 0$ define the stochastic process $h^{\alpha}(t)$ for $t\geq t_0$ to have the following \revisetwo{values which determine its joint distribution with random stochastic process $\bbeta(t)$}:
\begin{align}
h^{\alpha}(t_0) &= V(\bbeta(t_0))
\\h^{\alpha}(t+1) 
&= \frac{t}{t+1}  h^{\alpha}(t) + \frac{1}{t+1} \bv^{\ltop}\bar{\bpsi}\left(\bbeta(t), \alpha \frac{h^{\alpha}(t)}{V(\bbeta(t))}\right)\,.
\end{align}
\end{definition}

While in the above $h^{\alpha}(t)$ is defined with respect to random process $\bbeta(t)$, the marginal distribution of $h^{\alpha}(t)$ will have expectation and variance bounds that does not depend on $\bbeta(t)$.
The processes $h^{\alpha}(t)$ will be used as linear upper and lower bounds to $V(\bbeta(t))$. 
Let us define
\begin{align}
R(t + 1, \constone) = \prod_{\tau = 0}^t \frac{\tau+\constone}{\tau+1}
\end{align}
for $\constone \in (0,1)$ and time $t$. We then get the following result:

\begin{lemma} \label{lem::bounds-on-R}
\begin{align}
\frac{1}{\Gamma(\constone) (t+1)^{1 - \constone}} \leq R(t, \constone) \leq \frac{1}{\Gamma(\constone) (t)^{1 - \constone}}
\end{align}
\end{lemma}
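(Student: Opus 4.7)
The plan is to convert the product $R(t, \constone)$ into a ratio of Gamma functions and then apply log-convexity of $\Gamma$ to obtain both bounds simultaneously. First I would telescope the product using $\tau + \constone = \Gamma(\tau+1+\constone)/\Gamma(\tau+\constone)$ and $\tau+1 = \Gamma(\tau+2)/\Gamma(\tau+1)$, so that
\[
R(t, \constone) = \prod_{\tau=0}^{t-1} \frac{\tau+\constone}{\tau+1} = \frac{\Gamma(t+\constone)}{\Gamma(\constone)\, \Gamma(t+1)}
\]
(a quick sanity check at $t=1$: $R(1,\constone) = \constone = \Gamma(1+\constone)/(\Gamma(\constone)\Gamma(2))$). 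After multiplying through by $\Gamma(\constone)$, the claimed inequality becomes
\[
\frac{1}{(t+1)^{1-\constone}} \;\leq\; \frac{\Gamma(t+\constone)}{\Gamma(t+1)} \;\leq\; \frac{1}{t^{1-\constone}}\,.
\]

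Next I would derive both bounds from the fact that $\log \Gamma$ is convex on $(0,\infty)$ (Bohr--Mollerup). For the upper bound, use the convex-combination identity $t + \constone = (1-\constone)\, t + \constone\,(t+1)$, which gives
\[
\log \Gamma(t+\constone) \;\leq\; (1-\constone)\log\Gamma(t) + \constone \log\Gamma(t+1)\,.
\]
Exponentiating and applying $\Gamma(t+1) = t\, \Gamma(t)$ yields $\Gamma(t+\constone) \leq \Gamma(t+1)\, t^{\constone-1}$, which is exactly the desired upper bound. For the lower bound, use the identity $t + 1 = \constone\,(t+\constone) + (1-\constone)\,(t+1+\constone)$, so that by convexity
\[
\log \Gamma(t+1) \;\leq\; \constone \log\Gamma(t+\constone) + (1-\constone) \log\Gamma(t+1+\constone)\,.
\]
Applying the functional equation $\Gamma(t+1+\constone) = (t+\constone)\,\Gamma(t+\constone)$ collapses this to $\Gamma(t+1) \leq \Gamma(t+\constone)\,(t+\constone)^{1-\constone}$, and since $\constone \in (0,1)$ we have $(t+\constone)^{1-\constone} \leq (t+1)^{1-\constone}$, which delivers the lower bound.

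There is no real obstacle: the main step is just picking the two correct convex-combination identities (one placing $t+\constone$ between $t$ and $t+1$, the other placing $t+1$ between $t+\constone$ and $t+1+\constone$) so that the functional equation of $\Gamma$ can eliminate the off-integer factors. If one prefers to avoid citing log-convexity, an equivalent route is to invoke Gautschi's inequality $x^{1-\constone} < \Gamma(x+1)/\Gamma(x+\constone) < (x+1)^{1-\constone}$ with $x=t$, which gives the bounds directly after taking reciprocals.
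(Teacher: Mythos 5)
Your proposal is correct and takes essentially the same route as the paper: rewrite $R(t,\constone)$ as $\Gamma(t+\constone)/\big(\Gamma(\constone)\Gamma(t+1)\big)$ and apply Gautschi's inequality. The paper simply cites Gautschi's inequality for $\constone \in (0,1)$, while you additionally supply a self-contained derivation of it from log-convexity of $\Gamma$ (and correctly note at the end that citing Gautschi directly suffices).
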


\ifarxiv 
\begin{proof}
First,
\begin{align}
R(t, \constone) = \frac{\Gamma(t+\constone)}{\Gamma(\constone) \Gamma(t+1)}\,.
\end{align}
Since $\constone \in (0,1)$, using Gautschi's inequality gives that
\begin{align}
\frac{1}{(t+1)^{1-\constone}}\leq \frac{\Gamma(t+\constone)}{\Gamma(t+1)} \leq \frac{1}{t^{1-\constone}}
\end{align}
Including the $\Gamma(\constone)$ term in the denominator gives the result.
\end{proof}
\else 
\begin{proof}
This follows from Gautschi's inequality.
\end{proof}
\fi 

\begin{lemma}\label{lem::prop_h-t}
Some properties of $h^\alpha(\cdot)$ are:
\begin{enumerate}[(a)]
\item 
\label{item::vb_bounded_h}
For any $\epsilon > 0$, there is a time $t_0$, where for $t>t_0$, $\bbeta(t)$ remains in a disc of radius $r(\epsilon)$ from $\bzero$. Then the trajectory $V(\bbeta(t))$ is such that there exists an $1<\alpha_{+}<1+\epsilon$  and $1 - \epsilon < \alpha_{-} <1$ such that
\begin{align}
h^{\alpha_{-}}(t)\leq V(\bbeta(t)) \leq h^{\alpha_{+}}(t)
\end{align}
for $t>t_0$.

\item  \label{item::h-t-expectation}
For $t \geq t_1$ and any $\alpha > 0$ where $\alpha \lambda < 1$,
\begin{align}
\bbE[h^{\alpha}(t) | h^{\alpha}(t_1)] &= \frac{R(t, \alpha\lambda)}{R(t_1, \alpha\lambda)} h(t_1) 
\\ &= (1 + o(1)) \frac{t^{\alpha\lambda - 1}}{t_1^{\alpha\lambda - 1}} h(t_1)
\end{align}
\item  \label{item::h-t-variance}
For sufficiently large $t_1$ there is a constant $c$ where for $t>t_1$ and any $\alpha > 0$,
\begin{align}
\var[h^{\alpha}(t) | h^{\alpha}(t_1)] \leq c \frac{t^{2 \alpha\lambda - 2}}{t_1^{2\alpha\lambda - 1}} h^{\alpha}(t_1)
\end{align}
\end{enumerate}
\end{lemma}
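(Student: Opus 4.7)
The plan is to tackle the three parts in the order (b), (c), (a), since (b) is an almost mechanical moment computation that feeds directly into (c), while (a) is the most delicate and uses the spectral setup more subtly.

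For (b), I would take one-step conditional expectations and iterate. From the definition of $\bar{\bpsi}$ we have $\bbE[\bar{\psi}_i(\bbeta(t),\consttwo)\mid \mu_i(t)] = \consttwo\gamma_i\mu_i(t)$, so setting $\consttwo = \alpha h^{\alpha}(t)/V(\bbeta(t))$ and using $\mu_i(t) = (\bW\bbeta(t))_i$ (so $\gamma_i\mu_i(t)$ is the $i$th entry of $\dermat{\bzero}\bbeta(t)$) gives
\begin{align*}
\bbE\bigl[\bv^{\ltop}\bar{\bpsi}\mid h^{\alpha}(t),\bbeta(t)\bigr] \,=\, \alpha\,\frac{h^{\alpha}(t)}{V(\bbeta(t))}\,\bv^{\ltop}\dermat{\bzero}\bbeta(t) \,=\, \alpha\lambda\, h^{\alpha}(t),
\end{align*}
since $\bv^{\ltop}\dermat{\bzero} = \lambda\bv^{\ltop}$. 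Substituting into the update for $h^{\alpha}$ yields the one-step recursion $\bbE[h^{\alpha}(t+1)\mid h^{\alpha}(t)] = \frac{t+\alpha\lambda}{t+1}h^{\alpha}(t)$, and telescoping from $t_1$ produces the product $R(t,\alpha\lambda)/R(t_1,\alpha\lambda)$. The asymptotic $t^{\alpha\lambda-1}/t_1^{\alpha\lambda-1}$ form is then immediate from \Cref{lem::bounds-on-R} (which requires $\alpha\lambda\in(0,1)$).

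For (c), the approach is the law of total variance combined with an inductive ansatz. Since the $\bar{\psi}_i$ are independent across $i$ conditional on the state, \eqref{eq::variance_bound_psi} gives $\var[\bv^{\ltop}\bar{\bpsi}\mid h^{\alpha}(t),\bbeta(t)] \leq \sum_i v_i^2\consttwo\gamma_i\mu_i(t) \leq (\max_i v_i)\alpha\lambda\, h^{\alpha}(t)$, repeating the spectral calculation from (b). Writing $V_t \eqdef \var[h^{\alpha}(t)\mid h^{\alpha}(t_1)]$, the law of total variance plus (b) gives
\begin{align*}
V_{t+1} \,\leq\, \left(\tfrac{t+\alpha\lambda}{t+1}\right)^{\!2} V_t \,+\, \tfrac{C}{(t+1)^2}\,\bbE\bigl[h^{\alpha}(t)\mid h^{\alpha}(t_1)\bigr].
\end{align*}
Substituting the estimate from (b) for the forcing term and using the ansatz $V_t \leq A\,t^{2\alpha\lambda-2}\,t_1^{-(2\alpha\lambda-1)}\,h^{\alpha}(t_1)$, I would verify the inductive step by a straightforward ratio calculation: the homogeneous factor $((t+\alpha\lambda)/(t+1))^2$ matches the leading behavior of $((t+1)/t)^{2\alpha\lambda-2}$ and the $1/(t+1)^2$ prefactor on the driving term matches the extra power of $t^{-1}$ between $\bbE[h^{\alpha}(t)]\sim t^{\alpha\lambda-1}$ and the target $V_t\sim t^{2\alpha\lambda-2}$, so $A$ can be chosen as an absolute constant.

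For (a), the hard part, the idea is to exploit the first-order expansion $f(\mu,\gamma) = \gamma\mu(1 - (\gamma-1)\mu + O(\mu^2))$: for $\bbeta(t)$ in a sufficiently small disc around $\bzero$, each $\mu_i(t)$ is small enough that
\begin{align*}
(1-\epsilon)\gamma_i\mu_i(t) \,\leq\, f(\mu_i(t),\gamma_i) \,\leq\, (1+\epsilon)\gamma_i\mu_i(t).
\end{align*}
I would pick $\alpha_- \in (1-\epsilon,1)$ and $\alpha_+\in(1,1+\epsilon)$ and then argue coupling by induction on $t$: suppose $h^{\alpha_+}(t) \geq V(\bbeta(t))$. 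Then the Bernoulli parameter driving $\bar{\psi}_i$ in the update of $h^{\alpha_+}$ equals $\alpha_+(h^{\alpha_+}(t)/V(\bbeta(t)))\gamma_i\mu_i(t) \geq (1+\epsilon)\gamma_i\mu_i(t) \geq f(\mu_i(t),\gamma_i)$, so the maximal coupling of $\bar{\psi}_i$ with $\psi_i$ yields $\bar{\psi}_i \geq \psi_i$ almost surely. Summing with the positive weights $v_i$ and plugging into the updates for $h^{\alpha_+}(t+1)$ and $V(\bbeta(t+1))$ preserves the ordering. The lower bound via $\alpha_-$ is symmetric. The main obstacle I expect is handling the degenerate branch of \eqref{eq::linearized-declaration} where $\consttwo\gamma_i\mu_i(t)>1$, since the linearized variable is then deterministic and the maximal coupling needs a separate argument; but on the disc of radius $r(\epsilon)$ this case does not occur once $t$ is large, so it will be absorbed into the choice of $t_0$.
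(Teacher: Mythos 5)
Your plan follows the paper's own proof closely in all three parts: part (b) is the one-step conditional expectation plus the spectral identity $\bv^{\ltop}\bGamma\bW = \lambda\bv^{\ltop}$ and telescoping; part (c) is law of total variance with the same one-step variance bound (the paper telescopes the recursion and estimates the sum by an integral, whereas you verify an inductive ansatz — equivalent ways to solve the same linear recursion); part (a) is the same maximal-coupling induction anchored on the linearization of $f$ near $\bzero$.

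The only thing I'd flag is a small slip in part (a). You write, under the inductive hypothesis $h^{\alpha_+}(t)\geq V(\bbeta(t))$,
\begin{align}
\alpha_+\frac{h^{\alpha_+}(t)}{V(\bbeta(t))}\gamma_i\mu_i(t) \;\geq\; (1+\epsilon)\gamma_i\mu_i(t) \;\geq\; f(\mu_i(t),\gamma_i)\,,
\end{align}
but the first inequality requires $\alpha_+\cdot h^{\alpha_+}(t)/V(\bbeta(t))\geq 1+\epsilon$, which does not follow from $h^{\alpha_+}(t)/V(\bbeta(t))\geq 1$ when $\alpha_+<1+\epsilon$. The correct chain stops at $\geq \alpha_+\gamma_i\mu_i(t)$, and what you actually need from the disc is $f(\mu_i,\gamma_i)\leq \alpha_+\gamma_i\mu_i$ for that same $\alpha_+$ (not $f\leq(1+\epsilon)\gamma_i\mu_i$, which is a weaker statement). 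The paper handles this by simply asserting that $\alpha_\pm$ exist with $\alpha_-\gamma_i\mu_i\leq f\leq\alpha_+\gamma_i\mu_i$ on the disc, and that $\alpha_\pm\to 1$ as the radius shrinks, so one can pick $r$ small enough that $\alpha_\pm$ fall in $(1-\epsilon,1+\epsilon)$. Once you rephrase your disc bound as $f\leq\alpha_+\gamma_i\mu_i$ for your chosen $\alpha_+$ (shrinking the disc as needed), the coupling induction is exactly the paper's and goes through. Your observation about the degenerate branch $\consttwo\gamma_i\mu_i>1$ being absorbed into the choice of $t_0$ is correct and is left implicit in the paper.
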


\ifarxiv 

\begin{proof}
\emph{Proof of Part (\ref{item::vb_bounded_h})}:

From \cite[Theorem 2]{opiniondynamicsCDC}, we know that $\bbeta(t)$ almost surely converges to an equilibrium point. In this case, that equilibrium point is $\bzero$ and thus there must be some time $t_0$ after which which $\bbeta(t)$ lies in a disc around $\bzero$.
When $\bbeta(t)$ is within a disc of radius $r$ of $\bzero$, each $\mu_i(t)$ must also be within $r$ of $0$.

There are constants $\alpha_{-}$ and $\alpha_{+}$ so that for each agent $i$,
\begin{align}
\alpha_{-} \gamma_i \mu_i \leq f(\mu_i, \gamma_i) \leq \alpha_{+} \gamma_i \mu_i
\end{align}
for all $\mu_i \leq r$.
(The values $\alpha_{-}$  and $\alpha_{+}$ get closer to $1$ as $r$ decreases.) 
We focus on showing $h^{\alpha_{-}}(t)\leq V(\bbeta(t))$ (showing the other case is symmetric).

First, by definition $h^{\alpha_{-}}(t_0) = V(\bbeta(t_0))$.
If we assume that $h^{\alpha_{-}} (t)\leq V(\bbeta(t))$, then
\begin{align}
&h^{\alpha_{-}}(t+1) 
\\&= \frac{t}{t+1}  h^{\alpha_{-}}(t) + \frac{1}{t+1} \bv^{\ltop}\bar{\bpsi}\left(\bbeta(t), {\alpha_{-}} \frac{h^{\alpha_{-}}(t)}{V(\bbeta(t))}\right)
\\&\leq \frac{t}{t+1}  V(\bbeta(t)) + \frac{1}{t+1} \bv^{\ltop}\bar{\bpsi}\left(\bbeta(t), {\alpha_{-}} \frac{h^{\alpha_{-}}(t)}{V(\bbeta(t))}\right)
\end{align}
Since $h^{\alpha_{-}} (t)\leq V(\bbeta(t))$,  random variable
\begin{align}\label{eq::tilde_psi_i_alpha+minus}
\bar{\psi}_i\left(\bbeta(t), {\alpha_{-}} \frac{h^{\alpha_{-}}(t)}{V(\bbeta(t))}\right)
\end{align}
 is $1$ with probability 
 \begin{align}
\alpha_{-} \frac{h^{\alpha_{-}}(t)}{V(\bbeta(t))}\gamma_i \mu_i(t) \leq \alpha_{-} \gamma_i \mu_i(t) \leq f(\mu_i, \gamma_i)
\end{align}
where $f(\mu_i, \gamma_i)$ is the probability of $\psi_{i}(\bbeta(t)) = 1$. Because the two variables are maximally coupled, every instance when \eqref{eq::tilde_psi_i_alpha+minus} is $1$, it must also be that $\psi_{i}(\bbeta(t)) = 1$. Then
\begin{align}
\bar{\psi}_i\left(\bbeta(t), {\alpha_{-}} \frac{h^{\alpha_{-}}(t)}{V(\bbeta(t))}\right) \leq \psi_{i}(\bbeta(t))
\end{align}
and thus
\begin{align}
h^{\alpha_{-}}(t+1)
&\leq \frac{t}{t+1}  V(\bbeta(t)) + \frac{1}{t+1} \bv^{\ltop} \bpsi(\bbeta(t))
\\& = V(\bbeta(t+1))\,.
\end{align}

\noindent \emph{Proof of Part (\ref{item::h-t-expectation})}:
We will use proof by induction. For the base case we trivially have
\begin{align}
\bbE&[h^{\alpha}(t_1) | h^{\alpha}(t_1)] =  h^{\alpha}(t_1) = \frac{R(t_1, \alpha\lambda)}{R(t_1, \alpha\lambda)}  h^{\alpha}(t_1) \,.
\end{align}

For the inductive step, we can assume that
\begin{align}
\bbE&[h^{\alpha}(t) | h^{\alpha}(t_1)] = \frac{R(t, \alpha\lambda)}{R(t_1, \alpha\lambda)} h^{\alpha}(t_1) \,.
\end{align}
Then we consider $t+1$:
\ifsixteen
\begin{align}
&\bbE[h^{\alpha}(t+1) | h^{\alpha}(t_1)] 
= \frac{t}{t+1} \bbE[h^{a}(t)| h^{\alpha}(t_1)]
\eqlinebreakshort
+ \frac{1}{t+1}\bbE\left[\bv^{\ltop}\bar{\bpsi}\left(\bbeta(t), {\alpha} \frac{h^{\alpha}(t)}{V(\bbeta(t))}\right)\bigg| h^{\alpha}(t_1)\right]\,.
\end{align}
The expectation in the second term is equivalent to 
\begin{align}
& \bbE\left[\bv^{\ltop}\frac{h^{\alpha}(t)}{V(\bbeta(t))}\alpha \bGamma \bW \bbeta(t) \bigg| h^{\alpha}(t_1)\right]
\\ & = \bbE\left[\frac{h^{\alpha}(t)}{V(\bbeta(t))}{\alpha}\lambda \bv^{\ltop}\bbeta(t) \bigg| h^{\alpha}(t_1)\right]
\\& =  {\alpha}\lambda  \bbE[h^{\alpha}(t)| h^{\alpha}(t_1)]
\end{align}
which gives that
\begin{align}
\bbE&[h^{\alpha}(t+1) | h^{\alpha}(t_1)] 
\\& = \frac{t}{t+1} \bbE[h^{\alpha}(t)| h^{\alpha}(t_1)]
+ \frac{1}{t+1} {\alpha}\lambda  \bbE[h^{\alpha}(t)| h^{\alpha}(t_1)]
\\&= \frac{R(t + 1, \alpha\lambda)}{R(t_1, \alpha\lambda)} h(t_1) \,.
\end{align}
\else
\begin{align}
&\bbE[h^{\alpha}(t+1) | h^{\alpha}(t_1)] 
= \frac{t}{t+1} \bbE[h^{a}(t)| h^{\alpha}(t_1)]
\eqlinebreakshort
+ \frac{1}{t+1}\bbE\left[\bv^{\ltop}\bar{\bpsi}\left(\bbeta(t), {\alpha} \frac{h^{\alpha}(t)}{V(\bbeta(t))}\right)\bigg| h^{\alpha}(t_1)\right]
\\& = \frac{t}{t+1}  \bbE[h^{a}(t)| h^{\alpha}(t_1)]
\eqlinebreakshort
+ \frac{1}{t+1}\bbE\left[\bv^{\ltop}\frac{h^{\alpha}(t)}{V(\bbeta(t))}\alpha \bGamma \bW \bbeta(t) \bigg| h^{\alpha}(t_1)\right]
\\& = \frac{t}{t+1}  \bbE[h^{a}(t)| h^{\alpha}(t_1)]
\eqlinebreakshort
+ \frac{1}{t+1}\bbE\left[\frac{h^{\alpha}(t)}{V(\bbeta(t))}{\alpha}\lambda \bv^{\ltop}\bbeta(t) \bigg| h^{\alpha}(t_1)\right]
\\& = \frac{t}{t+1} \bbE[h^{\alpha}(t)| h^{\alpha}(t_1)]
+ \frac{1}{t+1} {\alpha}\lambda  \bbE[h^{\alpha}(t)| h^{\alpha}(t_1)]
\\&= \frac{R(t + 1, \alpha\lambda)}{R(t_1, \alpha\lambda)} h(t_1) 
\end{align}
\fi

%
%
Then by \Cref{lem::bounds-on-R}
\ifsixteen
we have
\else
, we have
\begin{align}
    \frac{(t+1)^{\alpha\lambda-1}}{t_1^{\alpha\lambda-1}} \leq \frac{R(t, \alpha\lambda)}{R(t_1, \alpha\lambda)} \leq \frac{t^{\alpha\lambda-1}}{(t_1+1)^{\alpha\lambda-1}}
\end{align}
Thus, 
\fi
for any $\epsilon > 0$, there is a sufficiently large $t_1$ such that for all $t \geq t_1$,
\begin{align}
    (1-\epsilon) \frac{t^{\lambda-1}}{t_1^{\lambda-1}} h^{\alpha}(t_1) \leq \bbE[h^{\alpha}(t) \,|\, h^{\alpha}(t_1)] \leq (1+\epsilon) \frac{t^{\lambda-1}}{t_1^{\lambda-1}} h^{\alpha}(t_1)\,.
\end{align}

\noindent \emph{Proof of Part (\ref{item::h-t-variance})}:

When we apply the law of total variance, we get that
\begin{align}
\var&[h^{\alpha}(t+1)| h^{\alpha}(t_1)] 
\\& = \bbE\left[\var[h^{\alpha}(t+1)| \cH_t, h^{\alpha}(t_1)] \big|h^{\alpha}(t_1) \right] 
\eqlinebreakshort
+ \var\left[ \bbE[h^{\alpha}(t+1)| \cH_t, h^{\alpha}(t_1)] \big| h^{\alpha}(t_1) \right] 
\label{eq::var-h-t-totalvar-result}
\,.
\end{align}

We compute the first term in \eqref{eq::var-h-t-totalvar-result} by starting with 
\begin{align}
\var&\left[\bar{\psi}_i\left(\bbeta(t), {\alpha} \frac{h^{\alpha}(t)}{V(\bbeta(t))}\right) \bigg| \cH_t,  h^{\alpha}(t_1)\right] 
\\ &\leq  {\alpha} \frac{h^{\alpha}(t)}{V(\bbeta(t))} \gamma_i \mu_i(t)\,.
\end{align}
where we used \eqref{eq::variance_bound_psi}.

Let $c_0 = \max_i\{v_i\}$. Then we can compute
\begin{align}
&\var[h^{\alpha}(t+1)| \cH_t, h^{\alpha}(t_1)] 
\\&= \var\left[\frac{1}{t+1} \bv^{\ltop}\bar{\bpsi}\left(\bbeta(t), \alpha \frac{h^{\alpha}(t)}{V(\bbeta(t))}\right) \bigg  | \cH_t, h^{\alpha}(t_1)\right]
\\&= \frac{1}{(t+1)^2}\sum_{i = 1}^{n} v_i^2 \var\left[\bar{\psi}_i\left(\bbeta(t), {\alpha} \frac{h^{\alpha}(t)}{V(\bbeta(t))}\right) \bigg| \cH_t,  h^{\alpha}(t_1)\right] 
\\& \leq \frac{1}{(t+1)^2}\sum_{i = 1}^{n} v_i^2 {\alpha} \frac{h^{\alpha}(t)}{V(\bbeta(t))} \gamma_i \mu_i(t)
\\& \leq \frac{1}{(t+1)^2} {\alpha} \frac{h^{\alpha}(t)}{V(\bbeta(t))} c_0 \lambda \bv^{\ltop} \bbeta(t)
 = \frac{c_0{\alpha}   \lambda}{(t+1)^2}  h^{\alpha}(t)\,.
\end{align}

To finish computing the first term in \eqref{eq::var-h-t-totalvar-result}, we have 
\begin{align}
\bbE&\left[\var[h^{\alpha}(t+1)| \cH_t, h^{\alpha}(t_1)] \big|h^{\alpha}(t_1) \right]
\\&\leq \bbE\left[\frac{c_0{\alpha}   \lambda}{(t+1)^2}  h^{\alpha}(t_1) \bigg| h^{\alpha}(t_1)  \right]
\\&\leq c_1 \alpha \lambda \frac{1}{t^2}\frac{t^{\alpha \lambda - 1}}{t_1^{\alpha \lambda - 1}} h^{\alpha}(t_1)
\end{align}
where we used the approximation result from \Cref{lem::prop_h-t}(\ref{item::h-t-expectation}) which is a correct upper bound for some constant $c_1$ and sufficiently large $t_1$. 
For the second term in \eqref{eq::var-h-t-totalvar-result}, the expectation inside the variance is 
\begin{align}
\bbE[h^{\alpha}(t+1)| \cH_t, h^{\alpha}(t_1)]
&= \frac{R(t+1,\alpha \lambda)}{R(t,\alpha \lambda)} h^{\alpha}(t)
\end{align}
and thus
\begin{align}
 \var&\left[ \bbE[h^{\alpha}(t+1)| \cH_t, h^{\alpha}(t_1)] \big| h^{\alpha}(t_1) \right]
 \\&= \left( \frac{R(t+1,\alpha \lambda)}{R(t,\alpha \lambda)} \right)^2 \var[h^{\alpha}(t)|h^{\alpha}(t_1)]\,.
\end{align}

Putting this together gets
\begin{align}
\var&[h^{\alpha}(t+1)|h^{\alpha}(t_1)] \leq
c_1 \alpha \lambda \frac{1}{t^2}\frac{t^{\alpha \lambda - 1}}{t_1^{\alpha \lambda - 1}} h^{\alpha}(t_1) 
\eqlinebreakshort
+ \left( \frac{R(t+1,\alpha \lambda)}{R(t,\alpha \lambda)} \right)^2 \var[h^{\alpha}(t)|h^{\alpha}(t_1)]\,.
\end{align}

Telescoping the variance terms yields:
\begin{align}
\var&[h^{\alpha}(t+1)|h^{\alpha}(t_1)] 
\\&\leq c_2 \alpha \lambda\sum_{\tau = t_1}^ {t}  \left(\frac{(t+1)^{\alpha \lambda - 1}}{\tau^{\alpha \lambda - 1}} \right)^2\frac{1}{\tau^2} \frac{\tau^{\alpha \lambda - 1}}{t_1^{\alpha \lambda - 1}} h^{\alpha}(t_1)
\\& = c_2 \alpha \lambda \frac{(t+1)^{2\alpha \lambda - 2}}{t_1^{\alpha \lambda - 1}} h^{\alpha}(t_1) \sum_{\tau = t_1} ^{t} \frac{1}{\tau^{\alpha \lambda + 1}}\,.
\end{align}

Approximating the sum using an integral then gives:
\begin{align}
\sum_{\tau = t_1}^{t}\frac{1}{\tau^{\alpha\lambda+ 1}} &\leq 
\int_{t_1}^{\infty} \frac{1}{\tau^{\alpha\lambda+ 1}}  d \tau
= 
\frac{1}{\alpha\lambda} t_1^{-\alpha\lambda}
\end{align}
which results in
\begin{align}
\var[h^{\alpha}(t) | h^{\alpha}(t_1)] &\leq c_2 \alpha \lambda \frac{t^{2\alpha(\lambda - 1)}}{t_1^{\alpha\lambda - 1}} h^{\alpha}(t_1) \frac{1}{\alpha\lambda} t_1^{-\alpha\lambda}
\\&= c_2 \frac{t^{2(\alpha\lambda - 1)}}{t_1^{\alpha\lambda - 1}} h^{\alpha}(t_1)  t_1^{-\alpha\lambda}\,.
\end{align}
where $c_2$ is a constant not depending on $\alpha$, $\lambda$, or $t$ (so long as $t_1$ is sufficiently large).
\end{proof}

\else 
\begin{proof}[Proof Sketch] 


\revisetwo{For Part (\ref{item::vb_bounded_h}), we know that there are constants $\alpha_{-}$ and $\alpha_{+}$ so that for each agent $i$,
\begin{align}
\alpha_{-} \gamma_i \mu_i \leq f(\mu_i, \gamma_i) \leq \alpha_{+} \gamma_i \mu_i
\end{align}
for all $\mu_i$ within some radius of $\bzero$. For $\alpha_{-}$, we show that $h^{\alpha_{-}}(t)\leq V(\bbeta(t))$ using induction (showing the other case is symmetric). Because of maximal coupling, we can use   
\begin{align}
\bar{\psi}_i\left(\bbeta(t), {\alpha_{-}} \frac{h^{\alpha_{-}}(t)}{V(\bbeta(t))}\right) \leq \psi_{i}(\bbeta(t))
\end{align}
to show the inductive step.}

\revisetwo{The first equation of part (\ref{item::h-t-expectation}) is shown with induction. The key inductive step is that
\begin{align}
&\bbE[h^{\alpha}(t+1) | h^{\alpha}(t_1)] 
= \frac{t}{t+1} \bbE[h^{a}(t)| h^{\alpha}(t_1)]
\eqlinebreakshort
+ \frac{1}{t+1}\bbE\left[\bv^{\ltop}\bar{\bpsi}\left(\bbeta(t), {\alpha} \frac{h^{\alpha}(t)}{V(\bbeta(t))}\right)\bigg| h^{\alpha}(t_1)\right]
\\ & = \frac{t}{t+1}  \bbE[h^{a}(t)| h^{\alpha}(t_1)]
\eqlinebreakshort
+ \frac{1}{t+1}\bbE\left[\frac{h^{\alpha}(t)}{V(\bbeta(t))}{\alpha}\lambda \bv^{\ltop}\bbeta(t) \bigg| h^{\alpha}(t_1)\right]
\\& = \frac{t}{t+1} \bbE[h^{\alpha}(t)| h^{\alpha}(t_1)]
+ \frac{1}{t+1} {\alpha}\lambda  \bbE[h^{\alpha}(t)| h^{\alpha}(t_1)]
\\&= \frac{R(t + 1, \alpha\lambda)}{R(t_1, \alpha\lambda)} h(t_1) 
\end{align}
The second equation uses \Cref{lem::bounds-on-R}.}

\revise{ Part (\ref{item::h-t-variance}) is shown using the law of total variance followed by telescoping the variance terms and approximating with an integral.}
\end{proof}

\fi 

\subsection{Normalized Convergence Process}

\revise{Using the linearized process, we will define a normalized process whose asymptotic convergence properties will be used to show our desired results.}
We will first use the results from \Cref{lem::prop_h-t} to get a bound on the variance of the linearized process in terms of the square of its expected value. For this bound, we need to use that at any time $t$, we expect that
$V(\bbeta(t)) > c \mindeg / t \eqdef  \mindeg^* /t$ (this is discussed in \eqref{sec::bounds_on_mu}).

\begin{lemma} \label{lem::var-expsq-ratio}
    For any sufficiently large $t_0$ and any $t > t_0$,
    \begin{align}
        \frac{\var[h^{\alpha}(t) \,|\, h^{\alpha}(t_0)]}{\bbE[h^{\alpha}(t) \,|\, h^{\alpha}(t_0)]^2} &\leq \frac{c}{h^{\alpha}(t_0) t_0} \leq c^*
    \end{align}
    for some constant $c^*$ which does not depend on $t$ or $t_0$.
\end{lemma}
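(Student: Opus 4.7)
The plan is to compute the ratio directly using the bounds from \Cref{lem::prop_h-t} parts (\ref{item::h-t-expectation}) and (\ref{item::h-t-variance}), and then apply the lower bound on $V(\bbeta(t_0))$ to convert the first bound into an absolute constant.

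First, I would use \Cref{lem::prop_h-t}(\ref{item::h-t-expectation}) to write $\bbE[h^{\alpha}(t) \mid h^{\alpha}(t_0)] \geq (1-o(1)) \frac{t^{\alpha\lambda - 1}}{t_0^{\alpha\lambda - 1}} h^{\alpha}(t_0)$, so that
\begin{align}
\bbE[h^{\alpha}(t) \mid h^{\alpha}(t_0)]^2 \geq (1-o(1))^2 \frac{t^{2\alpha\lambda - 2}}{t_0^{2\alpha\lambda - 2}} \bigl(h^{\alpha}(t_0)\bigr)^2.
\end{align}
Then using \Cref{lem::prop_h-t}(\ref{item::h-t-variance}), $\var[h^{\alpha}(t) \mid h^{\alpha}(t_0)] \leq c \frac{t^{2\alpha\lambda - 2}}{t_0^{2\alpha\lambda - 1}} h^{\alpha}(t_0)$. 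Dividing the variance bound by the squared expectation bound, the $t^{2\alpha\lambda - 2}$ factors cancel and one factor of $h^{\alpha}(t_0)$ cancels, leaving
\begin{align}
\frac{\var[h^{\alpha}(t) \mid h^{\alpha}(t_0)]}{\bbE[h^{\alpha}(t) \mid h^{\alpha}(t_0)]^2} \leq \frac{c}{t_0^{2\alpha\lambda-1}} \cdot \frac{t_0^{2\alpha\lambda - 2}}{h^{\alpha}(t_0)} \cdot (1+o(1)) = \frac{c'}{h^{\alpha}(t_0) \, t_0},
\end{align}
with an absorbed constant $c'$ that does not depend on $t$ or $t_0$ (provided $t_0$ is large enough that the $(1 - o(1))^{-2}$ factor is bounded).

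For the second inequality, I would invoke the lower bound on $V(\bbeta(t))$ from \Cref{sec::bounds_on_mu}, namely that there exists $\mindeg^* > 0$ such that $V(\bbeta(t_0)) \geq \mindeg^*/t_0$. Since by definition $h^{\alpha}(t_0) = V(\bbeta(t_0))$, this gives $h^{\alpha}(t_0) \, t_0 \geq \mindeg^*$, and hence $\frac{c'}{h^{\alpha}(t_0)\, t_0} \leq \frac{c'}{\mindeg^*} \eqdef c^*$, a constant independent of $t$ and $t_0$.

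There is no serious obstacle here; the statement is essentially a bookkeeping corollary of the previous lemma, with the only subtlety being verifying that the implicit $(1 + o(1))$ factors in the expectation bound can be absorbed into the constant $c'$ by choosing $t_0$ sufficiently large, and verifying that $V(\bbeta(t_0)) \geq \mindeg^*/t_0$ holds almost surely (which follows from the lower bound on $\mu_i$ propagated through the linear combination $\bv^\ltop \bbeta(t_0)$, using that $\bv$ is strictly positive).
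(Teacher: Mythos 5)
Your proposal matches the paper's proof: both divide the variance bound from \Cref{lem::prop_h-t}(\ref{item::h-t-variance}) by the squared expectation bound from \Cref{lem::prop_h-t}(\ref{item::h-t-expectation}), cancel the $t$-dependent factors to get $c/(h^{\alpha}(t_0)\,t_0)$, and then invoke $h^{\alpha}(t_0) = V(\bbeta(t_0)) = \Omega(1/t_0)$ to obtain the absolute constant $c^*$. Your extra care with the $(1+o(1))$ factors is fine but not a departure from the paper's argument.
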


\begin{proof}
    \begin{align}
        \frac{\var[h^{\alpha}(t) \,|\, h^{\alpha}(t_0)]}{\bbE[h^{\alpha}(t) \,|\, h^{\alpha}(t_0)]^2} &\leq 
        \frac{c\frac{t^{(\alpha\lambda - 1)2}}{t_0^{2\alpha\lambda - 1}} h^{\alpha}(t_0)}{\left(\frac{t^{\alpha\lambda - 1}}{t_0^{\alpha\lambda - 1}} h^{\alpha}(t_0) \right)^2}
        \\= \frac{c \,  t_0^{\alpha\lambda - 1}}{h^{\alpha}(t_0)} \frac{1}{t_0^{\alpha\lambda}}
        &= \frac{c}{h^{\alpha}(t_0) t_0}
        \leq c^*
    \end{align}
    where the last step uses  $h^{\alpha}(t_0) = V(\bbeta(t_0)) =  \Omega(t_0^{-1})$.
\end{proof}

\ifarxiv 
Note that $h^{\alpha}(t_0) \propto t_0^{-1}$ is a (guaranteed, not probabilistic) worst-case bound, and significantly worse than the expected $h^{\alpha}(t_0) \propto t_0^{\alpha \lambda-1}$. 
Note also
\else 
Note
\fi 
 that replacing $h^{\alpha}(t)$ on the left hand side by a scaled version $\rho \, h^{\alpha}(t)$ (where $\rho$ can depend on $t, t_0$ but not on the value of $h^{\alpha}(t)$ itself) will not change the bound, as it multiplies both the numerator and denominator by $\rho^2$. We therefore define a martingale $\bar{h}^{\alpha}(t)$ as follows:
\begin{definition}
    Given a $t_0$, consider the process $\bar{h}^{\alpha}(\cdot)$ starting from time $t_0$: $\bar{h}^{\alpha}(t_0) = h^{\alpha}(t_0)$; then for any $t > t_0$ we define the normalized convergence process as
    \begin{align}
        \bar{h}^{\alpha}(t) = \frac{R(t_0,\alpha\lambda)}{R(t,\alpha\lambda)} h^{\alpha}(t)\,.
    \end{align}
    We also define the random variable $\bar{h}_{\alpha}$ as follows:
  \begin{align}
        \bar{h}_{\alpha} =  \liminf_{t \to \infty} \bar{h}^{\alpha}(t)\,.
    \end{align}
\end{definition}

Then $\bar{h}^{\alpha}(t)$ is nonnegative, uniformly integrable, and is a martingale.
\begin{lemma} \label{lem::basic-bar-h-properties}
    The sequence  $\{\bar{h}^{\alpha}(t)\}_{t \geq t_0}$ is a uniformly integrable martingale and $\lim_{t \to \infty} \bar{h}^{\alpha}(t) = \bar{h}_{\alpha}$ almost surely. Furthermore, for any $t$, we have $\bar{h}^{\alpha}(t) = \bbE[\bar{h}_{\alpha} \,|\, \cH(t)]$.
\end{lemma}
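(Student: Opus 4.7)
The plan is to establish the three claims in order: the martingale property, uniform integrability, almost sure convergence, and finally the closure identity. All of the heavy lifting has already been done in \Cref{lem::prop_h-t} and \Cref{lem::var-expsq-ratio}; I just need to assemble the pieces.

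First, I would verify the martingale property by direct computation. By \Cref{lem::prop_h-t}(\ref{item::h-t-expectation}), $\bbE[h^{\alpha}(t+1)\,|\,\cH_t] = \frac{R(t+1,\alpha\lambda)}{R(t,\alpha\lambda)} h^{\alpha}(t)$. Multiplying by the normalizing factor $\frac{R(t_0,\alpha\lambda)}{R(t+1,\alpha\lambda)}$ cancels $R(t+1,\alpha\lambda)$ and leaves $\bbE[\bar{h}^{\alpha}(t+1)\,|\,\cH_t] = \bar{h}^{\alpha}(t)$. Nonnegativity is immediate from the nonnegativity of $h^{\alpha}(\cdot)$, so $\{\bar{h}^{\alpha}(t)\}_{t\geq t_0}$ is a nonnegative martingale with $\bbE[\bar{h}^{\alpha}(t)] = \bar{h}^{\alpha}(t_0) = h^{\alpha}(t_0)$.

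Next, for uniform integrability, I would prove the stronger fact that $\bar{h}^{\alpha}(t)$ is bounded in $L^2$. Multiplying both the numerator and denominator in \Cref{lem::var-expsq-ratio} by $(R(t_0,\alpha\lambda)/R(t,\alpha\lambda))^2$ leaves the ratio unchanged, so
\begin{align}
\frac{\var[\bar{h}^{\alpha}(t)\,|\,h^{\alpha}(t_0)]}{\bbE[\bar{h}^{\alpha}(t)\,|\,h^{\alpha}(t_0)]^2} \leq c^*\,.
\end{align}
Since $\bbE[\bar{h}^{\alpha}(t)\,|\,h^{\alpha}(t_0)] = h^{\alpha}(t_0)$ by the martingale property, this yields $\bbE[\bar{h}^{\alpha}(t)^2\,|\,h^{\alpha}(t_0)] \leq (1+c^*)\,h^{\alpha}(t_0)^2$ uniformly in $t$. $L^2$-boundedness implies uniform integrability, giving the UI property.

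Finally, the $L^2$ martingale convergence theorem gives a random variable $\tilde{h}$ with $\bar{h}^{\alpha}(t) \to \tilde{h}$ almost surely (and in $L^2$). Because a limit exists almost surely, $\liminf_{t\to\infty}\bar{h}^{\alpha}(t) = \lim_{t\to\infty}\bar{h}^{\alpha}(t) = \tilde{h}$ a.s., so $\tilde{h} = \bar{h}_{\alpha}$. The closure identity $\bar{h}^{\alpha}(t) = \bbE[\bar{h}_{\alpha}\,|\,\cH_t]$ is then the standard consequence of the closure theorem for uniformly integrable martingales applied to $\bar{h}^{\alpha}(t) \to \bar{h}_{\alpha}$ in $L^1$. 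The only delicate point is tracking the $R(t_0,\alpha\lambda)/R(t,\alpha\lambda)$ factors through the variance bound, but because this ratio is deterministic given $h^{\alpha}(t_0)$, it factors cleanly out of both the conditional variance and the conditional expectation, so no new obstacle arises.
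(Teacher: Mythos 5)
Your proof is correct and takes essentially the same route as the paper: martingale property from \Cref{lem::prop_h-t}(\ref{item::h-t-expectation}), $L^2$-boundedness from \Cref{lem::var-expsq-ratio} to get uniform integrability, and the martingale convergence theorem plus closure for the rest. The only (cosmetic) difference is ordering: the paper invokes the nonnegative-martingale convergence theorem directly for the almost sure limit and then shows $L^2$-boundedness separately for UI, whereas you establish $L^2$-boundedness first and derive both convergence and UI from it; you also spell out the closure identity $\bar{h}^{\alpha}(t) = \bbE[\bar{h}_{\alpha}\,|\,\cH(t)]$ explicitly, which the paper leaves implicit.
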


\begin{proof}
    The process $\{\bar{h}^{\alpha}(t)\}_{t \geq t_0}$ is a martingale
    due to \Cref{lem::prop_h-t}(\ref{item::h-t-expectation}).
    The Martingale Convergence Theorem shows that it converges to a well-defined (random) limit almost surely, since $\bar{h}^{\alpha}(t) \geq 0$ and
    by definition 
    \begin{align}
    \bbE[|\bar{h}^{\alpha}(t)|] = \bbE[\bar{h}^{\alpha}(t)] = \bar{h}^{\alpha}(t_0) < \infty
    \end{align}
    Note that this means that almost surely,
    \begin{align}
        \bar{h}_\alpha = \lim_{t \to \infty} \bar{h}^\alpha(t)
    \end{align}
    as the limit almost surely exists.
    
    Finally, \Cref{lem::var-expsq-ratio} (and the fact that $\{\bar{h}^{\alpha}(t)\}_{t \geq t_0}$ is a martingale) shows that $\bbE[|\bar{h}^{\alpha}(t)|^2]$ is bounded for all $t$, which implies uniform integrability by \cite[Section 13.3]{williams_1991}.
\end{proof}

Then \Cref{lem::var-expsq-ratio} yields the following:
\begin{corollary}
    For any sufficiently large $t^*$, for any $t > t^*$
    \begin{align}
        \bbP\bigg[\bar{h}^{\alpha}(t) \leq \frac{\bar{h}^{\alpha}(t^*)}{2}  \,\Big|\, \cH(t^*)\bigg] \leq \frac{c^*}{c^* + 1/4} 
    \end{align}
    This also implies that for any sufficiently large $t^*$,
    \begin{align} \label{eq::limit-bdd-too}
        \bbP\bigg[\bar{h}_{\alpha} \leq \frac{\bar{h}^{\alpha}(t^*)}{4}  \,\Big|\, \cH(t^*)\bigg] \leq \frac{c^*}{c^* + 1/8} 
    \end{align}
    where $c^*$ is the constant used in \Cref{lem::var-expsq-ratio}.
\end{corollary}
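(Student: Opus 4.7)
The plan is to establish both inequalities by a conditional application of Cantelli's (one-sided Chebyshev) inequality, using the variance-to-mean-squared ratio bound in \Cref{lem::var-expsq-ratio} together with the martingale property $\bbE[\bar{h}^\alpha(t)\mid \cH(t^*)] = \bar{h}^\alpha(t^*)$ from \Cref{lem::basic-bar-h-properties}. The key observation is that the bound in \Cref{lem::var-expsq-ratio} is scale-invariant, so rescaling $h^\alpha(t)$ to the martingale $\bar{h}^\alpha(t)$ by multiplying by the deterministic factor $R(t^*,\alpha\lambda)/R(t,\alpha\lambda)$ preserves the ratio. Hence for all $t>t^*$ with $t^*$ sufficiently large,
\begin{align}
\frac{\var[\bar{h}^\alpha(t)\mid \cH(t^*)]}{\bar{h}^\alpha(t^*)^2} \;\leq\; c^*.
\end{align}

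For the first inequality, I would apply Cantelli's inequality to the nonnegative random variable $X = \bar{h}^\alpha(t)$ (conditional on $\cH(t^*)$), whose conditional mean is $\bar{h}^\alpha(t^*)$. With deviation $a = \bar{h}^\alpha(t^*)/2$, Cantelli gives
\begin{align}
\bbP\bigl[\bar{h}^\alpha(t) \leq \tfrac{1}{2}\bar{h}^\alpha(t^*) \mid \cH(t^*)\bigr] \;\leq\; \frac{\var[\bar{h}^\alpha(t)\mid \cH(t^*)]}{\var[\bar{h}^\alpha(t)\mid \cH(t^*)] + \bar{h}^\alpha(t^*)^2/4}.
\end{align}
Since the function $x\mapsto x/(x+1/4)$ is increasing, the variance ratio bound $\leq c^*$ then yields the stated bound $c^*/(c^*+1/4)$.

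For the second inequality involving $\bar{h}_\alpha$, I would pass to the limit in two ways. First, the uniform integrability and $L^2$-boundedness established in \Cref{lem::basic-bar-h-properties} imply that $\bar{h}^\alpha(t) \to \bar{h}_\alpha$ in $L^2$; consequently the conditional variance satisfies $\var[\bar{h}_\alpha\mid \cH(t^*)] = \lim_{t\to\infty}\var[\bar{h}^\alpha(t)\mid \cH(t^*)] \leq c^*\,\bar{h}^\alpha(t^*)^2$, while the mean is still $\bar{h}^\alpha(t^*)$. Applying Cantelli again with deviation $a = (3/4)\bar{h}^\alpha(t^*)$ yields a bound of the form $c^*/(c^* + 9/16)$, which is stronger than the stated $c^*/(c^*+1/8)$. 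The bound $1/8$ appears to be a convenient loosening that absorbs any minor slack from the approximations in \Cref{lem::prop_h-t}(\ref{item::h-t-expectation}), (\ref{item::h-t-variance}) when replacing $R(t,\alpha\lambda)/R(t_0,\alpha\lambda)$ by its $t^{\alpha\lambda-1}/t_0^{\alpha\lambda-1}$ approximation; alternatively one can obtain the same form via Fatou applied to the events $\{\bar{h}^\alpha(t)\leq \bar{h}^\alpha(t^*)/4\}$, using $\bbP[\liminf_t E_t\mid \cH(t^*)] \leq \liminf_t \bbP[E_t\mid \cH(t^*)]$.

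The main technical obstacle I foresee is making the passage to the limit rigorous: one must verify that the variance bound from \Cref{lem::var-expsq-ratio} transfers to the limiting random variable $\bar{h}_\alpha$, which requires $L^2$ (not merely almost-sure) convergence, i.e.\ uniform integrability of $\bar{h}^\alpha(t)^2$. The rest is a short computation once the conditional Cantelli inequality has been set up.
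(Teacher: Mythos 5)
Your argument for the first inequality is essentially identical to the paper's: both apply the Chebyshev--Cantelli inequality to the centered increment $\bar{h}^{\alpha}(t^*)-\bar{h}^{\alpha}(t)$ using the scale-invariant variance ratio bound from \Cref{lem::var-expsq-ratio}, so there is nothing to discuss there. For the second inequality \eqref{eq::limit-bdd-too}, however, your route diverges from the paper's. The paper does \emph{not} transfer the variance bound to the limiting random variable $\bar{h}_{\alpha}$; instead it proceeds by contradiction, using only almost-sure (hence in-probability) convergence $\bar{h}^{\alpha}(t)\to\bar{h}_{\alpha}$ together with the already-proved first inequality. Concretely, the paper sets $\delta_1 = \bar{h}^{\alpha}(t^*)/4$ and $\delta_2 = \frac{c^*}{c^*+1/8}-\frac{c^*}{c^*+1/4}$, supposes for contradiction that the probability exceeds $\frac{c^*}{c^*+1/8}$, and then observes that the joint event $\bigl\{\bar{h}_{\alpha}\leq \bar{h}^{\alpha}(t^*)/4\bigr\}\cap\bigl\{\bar{h}^{\alpha}(t)>\bar{h}^{\alpha}(t^*)/2\bigr\}$ would have probability at least $\delta_2$ for all $t$, on which $|\bar{h}_{\alpha}-\bar{h}^{\alpha}(t)|>\delta_1$, contradicting convergence in probability. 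That is where the somewhat unusual constant $1/8$ comes from --- it is an artifact of interpolating between the $1/2$ and $1/4$ levels, not of slack in the approximation to $R(t,\alpha\lambda)/R(t_0,\alpha\lambda)$ as you speculated.

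Your two suggested alternatives both work and are cleaner in some respects. The $L^2$ route is rigorous because a conditionally $L^2$-bounded martingale converges in $L^2$ (a direct consequence of orthogonality of martingale increments, or Doob's $L^p$ convergence theorem), not merely because of uniform integrability of the squares, which plain $L^2$-boundedness does not give in general; you should cite the martingale $L^2$ convergence theorem rather than a generic uniform-integrability argument, since $\sup_t\bbE[\bar{h}^\alpha(t)^2\mid\cH(t^*)]\leq (1+c^*)\bar{h}^\alpha(t^*)^2$ is exactly what that theorem requires. Your Fatou variant also works after a minor fix: the containment $\{\bar{h}_\alpha < \bar{h}^\alpha(t^*)/4\}\subseteq \liminf_t\{\bar{h}^\alpha(t)\leq \bar{h}^\alpha(t^*)/4\}$ (up to a null set) requires the strict inequality on the left, so one should argue with $\bar{h}^\alpha(t^*)/4 + \varepsilon$ and let $\varepsilon\downarrow 0$, or observe that the bound obtained is an upper bound on $\bbP[\bar{h}_\alpha < \bar{h}^\alpha(t^*)/4 + \varepsilon\mid\cH(t^*)]$ for every $\varepsilon > 0$. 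Either of your routes yields the stronger constant $c^*/(c^*+9/16)$, which of course implies the paper's $c^*/(c^*+1/8)$; the paper likely chose the contradiction argument to stay within the toolbox it had already developed (in-probability convergence and the first Cantelli bound) rather than to invoke $L^2$ martingale convergence.
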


\ifarxiv 

\begin{proof}
    Note that if $\bar{h}^{\alpha}(t) \leq \frac{\bar{h}^{\alpha}(t^*)}{2}$ then $\bar{h}^{\alpha}(t^*)-\bar{h}^{\alpha}(t) \geq \frac{\bar{h}^{\alpha}(t^*)}{2}$; since $\bar{h}^{\alpha}(t^*)-\bar{h}^{\alpha}(t)$ has mean $0$ (conditioned on $\cH(t^*)$) and variance $\leq c^*$ given by \Cref{lem::var-expsq-ratio}, the Chebyshev-Cantelli inequality states that 
    \begin{align}
        \bbP&\bigg[\bar{h}^{\alpha}(t^*)-\bar{h}^{\alpha}(t) \geq \frac{\bar{h}^{\alpha}(t^*)}{2}\bigg|\cH(t^*) \bigg] 
        \\ & \leq \frac{\var[\bar{h}^{\alpha}(t)]}{\var[\bar{h}^{\alpha}(t)] +(\frac{\bar{h}^{\alpha}(t^*)}{2})^2 }
        \\&\leq \frac{c^* \bar{h}^{\alpha}(t^*)^2}{c^* \bar{h}^{\alpha}(t^*)^2 + (\frac{\bar{h}^{\alpha}(t^*)}{2})^2} 
        = \frac{c^*}{c^* + 1/4} 
    \end{align}
    Note that the function $\frac{x}{x+y}$ is increasing in $x$ if $y$ is positive, so we appropriately get an upperbound when applying $\var[\bar{h}^{\alpha}(t)] \leq c^* \bar{h}^{\alpha}(t^*)^2$.
    
    To prove \eqref{eq::limit-bdd-too} (note that the bound is now $\frac{\bar{h}^{\alpha}(t^*)}{4}$ rather than $\frac{\bar{h}^{\alpha}(t^*)}{2}$) we have that by \Cref{lem::basic-bar-h-properties}, $\bar{h}^{\alpha}(t) \to \bar{h}_{\alpha}$ almost surely; this means that $\bar{h}^{\alpha}(t) \to \bar{h}_{\alpha}$ in probability as well, so for any $\delta_1, \delta_2 > 0$, and for sufficiently large $t$,
    \begin{align}\label{eq::hbar_converge_prob}
        \bbP[|\bar{h}_{\alpha}-\bar{h}^{\alpha}(t)| > \delta_1] \leq \delta_2
    \end{align}
    Set $\delta_1 = \frac{\bar{h}^{\alpha}(t^*)}{4}$ and $\delta_2 = \frac{c^*}{c^* + 1/8} - \frac{c^*}{c^* + 1/4} $. Then we assume to the contrary that 
    \begin{align}
        \bbP\bigg[\bar{h}_{\alpha} \leq \frac{\bar{h}^{\alpha}(t^*)}{4}  \,\Big|\, \cH(t^*)\bigg] > \frac{c^*}{c^* + 1/8}
    \end{align}
    This means that, for any sufficiently large $t$,
    \begin{align}
        \bbP&\bigg[\bar{h}_{\alpha} \leq \frac{\bar{h}^{\alpha}(t^*)}{4} \text{ and } \bar{h}^{\alpha}(t) > \frac{\bar{h}^{\alpha}(t^*)}{2}\bigg| \cH(t^*)\bigg] \\&> \frac{c^*}{c^* + 1/8} - \frac{c^*}{c^* + 1/4} 
    \end{align}
    which yields the desired contradiction given by \eqref{eq::hbar_converge_prob}.
\end{proof}

\else 

\begin{proof}[Proof Sketch]
\revisetwo{Since we can bound the variance of $\bar{h}^{\alpha}(t^*)-\bar{h}^{\alpha}(t)$ using \Cref{lem::var_as_exp}, we can apply Chebyshev-Cantelli inequality to show the first statement. To get  \eqref{eq::limit-bdd-too}, we first need to use the fact that \Cref{lem::basic-bar-h-properties} implies that $\bar{h}^{\alpha}(t) \to \bar{h}_{\alpha}$ in probability. Combining this with the first statement proves that $\bar{h}^{\alpha}(t^*)$ and $\bar{h}_{\alpha}$ must be close. } 
\end{proof}

\fi 

\begin{lemma}\label{lem::apply_levy}
For any $\alpha$, almost surely
$
\bar{h}_\alpha > 0
$
\end{lemma}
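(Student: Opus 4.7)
The plan is to invoke L\'evy's zero-one law applied to the indicator $\bbI[\bar h_\alpha = 0]$, combined with the conditional probability bound supplied by the preceding corollary. The key observation is that because $\bar h^\alpha(t^*) \geq 0$, on the event $\{\bar h_\alpha = 0\}$ we trivially have $\bar h_\alpha = 0 \leq \bar h^\alpha(t^*)/4$ for every $t^*$. Hence, pointwise,
\[
\bbI[\bar h_\alpha = 0] \;\leq\; \bbI\!\left[\bar h_\alpha \leq \tfrac{\bar h^\alpha(t^*)}{4}\right].
\]
Taking conditional expectation given $\cH(t^*)$ and applying the preceding corollary, for all sufficiently large $t^*$,
\[
\bbP\!\left[\bar h_\alpha = 0 \,\big|\, \cH(t^*)\right] \;\leq\; \frac{c^*}{c^* + 1/8} \;<\; 1.
\]

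Next I would observe that by \Cref{lem::basic-bar-h-properties}, $\bar h_\alpha$ is the almost sure limit of $\bar h^\alpha(t)$, so $\{\bar h_\alpha = 0\}$ is measurable with respect to $\sigma\!\left(\bigcup_{t \geq t_0} \cH(t)\right)$. L\'evy's zero-one law then yields
\[
\bbP\!\left[\bar h_\alpha = 0 \,\big|\, \cH(t^*)\right] \;\xrightarrow{\text{a.s.}}\; \bbI[\bar h_\alpha = 0] \qquad (t^* \to \infty).
\]
Combining the two displays, the a.s.\ limit satisfies $\bbI[\bar h_\alpha = 0] \leq c^*/(c^* + 1/8) < 1$. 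Since an indicator is either $0$ or $1$, this forces $\bbI[\bar h_\alpha = 0] = 0$ almost surely, which is exactly $\bar h_\alpha > 0$ almost surely.

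The main obstacle, namely producing a conditional probability bound on $\{\bar h_\alpha \leq \bar h^\alpha(t^*)/4\}$ that is uniform in $t^*$ and strictly bounded away from $1$, has already been dispatched via the variance-to-mean-squared ratio estimate in \Cref{lem::var-expsq-ratio} and its Chebyshev--Cantelli corollary; given those ingredients, this lemma reduces to a clean application of the zero-one law and needs nothing further.
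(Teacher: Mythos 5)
Your proof is correct and takes essentially the same approach as the paper. The paper works with the conditional probability $\bbP[\bar{h}_\alpha > 0 \mid \cH(t)]$ and shows it is bounded away from $0$, while you work with the complementary event and show $\bbP[\bar{h}_\alpha = 0 \mid \cH(t^*)]$ is bounded away from $1$; both then invoke L\'evy's zero-one law to identify the almost-sure limit with the indicator of the event and use that an indicator cannot take a value strictly between $0$ and $1$.
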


\begin{proof}
Therefore we have established that for any $\alpha$:
\begin{itemize}
    \item $\bar{h}^{\alpha}(t)$ is a uniformly integrable martingale;
    \item $\bar{h}^{\alpha}(t) \to \bar{h}_{\alpha}$ almost surely;
    \item for any sufficiently large $t^*$
       \eqref{eq::limit-bdd-too} holds
        which implies that for all sufficiently large $t^*$,
        \begin{align}
            \bbP[\bar{h}_{\alpha} > 0 \,|\, \cH(t^*)] \geq 1 - \frac{c^*}{c^* + 1/8} > 0
        \end{align}
\end{itemize}
We define the process $\eta(t)$ as
\begin{align}
    \eta(t) = \bbP[\bar{h}_{\alpha} > 0 \,|\, \cH(t)]
\end{align}
which is a martingale due to the tower property (and uniformly integrable because it is bounded). We likewise define
\begin{align}
    \eta = \bone\{\bar{h}_{\alpha} > 0\}
\end{align}
and (due to uniform integrability) almost surely $\eta(t) \to \eta$ because of Levy's 0-1 Law \cite[Theorem 5.5.8]{Durrett2010}%
\iftrue 
\revisetwo{; specifically, since $\eta$ is an indicator function and is fully determined by the filtration $\bigcup_t \cH(t)$ we know by Levy's Upward Theorem (applied to $\eta(t)$) that
\begin{align}
    \lim_{t \to \infty} \eta(t) &= \lim_{t \to \infty}\bbP[\bar{h}_\alpha > 0 \,|\, \cH(t)]
    = \lim_{t \to \infty}\bbE[\eta \,|\, \cH(t)]
    \\ &\eqt{a.s.} \bbE\bigg[\eta \,\Big|\, \bigcup_t \cH(t)\bigg]
    = \eta
    \in \{0,1\}\,.
\end{align}
}
\fi 
 But $\lim_{t \to \infty} \eta(t) \geq 1 - \frac{c^*}{c^* + 1/8} > 0$, thus showing that $1$ is the only possible limit out of $\{0,1\}$. Thus, $\eta = 1$ almost surely, so $\bar{h}_\alpha > 0$ almost surely.
\end{proof}

\subsection{Rate of Decay for $V$ and $\beta_i$}

\begin{theorem}\label{thm::rate_VBt}

For any $\epsilon > 0$, we have that
\begin{align}
\lim_{t\to \infty} \frac{V(\bbeta(t))}{t^{\lambda - 1 + \epsilon}} &= 0
\text{ and }\lim_{t\to \infty} \frac{V(\bbeta(t))}{t^{\lambda - 1 - \epsilon}} = \infty
\end{align}
almost surely.
\end{theorem}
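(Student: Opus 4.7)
}

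The plan is to sandwich $V(\bbeta(t))$ between two linearized processes $h^{\alpha_-}(t)$ and $h^{\alpha_+}(t)$ (as in \Cref{lem::prop_h-t}(\ref{item::vb_bounded_h})) whose exponents $\alpha_\pm \lambda - 1$ can be made arbitrarily close to $\lambda - 1$, and then to use the almost-sure positivity of $\bar{h}_\alpha$ (\Cref{lem::apply_levy}) together with the martingale identity $h^{\alpha}(t) = \frac{R(t,\alpha\lambda)}{R(t_0,\alpha\lambda)} \bar h^{\alpha}(t)$ to pin down the polynomial rate of each $h^{\alpha}(t)$.

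First, fix $\epsilon > 0$ and choose any $\delta \in (0, \epsilon/\lambda)$. By \Cref{lem::prop_h-t}(\ref{item::vb_bounded_h}), applied with radius $r(\delta)$, there exist almost surely a (random) time $t_0$ and constants $\alpha_+ \in (1, 1+\delta)$, $\alpha_- \in (1-\delta, 1)$ such that for every $t > t_0$,
\begin{align*}
    h^{\alpha_-}(t) \;\leq\; V(\bbeta(t)) \;\leq\; h^{\alpha_+}(t).
\end{align*}
By our choice of $\delta$, $(\alpha_+ - 1)\lambda < \epsilon$ and $(1-\alpha_-)\lambda < \epsilon$, so $\alpha_+ \lambda - 1 < \lambda - 1 + \epsilon$ and $\alpha_- \lambda - 1 > \lambda - 1 - \epsilon$.

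Second, I invoke \Cref{lem::basic-bar-h-properties} and \Cref{lem::apply_levy} for each of $\alpha = \alpha_\pm$: the normalized process $\bar{h}^{\alpha}(t) = \frac{R(t_0, \alpha\lambda)}{R(t, \alpha\lambda)} h^{\alpha}(t)$ is a nonnegative uniformly integrable martingale that converges almost surely to $\bar{h}_\alpha$, which is strictly positive almost surely. By \Cref{lem::bounds-on-R}, $R(t, \alpha\lambda) = \Theta(t^{\alpha\lambda - 1})$, so there exist (random, positive and finite) constants $C_\pm$ with
\begin{align*}
    h^{\alpha_\pm}(t) \;=\; (1+o(1)) \, C_\pm \, t^{\alpha_\pm \lambda - 1} \qquad \text{a.s.}
\end{align*}

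Third, combining these two facts gives the result. For the upper limit,
\begin{align*}
    \frac{V(\bbeta(t))}{t^{\lambda - 1 + \epsilon}} \;\leq\; \frac{h^{\alpha_+}(t)}{t^{\lambda - 1 + \epsilon}} \;=\; (1+o(1)) \, C_+ \, t^{\alpha_+ \lambda - 1 - (\lambda - 1 + \epsilon)} \;\to\; 0,
\end{align*}
since the exponent is strictly negative. For the lower limit, using $C_- > 0$ almost surely,
\begin{align*}
    \frac{V(\bbeta(t))}{t^{\lambda - 1 - \epsilon}} \;\geq\; \frac{h^{\alpha_-}(t)}{t^{\lambda - 1 - \epsilon}} \;=\; (1+o(1)) \, C_- \, t^{\alpha_- \lambda - 1 - (\lambda - 1 - \epsilon)} \;\to\; \infty,
\end{align*}
since the exponent is strictly positive. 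Since $\epsilon$ was arbitrary, this completes the proof.

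The only subtle point, which I expect to be the main obstacle to writing cleanly, is handling the random nature of $t_0$, $C_\pm$, and the $o(1)$ terms: the sandwich holds only after the random time $t_0$ at which $\bbeta(\cdot)$ has entered the small disc around $\bzero$, so each of the two statements should be phrased as limits on the event $\{t_0 < \infty\}$, which has full probability by the almost-sure convergence $\bbeta(t) \to \bzero$ under the consensus assumption $\lambda < 1$. Once this measurability bookkeeping is done, the exponent comparison is immediate.
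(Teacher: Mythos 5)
Your proposal is correct and follows essentially the same route as the paper's own proof: sandwich $V(\bbeta(t))$ between $h^{\alpha_-}(t)$ and $h^{\alpha_+}(t)$ via \Cref{lem::prop_h-t}(\ref{item::vb_bounded_h}) with $\alpha_\pm$ chosen so that $\alpha_\pm\lambda - 1$ lies within $\epsilon$ of $\lambda-1$, invoke \Cref{lem::apply_levy} (through the normalized martingale $\bar h^\alpha$ and the $R(t,\alpha\lambda)=\Theta(t^{\alpha\lambda-1})$ asymptotics of \Cref{lem::bounds-on-R}) to get $h^{\alpha}(t) = \Theta(t^{\alpha\lambda-1})$ a.s., and then compare exponents. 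Your closing remarks about the randomness of $t_0$ and the multiplicative constants $C_\pm$ are a slightly more explicit version of the paper's observation that ``the value of $t_0$ does not affect the asymptotic rate after $t_0$,'' but the substance is identical.
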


\begin{proof}
For a given $\epsilon$, using \Cref{lem::prop_h-t}(\ref{item::vb_bounded_h}), we can find a $\delta(\epsilon)$ with corresponding $t_0$ large enough and an $\alpha_{+}$ and $\alpha_{-}$ which are such that
\begin{align}
\alpha_{+} &< 1 + \epsilon / \lambda
~~\text{and}~~ \alpha_{-} > 1 - \epsilon / \lambda
\end{align}

Then for any trajectory $V(\bbeta(t))$, there exists some trajectory $h^{\alpha_{-}}(t)$ and $h^{\alpha_{+}}(t)$ such that 
\begin{align}
h^{\alpha_{-}}(t)\leq V(\bbeta(t)) \leq h^{\alpha_{+}}(t)
\end{align}

Using \Cref{lem::apply_levy}, any trajectory of $h^{\alpha_{-}}(t)$ has a corresponding martingale $\bar{h}^{\alpha_{-}}(t)$ which converges to a constant. Thus  $h^{\alpha_{-}}(t)$ converges to zero at a rate of $\Omega(t^{\alpha_{-} \lambda - 1})$ almost surely. Similarly, $h^{\alpha_{+}}(t)$ converges to zero at a rate of $\Omega(t^{\alpha_{+} \lambda - 1})$ almost surely.

Since the value of $t_0$ does not affect the asymptotic rate after $t_0$, we have
\begin{align}
\lim_{t \to \infty} \frac{V(\bbeta(t))}{t^{\lambda - 1 + \epsilon}} &\leq \lim_{t \to \infty} \frac{h^{\alpha_{+}}(t)}{t^{\lambda - 1 + \epsilon}}
\leq  \lim_{t \to \infty} \frac{c_{+}t^{\alpha_{+} \lambda - 1} }{t^{\lambda - 1 + \epsilon}} 
= 0 
\\
\lim_{t \to \infty} \frac{V(\bbeta(t))}{t^{\lambda - 1 - \epsilon}} &\geq \lim_{t \to \infty} \frac{h^{\alpha_{-}}(t)}{t^{\lambda - 1 - \epsilon}}
\geq  \lim_{t \to \infty} \frac{c_{-}t^{\alpha_{-} \lambda - 1} }{t^{\lambda - 1 - \epsilon}} 
 = \infty \,.
\end{align}

\end{proof}
Next we need the above result on $V(\bbeta(t))$ to imply a result on all $\beta_i(t)$. 

\begin{lemma}\label{lem::exist_agent_rate_tlam}
Given any $\epsilon > 0$, there exists some time $t_0$ and some $c$, such that
for all $t>t_0$, there exists some $i$ such that
\begin{align}
\beta_i(t) \geq c t^{\lambda - 1 - \epsilon} \label{eq::beta_i_larger_rate_ep}
\end{align}
almost surely. 
Additionally, for any $t_2 > t_0$, there exists some $i$ such that $\beta_i(t) \geq c t^{\lambda - 1 - \epsilon}$ holds at least $1/n$ of the times $t$ in $t_2 < t \leq 2 t_2$.
\end{lemma}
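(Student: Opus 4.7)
The plan is to combine Theorem~\ref{thm::rate_VBt} with elementary averaging and pigeonhole arguments; the work has essentially been done by establishing the rate of convergence of $V(\bbeta(t))$.

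First I would invoke Theorem~\ref{thm::rate_VBt}: for the given $\epsilon > 0$, almost surely $V(\bbeta(t))/t^{\lambda - 1 - \epsilon} \to \infty$, so there exists a (sample-path dependent) time $t_0$ such that $V(\bbeta(t)) \geq t^{\lambda - 1 - \epsilon}$ for every $t > t_0$. Recall that $V(\bbeta(t)) = \bv^\top \bbeta(t) = \sum_{i=1}^n v_i \beta_i(t)$, and because $\dermat{\bzero}$ is irreducible and nonnegative, Perron--Frobenius guarantees that every entry $v_i$ is strictly positive. Set $c = 1/(n\,\max_j v_j)$, which depends only on the network and the bias parameters, not on $t$ or on the sample path.

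For the first claim \eqref{eq::beta_i_larger_rate_ep}, I would argue by averaging: at any fixed $t > t_0$, since $\sum_i v_i \beta_i(t) = V(\bbeta(t))$ and there are $n$ summands, there must exist at least one index $i^\star(t)$ with $v_{i^\star(t)}\beta_{i^\star(t)}(t) \geq V(\bbeta(t))/n$, and hence $\beta_{i^\star(t)}(t) \geq V(\bbeta(t))/(n\,\max_j v_j) \geq c\, t^{\lambda - 1 - \epsilon}$. This gives the existence of some $i$ meeting the bound at every $t > t_0$.

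For the second claim, I would apply a pigeonhole argument on the interval $(t_2, 2t_2]$, which contains $t_2$ integer time steps. The previous paragraph assigns to each such $t$ some agent $i^\star(t) \in \{1, \dots, n\}$ satisfying the bound; since only $n$ distinct values are available, some specific agent $i^\star \in \{1, \dots, n\}$ must serve as $i^\star(t)$ for at least $t_2/n$ of these times, i.e. a $1/n$-fraction of $(t_2, 2t_2]$, and we may take $c_{i^\star} = c$.

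I do not anticipate a genuine obstacle here; the only point requiring some care is confirming that the almost-sure conclusion of Theorem~\ref{thm::rate_VBt} supplies a single random $t_0$ that works uniformly for all subsequent $t$ (which it does, since $V(\bbeta(t))/t^{\lambda - 1 - \epsilon} \to \infty$ almost surely implies the bound $V(\bbeta(t)) \geq t^{\lambda - 1 - \epsilon}$ holds for all sufficiently large $t$ on the same probability-one event).
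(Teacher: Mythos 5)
Your proposal is correct and follows essentially the same route as the paper: invoke Theorem~\ref{thm::rate_VBt} to get $V(\bbeta(t)) \geq c\, t^{\lambda-1-\epsilon}$ for all large $t$, then argue that some coordinate of $\bbeta(t)$ must share this lower bound, and finish the second claim by pigeonhole over the $n$ agents. The only cosmetic difference is in how you extract a large coordinate: you bound the largest term $v_i\beta_i(t)$ and pay a factor $n\max_j v_j$, whereas the paper uses the normalization $\bv^\top\bone = 1$ to observe directly that $\max_i\beta_i(t) = (\bv^\top\bone)\max_i\beta_i(t) \geq \bv^\top\bbeta(t) = V(\bbeta(t))$, yielding a slightly cleaner constant; either is fine since only the order in $t$ matters.
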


\ifarxiv  
Note that at this point in the development of our results, the lemma does not guarantee that the same $i$ will satisfy the property \eqref{eq::beta_i_larger_rate_ep}. Hence why we have the second statement saying that there exists some $i$ that satisfies \eqref{eq::beta_i_larger_rate_ep} some fraction of the time. 
\else 
\fi 

\begin{proof}
First, $V(\bbeta(t)$ converges at a rate at least $\Omega(t^{\lambda - 1 - \epsilon})$.
Thus, there is some $t_0$, where for $t >t_0$,
\begin{align}
V(\bbeta(t)) \geq c t^{\lambda - 1 - \epsilon}\,.
\end{align}

At time $t > t_0$, let $k$ be such that
$
\beta_k(t) \geq \beta_i(t)
$
for all $i$.
Since $\bv$ is a vector so that $\bv^{\ltop} \bone = 1$, we have that
\begin{align}
\beta_k(t) &= (\bv^{\ltop} \bone) \beta_k(t) \geq \bv^{\ltop} \bbeta(t)
 = V(\bbeta(t))\,.
\end{align}
And thus this shows \eqref{eq::beta_i_larger_rate_ep} for each $t$.

For the second statement, we know that there must be some $i$ which satisfies \eqref{eq::beta_i_larger_rate_ep}. Since there are only $n$ candidates for $i$, at least one $i$ must occur the most often, which means this $i$ occurs at least $1/n$ of the time in a certain interval. 
\end{proof}

\revisetwo{We eventually want to show that $\beta_i(t) \geq c t^{\lambda-1-\epsilon}$ for some $c$ for all $t$ larger than some $t^*$ for all $i$. To help show this, we will first look at a weaker case, when this equation holds for some portion of the time. This will help prove what we need.}

\begin{definition}
     For any agent $i$ and constants $c, \rho > 0$, we say that time $t_1 > 0$ is \emph{$(c,\rho)$-good} for agent $i$ if
     \begin{align} \label{eq::mu_i_greater_c_t_rate}
         \beta_i(t) \geq c t^{\lambda-1-\epsilon}
     \end{align}
     for at least $\rho$ fraction of the times $t \in [t_1/2, t_1]$. We denote the set of times where this holds as
     \begin{align}
         \cT_i(c,\rho) = \{t_1 : t_1 \text{ is } (c, \rho)\text{-good for } i\}
     \end{align}
\end{definition}

For shorthand, once $c, \rho$ are fixed, we denote the set of good times for $i$ as $\cT_i$. The key observation is that the set of good times for an agent $i$ eventually becomes good for all her neighbors $j$, which then eventually become good for all their neighbors, and so forth until the set of good times for $i$ must be good for all agents.

\begin{lemma} \label{lem::mu_to_one_beta}
    Fix an agent $i$ and constants $c_i, \rho_i > 0$, and let $\cT_i := \cT_i(c_i, \rho_i)$, and let $j$ be adjacent to $i$. Then there is some $c_j, \rho_j > 0$ such that, almost surely, there is some $t^*$ for which
    \begin{align}
        t_1 > t^* \text{ and } t_1 \in \cT_i \implies t_1 \in \cT_j
    \end{align}
    where $\cT_j := \cT_j(c_j, \rho_j)$.
\end{lemma}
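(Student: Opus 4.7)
The plan is to push the goodness of $\beta_i$ through the edge $(i,j)$ into the dynamics of $\beta_j$ via concentration. Fix a large $t_1 \in \cT_i$ and let $\cS \subseteq [t_1/2, t_1]$ be the random set of $\tau$ on which $\beta_i(\tau) \geq c_i \tau^{\lambda-1-\epsilon}$; by hypothesis $|\cS| \geq \rho_i (t_1/2)$. Because $j$ is adjacent to $i$ with $a_{j,i} > 0$,
\[
    \mu_j(\tau) \;\geq\; \frac{a_{j,i}}{\degree(j)}\,\beta_i(\tau) \;\geq\; c' \tau^{\lambda-1-\epsilon}
    \quad\text{for all } \tau \in \cS,
\]
and since $\mu_j(\tau) \to 0$ in the consensus regime (so that $f(\mu_j(\tau),\gamma_j) \geq (\gamma_j/2)\mu_j(\tau)$ for $\tau$ large enough), this yields $f(\mu_j(\tau),\gamma_j) \geq c'' \tau^{\lambda-1-\epsilon}$ on $\cS$.

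Summing over $\cS$ bounds the predictable mean
\[
    E_{t_1} \;:=\; \sum_{\tau=t_1/2}^{t_1-1} f(\mu_j(\tau),\gamma_j) \;\geq\; C\, t_1^{\lambda-\epsilon}
\]
on the event $\{t_1 \in \cT_i\}$. I would then apply Freedman's inequality (\Cref{thm::freedman}) to the martingale
\[
    Y_t \;:=\; \sum_{\tau=t_1/2}^{t-1}\bigl(\bbI\{\psi_{j,\tau+1}=1\} - f(\mu_j(\tau),\gamma_j)\bigr),
\]
whose increments are $1$-bounded and whose predictable quadratic variation is controlled by $E_{t_1} = O(t_1^{\lambda+\epsilon})$ (using the upper bound $\beta_k(\tau) = O(\tau^{\lambda-1+\epsilon})$ supplied by \Cref{thm::rate_VBt}). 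With deviation threshold $E_{t_1}/2$, Freedman yields a tail of order $\exp(-\Omega(t_1^{\lambda-3\epsilon}))$, which is summable in $t_1$ for $\epsilon$ small; Borel--Cantelli then gives that almost surely, for every sufficiently large $t_1 \in \cT_i$,
\[
    N_j(t_1) - N_j(t_1/2) \;\geq\; (C/2)\, t_1^{\lambda-\epsilon}.
\]

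The final step is to convert this cumulative-count bound into the density statement $\beta_j(t) \geq c_j t^{\lambda-1-\epsilon}$ for a positive fraction $\rho_j$ of $t \in [t_1/2, t_1]$. My plan is to run the Freedman--Borel--Cantelli estimate at every sub-scale $s$ in $[t_1/2, t_1]$ simultaneously (a union bound that costs only a polynomial factor, absorbed into the $\lambda-3\epsilon$ exponent), giving $N_j(s) - N_j(s/2) \geq C' s^{\lambda-\epsilon}$, and hence $\beta_j(s) \geq c_j s^{\lambda-1-\epsilon}$, at each such $s$ for which the analogue of $\{t_1 \in \cT_i\}$ holds. A counting argument leveraging the assumed $\rho_i$-density of good times for $\beta_i$ inside $[t_1/2, t_1]$ then furnishes the required $\rho_j$. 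I expect this density conversion to be the main obstacle: because $\beta_j(\cdot)$ is not monotone, knowing $N_j$ only at the right endpoint $t_1$ leaves open the possibility that all $1$'s accumulate near $t_1$, yielding only a vanishing fraction of $t$ at which $\beta_j(t)$ is large. Ruling this out requires uniformizing the Freedman bound across sub-scales and carefully exploiting that the good $\tau$'s for $i$ are spread throughout $[t_1/2, t_1]$ rather than clustered.
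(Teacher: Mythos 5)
The concentration step in your proposal matches the paper's (they use a Chernoff bound where you use Freedman, but both give a tail of order $\exp(-\Omega(t_1^{\lambda-O(\epsilon)}))$, which is summable and lets Borel--Cantelli finish). The gap is exactly where you suspected: the density conversion. You correctly observe that controlling only the cumulative count $N_j(t_1)$ at the right endpoint does not rule out the $1$'s all clustering near $t_1$, but the fix you propose---running Freedman at every sub-scale $s \in [t_1/2, t_1]$ with a union bound---is both harder than necessary and not actually supported by the hypothesis. The hypothesis $t_1 \in \cT_i$ gives a $\rho_i$-fraction of good times for $i$ in $[t_1/2, t_1]$ as a whole; it says nothing about how those good times distribute across sub-intervals $[s/2, s]$, so you cannot run the same argument at every sub-scale without a further pigeonholing argument that you have not supplied.

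The paper sidesteps this entirely with a \emph{holdout} partition of $[t_1/2, t_1]$. Since there are at least $(\rho_i/2)t_1$ good times for $i$ in $[t_1/2, t_1]$, at least $(\rho_i/4)t_1$ of them lie in the left piece $[t_1/2, (1-\rho_i/4)t_1]$. Applying the concentration bound only to the declarations of $j$ during this left piece guarantees (w.h.p.) that $N_j\bigl((1-\rho_i/4)t_1\bigr) \geq \frac{\rho_i c'_j}{8} t_1^{\lambda-\epsilon}$. Now the key observation, which renders your concern moot: $N_j(\cdot)$ \emph{is} monotone (it is a cumulative count, unlike $\beta_j(\cdot)$), so for every $t$ in the held-out right piece $[(1-\rho_i/4)t_1,\, t_1]$,
\begin{align}
\beta_j(t) \;=\; \frac{N_j(t)}{t} \;\geq\; \frac{N_j\bigl((1-\rho_i/4)t_1\bigr)}{t_1} \;\geq\; \frac{\rho_i c'_j}{8}\, t_1^{\lambda-1-\epsilon}\,,
\end{align}
regardless of what $j$ declares in the right piece. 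This supplies $(\rho_i/4)t_1$ times, a $\rho_j = \rho_i/2$ fraction of $[t_1/2, t_1]$, and (absorbing a $2^{1-\lambda+\epsilon}$ factor into $c_j$) the conclusion $t_1 \in \cT_j(c_j, \rho_j)$. So the missing idea is: do not try to control $\beta_j$ on a dense set of sub-scales; instead count the $1$'s declared in a left piece only, and let monotonicity of the cumulative count propagate the bound across the held-out right piece.
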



\ifarxiv 
Note that $\rho_j$ can be less than $\rho_i$, meaning that in the argument where good times spread from a source $i$, the $\rho$'s diminish as the process gets further from $i$; however, since the graph is finite, it remains bounded away from $0$ over the whole graph.
\else 
Note that $\rho_j$ can be less than $\rho_i$, meaning that in the argument where good times spread from a source $i$, the $\rho$'s diminish as the process gets further from $i$.
\fi 

\ifarxiv 
\iftrue 
\begin{proof}
    Recall that $t_1 \in \cT_i$ means that $\beta_i \geq c_i t^{\lambda-1-\epsilon}$ for at least a $\rho_i$ fraction of $t \in [t_1/2, t_1]$; we say a time $t$ is \emph{$c_i$-enough} (for agent $i$) if $\beta_i \geq c_i t^{\lambda-1-\epsilon}$ (unlike \emph{good} times $t_1$, this only depends on the value of $\beta_i$ at time $t$, not at any previous time).

    First, note that at least $(\rho_i/4)t_1$ different $t$ in $[t_1/2, (1 - \rho_i/4)t_1]$ are $c_i$-enough (since there are at least $(\rho_i/2)t_1$ $c_i$-enough times in total). For any $t \in [t_1/2, t_1]$,
    \begin{align}
        \beta_i(t) \geq \begin{cases} c_i t_1^{\lambda-1-\epsilon} &\text{if } t \text{ is } c_i\text{-enough} \\ 0 &\text{otherwise} \end{cases}
    \end{align}
    since $t \leq t_1$, and therefore (as $w_{j,i} = \frac{a_{j,i}}{\degree(j)}$)
    \begin{align}
        \mu_j(t) \geq \begin{cases} c_i w_{j,i} t_1^{\lambda-1-\epsilon} &\text{if } t \text{ is } c_i\text{-enough} \\ 0 &\text{otherwise} \end{cases}.
    \end{align}
    So, when $t$ is $c_i$-enough, we get
    \begin{align}
        f(\mu_j(t), \gamma_j) &\geq f(w_{j,i}c_i t_1^{\lambda-1-\epsilon}, \gamma_j)
        \\&= \frac{\gamma_j w_{j,i}c_i t_1^{\lambda-1-\epsilon}}{\gamma_j w_{j,i}c_i t_1^{\lambda-1-\epsilon} + 1 - w_{j,i}c_i t_1^{\lambda-1-\epsilon} }
        \\&\geq c'_j t_1^{\lambda-1-\epsilon}
    \end{align}
    where $c'_j = \min(\gamma_j, 1) w_{j,i} c_i$, which is a lower bound on the probability of agent $j$ declaring $1$ at any $c_i$-enough time. Consider the $\geq \rho_i/4$ such times in $[t_1/2, (1-\rho_i/4)t_1]$; the number of $1$'s declared by $j$ in the range $[t_1/2, (1-\rho_i/4)t_1]$ thus stochastically dominates the sum of $(\rho_i/4)t_1$ independent Bernoulli random variables with probability $c'_j t_1^{\lambda-1-\epsilon}$ each (whose sum has expected value $(\rho_i/4) c'_j t_1^{\lambda-\epsilon}$). By the Chernoff bound, this then yields that
    \begin{align} \label{eq::chernoff-on-psi-j}
        \bbP\Bigg[\sum_{t=t_1/2}^{(1-\rho_i/4)t_1} \psi_{j,t} \leq \frac{\rho_i c'_j}{8} t_1^{\lambda-\epsilon} \Bigg] \leq e^{-\frac{\rho_i}{32}c'_j t_1^{\lambda-\epsilon}}
    \end{align}
    meaning that there is a very high probability of getting at least $\frac{\rho_i c'_j}{8} t_1^{\lambda-\epsilon}$ declarations of $1$ from agent $j$ by time $(1 - \rho_i/4)t_1$. But then for the $(\rho_i/4)t_1$ times $t \in [(1-\rho_i/4)t_1, t_1]$, we have
    \begin{align}
        \beta_j(t) \geq \frac{\rho_i c'_j}{8} t_1^{\lambda-1-\epsilon}
    \end{align}
    so $t_1 \in \cT_j(c_j'', \rho_j)$ where $c_j'' = \frac{\rho_i c'_j}{8} $ and $\rho_j = \rho_i/2$ (since $(\rho_i/4)t_1$ needs to be a $\rho_j$ proportion of $t_1/2$).

    Finally, we need to show that this probabilistic bound then implies that almost surely there are only finitely many $t_1$ which are in $\cT_i(c_i, \rho_i)$ but not $\cT_j(c_j'', \rho_j)$. This case only happens when $\sum_{t=t_1/2}^{(1-\rho_i/4)t_1} \psi_{j,t} \leq \frac{\rho_i c'_j}{8} t_1^{\lambda-\epsilon}$, and the probability of this (by \eqref{eq::chernoff-on-psi-j}) decreases faster than any inverse polynomial of $t_1$, and hence has a finite sum over all $t_1 \in \cT_i$ (notably, $\sum_{t_1=1}^\infty e^{-\frac{\rho_i}{32}c'_j t_1^{\lambda-\epsilon}} < \infty$, so summing only over $t_1 \in \cT_i$ must also be finite). Thus, by the Borel-Cantelli Lemma, almost surely it happens only finitely many times, and we are done.
\end{proof}
\else 
\begin{proof}
    Given that \eqref{eq::mu_i_greater_c_t_rate} (with $c = c_i$) holds at least $\rho_i$ fraction of the time, we want to determine a lower bound on the number of $1$'s that agent $i$ declares in the time period from $t_1/2$ to $(1 - \rho_i/4)t_1$ (rounded up). Note that this leaves only $(\rho_i/4)t_1 = (\rho_i/2)(t_1/2)$ time steps (rounded down) afterwards, whereas we have \eqref{eq::mu_i_greater_c_t_rate} for at least $\rho_i t_1/2$ time steps within $[t_1/2, t_1]$. Thus, we have at least $(\rho_i/4)t_1$ time steps in $[t_1/2, (1-\rho_i/4)t_1]$ for which \eqref{eq::mu_i_greater_c_t_rate} holds.

    Without loss of generality, assume that $0$'s are declared whenever \eqref{eq::mu_i_greater_c_t_rate} does not hold, which is trivially a lower bound on the number of $1$'s that are actually declared during those steps. This means we need only look at the $\geq (\rho_i/4) t_1$ time steps for which $\beta_i(t) \geq c_i t^{\lambda-1-\epsilon}$. 

    Let $t$ be a time such that $\beta_i(t) \geq c_i t^{\lambda-1-\epsilon}$. Recall that probability that agent $j$ declares $1$ at time $t$ is given by
    \begin{align}
        f(\mu_j, \gamma_j) &= \frac{\gamma_j \mu_j(t)}{\gamma_j \mu_j(t) + 1 - \mu_j(t)}
    \end{align}
    where, since $j$ is a neighbor of $i$,
    \begin{align}
        \mu_j(t) &= \sum_{j' = 1}^n w_{j',j} \beta_{j'}(t) 
        \geq w_{i,j} \beta_i(t)
        \geq w_{i,j} c_i t^{\lambda-1-\epsilon}
    \end{align}
    where $w_{i,j} = a_{i,j}/\degree(j) \in (0, 1]$
    \ifsixteen
    .
    \else
    is the positive constant denoting the normalized edge weight between $i$ and $j$. 
    \fi
    Note that since we are assuming $\beta_i(t) \geq c_i t^{\lambda-1-\epsilon}$, and $\beta_i(t) \in [0,1]$, this implies that $t$ is sufficiently large so that $c_i t^{\lambda-1-\epsilon} \leq 1$.
    
    Since $f$ is monotonic in $\mu_j$, we have that
\ifsixteen 
\begin{align}
        f(\mu_j(t), \gamma_j) &\geq f(w_{i,j}c_i t^{\lambda-1-\epsilon}, \gamma_j)
        \\&= \frac{\gamma_j w_{i,j}c_i t^{\lambda-1-\epsilon}}{\gamma_j w_{i,j}c_i t^{\lambda-1-\epsilon} + 1 - w_{i,j}c_i t^{\lambda-1-\epsilon} }
        \\&\geq c'_j t_1^{\lambda-1-\epsilon}
    \end{align}
\else 
    \begin{align}
        f(\mu_j(t), \gamma_j) &\geq f(w_{i,j}c_i t^{\lambda-1-\epsilon}, \gamma_j)
        \\&= \frac{\gamma_j w_{i,j}c_i t^{\lambda-1-\epsilon}}{\gamma_j w_{i,j}c_i t^{\lambda-1-\epsilon} + 1 - w_{i,j}c_i t^{\lambda-1-\epsilon} }
        \\&\geq \min(1, \gamma_j) w_{i,j}c_i t^{\lambda-1-\epsilon} \label{eq::bounding-probability-for-each-good-step}
    \end{align}
where the last inequality follows because $w_{i,j}c_i t^{\lambda-1-\epsilon} \leq 1$, the denominator is at most $\max(1, \gamma_j)$, and because $t \in [t_1/2, t_1]$ which implies that $t_1^{\lambda-1-\epsilon} \leq  t^{\lambda-1-\epsilon}$.

For convenience, we denote the constant $c'_j$ as
\begin{align}
    c'_j := \min(1, \gamma_j) w_{i,j}c_i 
\end{align}
so that \eqref{eq::bounding-probability-for-each-good-step} is written in simplified form as
\begin{align} \label{eq::simplified-probability-for-each-good-step}
    f(\mu_j(t), \gamma_j) &\geq c'_j t_1^{\lambda-1-\epsilon}
\end{align}
for all $t$ such that $\beta_i(t) \geq c_i t^{\lambda-1-\epsilon}$. 
\fi 
As mentioned, this happens at least $(\rho_i/4)t_1$ times between $t_1/2$ and $(1-\rho_i/4)t_1$. Thus, the number of $1$'s declared by agent $j$ in the interval $[t_1/2, (1-\rho_i/4)t_1]$ stochastically dominates the sum of $(\rho_i/4)t_1$ independent Bernoulli random variables with probability $c'_j t_1^{\lambda-1-\epsilon}$.

We then let $\xi_1, \dots, \xi_{(\rho_i/4)t_1}$ (rounded up) be independent Bernoulli random variables, each with a probability of returning $1$ of $c'_j t_1^{\lambda-1-\epsilon}$. Since these are i.i.d. Bernoulli random variables, we can apply a concentration result on their sum. The expected value of their sum (ignoring rounding of $(\rho_i/4)t_1$, which is negligible) is
\begin{align}
    \bbE\Bigg[ \sum_{k=1}^{(\rho_i/4)t_1} \xi_k \Bigg] = (\rho_i/4)t_1 c'_j t_1^{\lambda-1-\epsilon} = (\rho_i/4) c'_j t_1^{\lambda-\epsilon}
\end{align}
Thus, using a Chernoff bound,
\begin{align}
    \bbP\Bigg[\sum_{k=1}^{(\rho_i/4)t} \xi_k \leq \frac{1}{2}(\rho_i/4) c'_j t_1^{\lambda-\epsilon} \Bigg] \leq e^{-\frac{\rho_i}{32}c'_j t_1^{\lambda-\epsilon}}
\end{align}
For any $t_1 \in \cT_i$, this means that
\begin{align}
    \bbP\Bigg[\sum_{t=t_1/2}^{(1-\rho_i/4)t_1} \psi_{j,t} \leq \frac{\rho_i c'_j}{8} t_1^{\lambda-\epsilon} \Bigg] \leq e^{-\frac{\rho_i}{32}c'_j t_1^{\lambda-\epsilon}}
\end{align}
We now claim that if 
\begin{align}
    \sum_{t=t_1/2}^{(1-\rho_i/4)t_1} \psi_{j,t} \geq \frac{\rho_i c'_j}{8} t_1^{\lambda-\epsilon}
\label{eq::many_psi_happen_in_interval}
\end{align}
(i.e. the sum of the $\psi_{j,t}$ are not significantly lower than expected) then $t_1 \in \cT_j$. This is because for any $t \in [(1 - \rho_i/4)t_1, t_1]$ (the part at the end held out) we have
\begin{align}
    \beta_j(t) = \frac{\sum_{\tau=1}^t \psi_j(\tau)}{t} \geq \frac{\frac{\rho_i c'_j}{8} t_1^{\lambda-\epsilon}}{t_1} = \frac{\rho_i c'_j}{8} t_1^{\lambda- 1 - \epsilon}
\end{align}
Note that this therefore applies for $(\rho_i/4)t_1$ times in the range $[t_1/2, t_1]$, which is a $\rho_i/2$ fraction of that interval. Thus, setting 
\begin{align}
    c_j = \frac{\rho_i c'_j}{8} \text{ and } \rho_j = \rho_i/2
\end{align}
we can conclude that $t_1 \in \cT_j$.

\ifsixteen
It remains to be shown that (almost surely) after a certain time $t^*$ \eqref{eq::many_psi_happen_in_interval}
\else
One last step is needed: we showed a high probability that $\sum_{t=t_1/2}^{(1-\rho_i/4)t_1} \psi_{j,t} \geq c_j t_1^{\lambda-\epsilon}$ and that if this happened then $t_1 \in \cT_j$; however, it 
remains to be shown that (almost surely) after a certain time $t^*$ this 
\fi
always happens when $t_1 \in \cT_i$ for $t_1 > t^*$. This follows since $e^{-\frac{\rho_i}{32}c'_j t_1^{\lambda-\epsilon}}$ decays to $0$ faster than any inverse polynomial, and hence
\begin{align}
    \sum_{t_1 \in \cT_i} e^{-\frac{\rho_i}{32}c'_j t_1^{\lambda-\epsilon}} \leq \sum_{t_1 = 1}^\infty e^{-\frac{\rho_i}{32}c'_j t_1^{\lambda-\epsilon}} < \infty \,.
\end{align}
Thus by the Borel-Cantelli Lemma, almost surely
\begin{align}
    \bbP\Bigg[\sum_{t=t_1/2}^{(1-\rho_i/4)t_1} \psi_{j,t} \leq \frac{\rho_i c'_j}{8} t_1^{\lambda-\epsilon} \Bigg] \leq e^{-\frac{\rho_i}{32}c'_j t_1^{\lambda-\epsilon}}
\end{align}
happens only finitely many times, so there is some finite $t^*$ for which the above holds, thus finishing the proof.
\end{proof}
\fi 

\else 
\begin{proof}[Proof Sketch]
\revise{
First, we determine a lower bound on the number of times that \eqref{eq::mu_i_greater_c_t_rate} holds. Then for neighbor $j$, the evolution of the dynamics gives that $\mu_j(t)$ at those times must be greater than some constant times $t^{\lambda -1 - \epsilon}$ at some set number of time steps. We then use a Chernoff bound on the sums of independent variables generated at these time steps by agent $j$, which shows that with high probability $\beta_j(t)$ is greater than some constant multiplied by $t^{\lambda -1 - \epsilon}$. Then using Borel-Cantelli, we convert the high probability result into a result that holds for all but finitely many time steps. 
}
\end{proof}
\fi

\begin{proposition} \label{prop::one_beta_leads_all_beta}
    For any agent $i$ and constants $c_i, \rho_i > 0$, let $\cT_i := \cT_i(c_i, \rho_i)$. Then there is a set of constants $c_j, \rho_j > 0$ for all $j \neq i$ such that, almost surely, there is some $t^* > 0$ such that
    \begin{align}
        t_1 > t^* \text{ and } t_1 \in \cT_i \implies t_1 \in \cT_j \text{ for all } j
    \end{align}
    where $\cT_j := \cT_j(c_j, \rho_j)$.
\end{proposition}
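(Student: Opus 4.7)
The plan is to propagate the ``good times'' property from agent $i$ outward through the graph by repeatedly applying \Cref{lem::mu_to_one_beta} along paths, using connectedness of $G$ and finiteness of $n$ to conclude.

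First, I would fix $i$ and $c_i, \rho_i > 0$, and invoke the assumption that $G$ is connected (since $\bW$ is irreducible). For each $j \neq i$, choose a path $i = j_0, j_1, \dots, j_{k_j} = j$ in $G$. I would then argue inductively along this path: apply \Cref{lem::mu_to_one_beta} to the pair $(j_0, j_1)$ to obtain constants $c_{j_1}, \rho_{j_1} > 0$ and a time $t^*_{j_0 \to j_1}$ such that, almost surely, for all $t_1 > t^*_{j_0 \to j_1}$ with $t_1 \in \cT_{j_0}(c_i, \rho_i)$, we have $t_1 \in \cT_{j_1}(c_{j_1}, \rho_{j_1})$. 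Then apply the lemma again to the pair $(j_1, j_2)$ with the newly obtained constants $c_{j_1}, \rho_{j_1}$, yielding $c_{j_2}, \rho_{j_2} > 0$ and a time $t^*_{j_1 \to j_2}$, and so on along the path.

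Composing these implications gives constants $c_j, \rho_j > 0$ and a time $\tilde t_j := \max_{0 \le \ell < k_j} t^*_{j_\ell \to j_{\ell+1}}$ such that, almost surely, for all $t_1 > \tilde t_j$ with $t_1 \in \cT_i(c_i, \rho_i)$, we have $t_1 \in \cT_j(c_j, \rho_j)$. The key point is that at each step $\rho$ gets halved (as in the proof of \Cref{lem::mu_to_one_beta}) and the multiplicative constant shrinks by a factor depending on $\gamma_{j_{\ell+1}}$ and the edge weight $w_{j_\ell, j_{\ell+1}} > 0$; since the path has finite length $k_j \le n-1$, both $c_j$ and $\rho_j$ remain strictly positive. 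Since there are finitely many agents $j \neq i$, I can take
\begin{align}
t^* := \max_{j \neq i} \tilde t_j,
\end{align}
which is almost surely finite as the maximum of finitely many almost surely finite random times. For any $t_1 > t^*$ with $t_1 \in \cT_i$, we then have $t_1 \in \cT_j(c_j, \rho_j)$ simultaneously for every $j$, as required.

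The main (minor) obstacle is bookkeeping: ensuring that the constants chosen at each step of the path depend only on the previously established constants and graph data, so that the finite union over $j$ of almost-sure events remains almost sure. This is automatic because the inductive application of \Cref{lem::mu_to_one_beta} at each edge involves only finitely many Borel--Cantelli exceptional events, and a finite union of null sets is null.
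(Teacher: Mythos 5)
Your proposal is correct and follows essentially the same approach as the paper: both propagate the good-times property outward from $i$ via repeated applications of \Cref{lem::mu_to_one_beta} over a connected, finite graph, taking finite maxima of almost-surely-finite times. The only cosmetic difference is that you fix a path per target vertex $j$, whereas the paper organizes the same propagation as an induction on distance $\dist(i,j)$ using neighborhoods $N_i(k)$; both correctly note that $\rho$ degrades (halving) at each step but stays bounded away from zero because path lengths are at most $n-1$.
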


\ifarxiv 
\begin{proof}
\else
\begin{proof}[Proof Sketch]
\fi
    
    This follows from \Cref{lem::mu_to_one_beta} by induction over \revisetwo{agents sorted by distance} to agent $i$. 
    \ifarxiv 
    Let $\dist(i,j)$ denote the distance of vertex $j$ from $i$, and $N_i(k) := \{j : \dist(i,j) \leq k\}$. We show that if the condition holds for all $j \in N_i(k)$, it holds for all $j \in N_i(k+1)$ as well, and therefore it holds for all $j \in N_i(n)$ (i.e. for all $j$, since no vertex can be more than $n$ distance from $i$).

    Base case: $k = 0$. Then $N_i(0) = \{i\}$, and the condition is trivially true.
    Inductive step: we know that there is a $t^*_{(k)}$ such that for all $t_1 > t^*_{(k)}$ and all $j \in N_i(k)$,
    \begin{align}
        t_1 \in \cT_i \implies t_1 \in \cT_j
    \end{align}
    (where $\cT_j := \cT_j(c_j, \rho_j)$ for $c_j, \rho_j > 0$). Now consider some $j' \in N_i(k+1)$; by definition it has some neighbor $j \in N_i(k)$. Thus by \Cref{lem::mu_to_one_beta}, there is some $t^*(j')$ and constants $c_{j'}, \rho_{j'} > 0$ such that for all $t_1 > t^*(j')$,
    \begin{align}
        t_1 \in \cT_j \implies t_1 \in \cT_{j'}
    \end{align}
    But we know that for all $t_1 > t^*_{(k)}$,
    $
        t_1 \in \cT_i \implies t_1 \in \cT_j
    $
    and therefore for all $t_1 > \max(t^*_{(k)}, t^*(j'))$ we have
    \begin{align}
        t_1 \in \cT_i \implies t_1 \in \cT_{j'} \,.
    \end{align}
    We can then choose
    $
        t^*_{(k+1)} = \max(\max_{j'}(t^*(j')), t^*_{(k)})
    $
    to complete the induction step.

    Finally, we take the $c_j,\rho_j > 0$ generated for all $j$, and, letting $k_{\max} := \max_j \dist(i,j)$, set $t^* = t^*_{(k_{\max})}$ to complete the result.
     \else 
    \fi 
\end{proof}

It is important that the graph is finite, 
\ifarxiv
since having only a finite number of induction steps means that the final value of $t^* = t^*_{(k_{\max})}$ will be finite.
\else
since then we only have a finite number of induction steps.
\fi

We also need that $t_1$ being a good time for agent $i$ means that $\beta_i(t_1)$ also obeys a constant factor lower bound of order $t_1^{\lambda-1-\epsilon}$ (for all sufficiently large $t_1$):

\begin{lemma} \label{lem::good-time-large-beta}
    For any $t_1 \geq 2/\rho_i$, if $t_1 \in \cT_i(c_i, \rho_i)$, then
    \begin{align}
        \beta_i(t_1) \geq (c_i/2) t_1^{\lambda-1-\epsilon}\,.
    \end{align}
\end{lemma}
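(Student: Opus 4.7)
\medskip
\noindent\textbf{Proof plan.} The plan is to use the fact that $\beta_i(t)$ is a running time average of $0/1$ declarations, and therefore cannot drop by more than a factor of $t^*/t_1$ between an earlier good time $t^*$ and the later time $t_1$, since any $1$'s already declared by time $t^*$ remain in the numerator of $\beta_i(t_1)$. The hypothesis $t_1 \geq 2/\rho_i$ is exactly what is needed to guarantee that at least one such earlier good time $t^* \in [t_1/2, t_1]$ exists.

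First, I would observe that $t_1 \in \cT_i(c_i, \rho_i)$ means $\beta_i(t) \geq c_i t^{\lambda-1-\epsilon}$ holds for at least a $\rho_i$ fraction of the integer times $t$ in $[t_1/2, t_1]$. Since this interval contains roughly $t_1/2$ integers, the set of good times in $[t_1/2, t_1]$ has size at least $\rho_i t_1/2 \geq 1$ under the assumption $t_1 \geq 2/\rho_i$; pick any such $t^* \in [t_1/2, t_1]$ with $\beta_i(t^*) \geq c_i (t^*)^{\lambda - 1 - \epsilon}$.

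Next, I would use the definition $t\beta_i(t) = b_i^1 + \sum_{\tau=\startindex}^{t} \psi_{i,\tau}$, which is nondecreasing in $t$ because each added term is nonnegative. Applied to $t^* \leq t_1$, this gives
\begin{align}
t_1 \beta_i(t_1) \geq t^* \beta_i(t^*) \geq c_i (t^*)^{\lambda - \epsilon}\,,
\end{align}
so dividing by $t_1$ and using $t^* \geq t_1/2$ yields
\begin{align}
\beta_i(t_1) \geq \frac{c_i (t^*)^{\lambda - \epsilon}}{t_1} \geq c_i \, 2^{-(\lambda - \epsilon)} \, t_1^{\lambda - 1 - \epsilon}\,.
\end{align}

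Finally, since $\lambda < 1$ (we are in the consensus-to-$\bzero$ regime, so $\lambda = \lambda_{\max}(\dermat{\bzero}) < 1$) and $\epsilon > 0$, we have $\lambda - \epsilon < 1$ and hence $2^{-(\lambda-\epsilon)} > 1/2$, yielding $\beta_i(t_1) \geq (c_i/2) t_1^{\lambda-1-\epsilon}$ as desired. There is no real obstacle here: the argument is just the monotonicity of the cumulative declaration count combined with the fact that good times are required to lie within a factor of $2$ of $t_1$, which is built into the definition of $\cT_i$. The only step requiring minor care is verifying that at least one good integer $t^*$ exists in $[t_1/2, t_1]$, which is exactly why the hypothesis $t_1 \geq 2/\rho_i$ appears.
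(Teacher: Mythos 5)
Your proof is correct and uses essentially the same approach as the paper: both rely on the monotonicity of the cumulative declaration count $t\beta_i(t)$ together with the fact that $t_1 \geq 2/\rho_i$ guarantees at least one good time $t^* \in [t_1/2, t_1]$. The only difference is a reordering of the algebra: the paper first bounds $\beta_i(t^*) \geq c_i t_1^{\lambda-1-\epsilon}$ (using $t^* \leq t_1$ together with the exponent $\lambda - 1 - \epsilon < 0$, which holds unconditionally since $\lambda < 1$) and then shows $\beta_i(t_1) \geq \tfrac{1}{2}\beta_i(t^*)$, whereas you carry $(t^*)^{\lambda-\epsilon}$ forward and only compare to $t_1$ at the end. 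A small point worth noting: your intermediate step $(t^*)^{\lambda-\epsilon} \geq (t_1/2)^{\lambda-\epsilon}$ uses $t^* \geq t_1/2$ and therefore tacitly assumes $\lambda - \epsilon \geq 0$; the paper's ordering avoids this by only needing the sign of $\lambda - 1 - \epsilon$, which is always negative. This is harmless here because the surrounding section works in the regime $\epsilon < \lambda$ anyway (e.g.\ the Chernoff bound in the preceding lemma needs $\lambda - \epsilon > 0$ to decay), and if $\lambda - \epsilon < 0$ your argument actually gives a better constant via $t^* \leq t_1$, but it's a step you should make explicit rather than gloss over.
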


\begin{proof}
    This follows since $t_1 \in \cT_i(c_i, \rho_i)$ and $t_1 \geq 2/\rho_i$ means that there is at least one $t \in [t_1/2, t_1]$ such that
    \ifarxiv
    \begin{align}
        \beta_i(t) \geq c_i t^{\lambda-1-\epsilon} \geq c_i t_1^{\lambda-1-\epsilon}\,.
    \end{align}
    \else
    $\beta_i(t) \geq c_i t^{\lambda-1-\epsilon} \geq c_i t_1^{\lambda-1-\epsilon}$.
    \fi 
    But this means that
\begin{align}
        \beta_i(t_1) &= \frac{\sum_{\tau=1}^{t_1} \psi_{i,\tau}}{t_1}
        \geq \frac{1}{2} \frac{\sum_{\tau=1}^{t_1} \psi_{i,\tau}}{t_1/2}
        \\ &\geq \frac{1}{2} \frac{\sum_{\tau=1}^{t} \psi_{i,\tau}}{t}
        = \frac{1}{2}\beta_i(t)
        \geq (c_i/2) t_1^{\lambda-1-\epsilon}\,.
\end{align}
\end{proof}

This yields the result that, almost surely,
$
    \beta_i(t) = \tilde{O}(t^{\lambda-1})
$:

\begin{proposition}
For any $i \in [n]$ and any $\epsilon > 0$, almost surely
\ifarxiv
\begin{align}
\label{eq::ind_agent_rate_zero} \lim_{t\to \infty} \frac{\beta_i(t)}{t^{\lambda - 1 + \epsilon}} &= 0
\\ \label{eq::ind_agent_rate_infinity}
\lim_{t\to \infty} \frac{\beta_i(t)}{t^{\lambda - 1 - \epsilon}} &= \infty
\end{align}
\else
\begin{align}
\label{eq::ind_agent_rate_zero} \lim_{t\to \infty} \frac{\beta_i(t)}{t^{\lambda - 1 + \epsilon}} &= 0
 \label{eq::ind_agent_rate_infinity} \text{ and }
\lim_{t\to \infty} \frac{\beta_i(t)}{t^{\lambda - 1 - \epsilon}} = \infty
\end{align}
\fi
\end{proposition}

\iffalse 
\begin{proposition}
For any $i$ and $\epsilon > 0$, almost surely as $t \to \infty$,
\begin{align}
    t^{\lambda-1-\epsilon} \leq \beta_i(t) \leq t^{\lambda-1+\epsilon}
\end{align}
\end{proposition}
\fi  

\begin{proof}
\ifarxiv
First, we can show \eqref{eq::ind_agent_rate_zero} as a corollary of \Cref{thm::rate_VBt}.
\else
The first equation follows from \Cref{thm::rate_VBt}.
\fi 
Since $V(\bbeta(t)) = \bv^{\ltop} \bbeta(t) = \sum_{i = 1}^n v_i \beta_i(t)$ and each $v_i > 0$ is constant, for each $i$ there is a $c_i$ so that
$
\beta_i(t)  \leq c_i V(\bbeta(t))
$,
so
\begin{align}
\lim_{t\to \infty} \frac{\beta_i(t)}{t^{\lambda - 1 + \epsilon}} &\leq \lim_{t \to \infty} \frac{c_i V(\bbeta(t))}{t^{\lambda - 1 + \epsilon}} = 0\,.
\end{align}

\ifarxiv  
Next, \Cref{lem::exist_agent_rate_tlam} \revisetwo{(with $\epsilon / 2$)} shows that there is some $c > 0$ and time $t_0$ such that for all $t > t_0$, 
\begin{align}
    t \in \cT_i(c, 1/n) \text{ for some } i
\end{align}
\Cref{prop::one_beta_leads_all_beta} then shows that for each $i$, there is some collection of $c^{(i)}_j, \rho^{(i)}_j > 0$ such that almost surely there is a time $t^{(i)}$ such that when $t > t^{(i)}$,
\begin{align}
    t \in \cT_i(c,1/n) \implies t \in \cT_j(c^{(i)}_j, \rho^{(i)}_j)
\end{align}
which, by \Cref{lem::good-time-large-beta}, implies that (when $t$ is sufficiently large) for any $j$
\begin{align}
    t \in \cT_i(c,1/n) \implies \beta_j(t) \geq (c_j^{(i)}/2) t^{\lambda-1-\epsilon/2}
\end{align}
Finally, let $c_j = \min_i c^{(i)}_j > 0$. Then, for sufficiently large $t$,
\begin{align}
    \exists i \text{ s.t. } t \in \cT_i(c,1/n) \implies \beta_j(t) \geq (c_j/2) t^{\lambda-1-\epsilon/2}
\end{align}
But by \Cref{lem::exist_agent_rate_tlam} we know that almost surely there is some $t_0$ such that for any $t > 0$, there is some $i$ such that $t \in \cT_i(c,1/n)$, which then implies that there are $c_1, \dots, c_n$ such that for any sufficiently large $t$,
$
    \beta_j(t) \geq (c_j/2) t^{\lambda-1-\epsilon/2}
$
and we are done.
\else 
Next, \Cref{lem::exist_agent_rate_tlam} \revisetwo{(with $\epsilon / 2$)} shows that there is some $c > 0$ and time $t_0$ such that for all $t > t_0$, there is some $i$ such that $t \in \cT_i(c, 1/n)$. By \Cref{prop::one_beta_leads_all_beta}, there is (almost surely) some time $t^*$ and constants $c'_j, \rho_j > 0$ such that for any $i$, any $t > t^*$ and $j \in [n]$, we have $t \in \cT_i(c, 1/n) \implies t \in \cT_j(c'_j, \rho_j)$. Thus, for all $t > \max(t_0, t^*)$, we have $t \in \cT_j(c'_j, \rho_j)$ for all $j$. Finally, we apply \Cref{lem::good-time-large-beta} to show there are constants $c_j > 0$ such that \revisetwo{$\beta_j(t) \geq c_j t^{\lambda-1-\epsilon/2}$} for all $t > \max(t^*, t_0)$.
\fi 
\end{proof}

\subsection{Computing Consensus Convergence Rate}

Finally, we can use the consensus rate to bound the 
convergence rate of the inherent belief estimator.
If for each $i$, if $\beta_i(t) \geq c t^{\lambda - 1 - \epsilon}$, then $\mu_i(t) \geq c t^{\lambda - 1 - \epsilon}$.
\ifarxiv
Using \eqref{eq::compute_xt_for_estimator},
we have that
\begin{align}
x(t) &\geq (\gamma_i - 1) \mu_i(t)
\geq (\gamma_i - 1)  c t^{\lambda - 1 - \epsilon}\,.
\end{align}
\else 
We compute that $x(t) \geq (\gamma_i - 1) \mu_i(t) \geq (\gamma_i - 1)  c t^{\lambda - 1 - \epsilon}$.
\fi
Then 
\ifarxiv 
\begin{align}
X(t) &\geq (\gamma_i - 1) c \sum_{\tau = t_0}^t \tau^{\lambda - 1 -\epsilon} 
\\ &\approx (\gamma_i - 1) c \int_{t_0}^{t} \tau ^{\lambda - 1 - \epsilon} d\tau
\\ &= (\gamma_i-1) c   \frac{1}{\lambda - \epsilon}\left(t^{\lambda - \epsilon} - t_0^{\lambda - \epsilon} \right)
\approx c' t^{\lambda - \epsilon}\,.
\end{align}
\else 
$X(t) \geq (\gamma_i - 1) c \sum_{\tau = t_0}^t \tau^{\lambda - 1 -\epsilon} 
\approx c' t^{\lambda - \epsilon}$.
\fi 

\Cref{prop::estimator_convergence_function} yields that
$
\bbP\left[\exists t \geq t^* : \widehat{\phi}_i(t) \neq \phi_i\right] \leq \delta
$
if 
\begin{align}
t^* \geq \left(\frac{1}{c'} \frac{1}{\xi_i} \log \frac{1}{\delta (e^{\xi_i} - 1)} \right)^{1/(\lambda - \epsilon)}\label{eq::rate_estimator_consensus}
\end{align}
(assuming that $t^*$ is such that $c' (t^*)^{\lambda - \epsilon} >2 $). 

Compared to \eqref{eq::rate_worst_case_mu}, we see that instead of a rate which is $1/\delta$ to some power, we get $\tilde{\Theta}(\log(1/\delta)^{1/\lambda})$ (since \eqref{eq::rate_estimator_consensus} holds for all $\epsilon > 0$), which is a big improvement. This, along with \eqref{eq::order_rate_for_interior}, yields the following theorem, suggesting that estimating inherent beliefs in consensus is more difficult:

\begin{theorem}
    For the inherent belief estimator $\widehat{\phi}_i(t)$ given in \Cref{def::estimator_inherent}, let $t^*$ be defined as the time of convergence:
    \begin{align}
        t^* := \max(t : \widehat{\phi}_i(t) \neq \phi_i)
    \end{align}
    i.e. the first time such that $\widehat{\phi}_i(t) = \phi_i$ for all $t > t^*$. Then:
    \begin{align}
        t^* = 
        \begin{cases}
            O(\log(1/\delta)) &\text{if no consensus}
            \\ O(\log(1/\delta)^{\frac{1}{\lambda} + \epsilon}) \text{ for any } \epsilon > 0 &\text{if consensus}
        \end{cases}
    \end{align}
    with probability $\geq 1-\delta$, where $\lambda$ is the largest eigenvalue of $\bGamma \bW$ if the consensus is to $\bzero$, and of $\bGamma^{-1} \bW$ if to $\bone$.
\end{theorem}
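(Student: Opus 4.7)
The plan is to derive both bounds as applications of Proposition \ref{prop::estimator_convergence_function}, which reduces the problem to exhibiting a monotonic lower bound $g(t) \leq X(t)$ on the cumulative predictable expected value and inverting it, absorbing all positive constants into the big-$O$. The two regimes will be handled separately: in the non-consensus case $g$ is linear, whereas in the consensus case $g$ is polynomial of degree just under $\lambda$.

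In the non-consensus case, the almost sure convergence of the dynamics to an interior equilibrium (cf.\ \cite{opinionDynamicsPart1}) keeps $\mu_i(t)$ bounded away from both $0$ and $1$, so expression \eqref{eq::compute_xt_for_estimator} for $x(t)$ converges to a strictly positive constant. This gives $X(t) \geq K t$ for some $K > 0$ and all sufficiently large $t$, and applying Proposition \ref{prop::estimator_convergence_function} with $g(t)=Kt$ (whose inverse is linear in its argument) yields $t^* = O(\log(1/\delta))$, matching \eqref{eq::order_rate_for_interior}.

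In the consensus case, the plan is to first reduce to consensus to $\bzero$ by symmetry: the case $\bone$ is obtained by swapping the roles of the two opinions, i.e.\ working with $1-\beta_i(t)$ and $1-\mu_i(t)$, which replaces $\bGamma$ with $\bGamma^{-1}$ throughout the linearized dynamics and hence the relevant eigenvalue by $\lambda_{\max}(\bGamma^{-1}\bW)$. Under consensus to $\bzero$, the preceding consensus-rate proposition guarantees that, almost surely, for any $\epsilon > 0$ and all sufficiently large $t$, every agent $j$ satisfies $\beta_j(t) \geq c_j t^{\lambda-1-\epsilon}$, so averaging against the normalized adjacency weights gives $\mu_i(t) \geq c' t^{\lambda-1-\epsilon}$. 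Substituting into \eqref{eq::compute_xt_for_estimator} and using $\mu_i(t)\to 0$ to replace the denominator by a constant yields $x(t) \geq c'' t^{\lambda-1-\epsilon}$, which telescopes to $X(t) \geq c_1 t^{\lambda-\epsilon}$ as in the displayed calculation before \eqref{eq::rate_estimator_consensus}. Inverting $g(t) = c_1 t^{\lambda-\epsilon}$ in Proposition \ref{prop::estimator_convergence_function} then produces $t^* = O\bigl(\log(1/\delta)^{1/(\lambda-\epsilon)}\bigr)$, and since $\epsilon>0$ is arbitrary this can be absorbed into the exponent as $O\bigl(\log(1/\delta)^{1/\lambda + \epsilon'}\bigr)$ for any $\epsilon'>0$.

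The hard part is the consensus case, and most of the real work has already been done upstream: the lower bound $\beta_j(t) \geq c_j t^{\lambda-1-\epsilon}$ required the martingale machinery for the linearized processes $h^{\alpha}(t)$ together with the spreading argument of Proposition \ref{prop::one_beta_leads_all_beta}. One residual subtlety to address is the case $\phi_i = 0$ (equivalently $\gamma_i < 1$): there $x(t) < 0$ and $Z(t)$ drifts to $-\infty$, so one applies the argument to $-Z(t)$; formula \eqref{eq::compute_xt_for_estimator} shows that $|x(t)|$ is of the same order as in the $\phi_i=1$ case, so the lower bound on $|X(t)|$ and hence the resulting rate for $t^*$ is unchanged up to constants.
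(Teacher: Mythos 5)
Your proposal is correct and takes essentially the same approach as the paper: the paper's proof is a one-line citation to \eqref{eq::order_rate_for_interior} and \eqref{eq::rate_estimator_consensus}, which are derived by exactly the computation you reconstruct — bounding $x(t)$ via \eqref{eq::compute_xt_for_estimator} using $\mu_i(t)$ bounded away from $0$ (non-consensus) or $\mu_i(t) = \Omega(t^{\lambda-1-\epsilon})$ (consensus), summing to get $g(t)$, and inverting via \Cref{prop::estimator_convergence_function}. You also make explicit the symmetry reduction for consensus to $\bone$ and the $\phi_i=0$ sign flip, which the paper leaves implicit (it fixes $\phi_i=1$ and consensus to $\bzero$ at the start of the section).
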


\begin{proof}
    This follows directly from \eqref{eq::order_rate_for_interior} (for the non-consensus case) and from \eqref{eq::rate_estimator_consensus} (for the consensus case).   
    \ifarxiv
    Note that if something holds for an exponent of $\frac{1}{\lambda - \epsilon}$ for all $\epsilon > 0$, this is equivalent to holding for an exponent of $\frac{1}{\lambda} + \epsilon$ for all $\epsilon > 0$, so we can make the substitution.
    \fi
\end{proof}


\fi



\ifacc
\else
\section{Conclusion}

In this work, we study the Interacting P\'{o}lya Urn model of opinion dynamics model under social pressure. 
We expanded upon \cite{socialPressure2021} by showing there exists an estimator for bias parameters and inherent beliefs. 
Specifically, we showed that the history of any agent and their neighbors' declarations is sufficient in the limit to determine an agent's inherent belief and bias parameter for any network structure, using estimators based on maximum likelihoods. We also analyzed the rate at which the inherent belief estimator converges.

{ \subsection{Limitations and Future Work}

An important open question is how accurately the Interacting P\'{o}lya Urn models real social pressure in various situations, in particular how well the proposed interaction between inherent belief (or bias) with social pressure corresponds to reality.

Also, although the model can accommodate general weighted graphs and biases, there is still a significant simplification due to the assumption of a fixed social network (where agents do not change how much they are influenced by others) and synchronous updates. The estimators presented also assume full knowledge of the network and of agents' expressed opinions, which may not be the case when studying real social networks.
Furthermore, while this work shows rigorous upper bounds on the (probabilistic) convergence time of the bias parameter and inherent belief estimators, it remains open whether this bound represents the actual convergence time or whether it may converge faster. Finally, the Interacting P\'{o}lya Urn model assumes that agents do not change their true belief. A more realistic model may assume that an agent might change her inherent belief (or her bias parameter). New analysis would be required for this, since a constant bias parameter is necessary in this work. 

These limitations provide interesting directions for future work. From a social science and psychology perspective, it would be interesting to compare the Interacting P\'{o}lya Urn model to the behavior of existing social networks; this may also reveal whether changes to the model may make it a more accurate reflection of real behavior (such as time-discounting, so that opinions expressed further in the past affect the social pressure less than recently-expressed opinions).
Furthermore, it remains open if and how the methods presented here can be generalized to variations of the model, for instance if the agents alter the social network over time in response to their social environment. Extending the estimators (and the analysis) to cases of partial information would also increase their practical relevance.
Showing lower bounds on the convergence time, and in particular closing the gap to the shown upper bounds, also remains open.

Another crucial area for future work is to consider how to efficiently \emph{intervene} in a social network in order influence \revisetwo{declared} opinions \revisetwo{(for a different interpretation of the model where declared opinions are actions and inherent beliefs are only biases)}; this stems from one of the key motivations behind the study of opinion dynamics, which is to help guide marketing and persuasion campaigns. 

Finally, the model can be extended to $k > 2$ different opinions, with each agent declaring $\psi(t) \in [k]$ at each time, and raises the question of how this would affect the problem of estimating inherent beliefs and biases.
}

%
\fi

\ifarxiv 
\else 
\section*{References}
\fi 

\bibliographystyle{resources/IEEEbib}
\bibliography{ref_social}




\ifarxiv 
\else 

\vspace{-3pc}

\begin{IEEEbiography}
[{\includegraphics[width=1in,height=1.25in,clip,keepaspectratio]{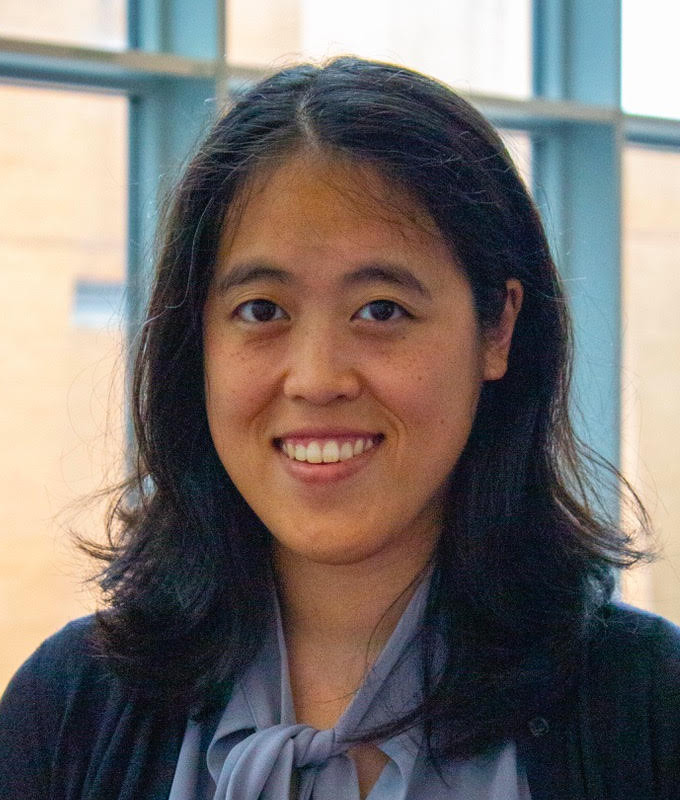}}]%
{Jennifer Tang} (Member, IEEE)
is a Postdoctoral Associate at the Institute of Data, Systems and Society (IDSS) and the Laboratory for Information and Decision Systems (LIDS) at MIT. She received her Ph.D and S.M in Electrical Engineering and Computer Science at MIT, and a B.S.E in Electrical Engineering from Princeton University. Her research interests include information theory, quantization and data compression, high-dimensional statistics and models for social dynamics and inference. 
\end{IEEEbiography}

\vspace{-3pc}

\begin{IEEEbiography}[{\includegraphics[width=1in,height=1.25in,clip,keepaspectratio]{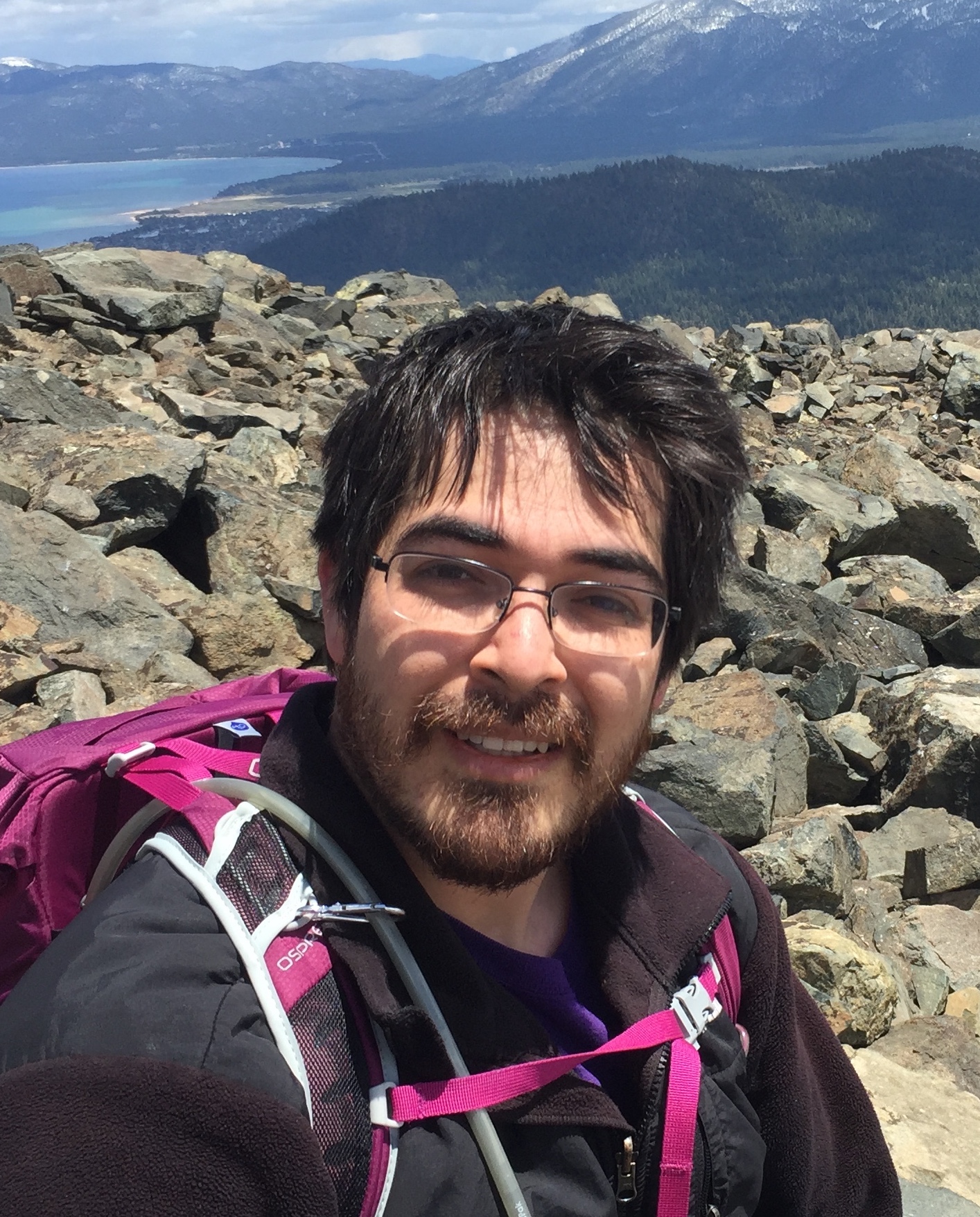}}]%
{Aviv Adler} (Member, IEEE)
is a Senior Engineer at Analog Devices Inc. Previously, he was a Postdoctoral Scholar at AUTOLab at UC Berkeley.
He received his Ph.D. and S.M. in the Department of Electrical Engineering and Computer Science at MIT as a member of LIDS, and his A.B. in Mathematics at Princeton University. Aviv's research interests include robotics, computational geometry, motion planning, optimization, complexity theory, and information theory.
\end{IEEEbiography}

\vspace{-3pc}

\begin{IEEEbiography}[{\includegraphics[width=1in,height=1.25in,clip,keepaspectratio]{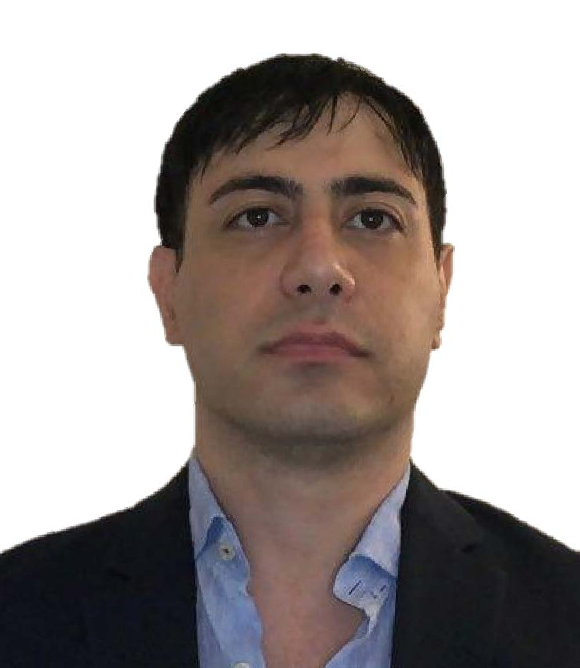}}]%
{Amir Ajorlou} (Member, IEEE)
 is a research scientist at the Laboratory for Information and Decision Systems (LIDS) at MIT. Previously, he was a postdoctoral research fellow at MIT and Penn from 2013 to 2016. He received his BS from Sharif University of Technology in Tehran, Iran and his PhD in electrical and computer engineering from Concordia University in Montreal, Canada, in 2013. He has been the recipient of several awards, including two gold medals in the International Mathematical Olympiad (IMO), Concordia University Doctoral Prize in Engineering and Computer Science, Governor General of Canada Academic Gold Medal, and the NSERC Postdoctoral Fellowship. He has served as an Associate Editor for the Conference Editorial Board of IEEE Control Systems Society from 2017 to 2022, and as Chair of the Control Systems Chapter of the IEEE Montreal Section in 2011. His current research uses a collection of tools and techniques from game theory, networks, microeconomics, optimization, applied probability theory, and control theory to model and analyze strategic decision making under uncertainty in complex social and economic networks. 
\end{IEEEbiography}

\vspace{-3pc}

\begin{IEEEbiography}[{\includegraphics[width=1in,height=1.25in,clip,keepaspectratio]{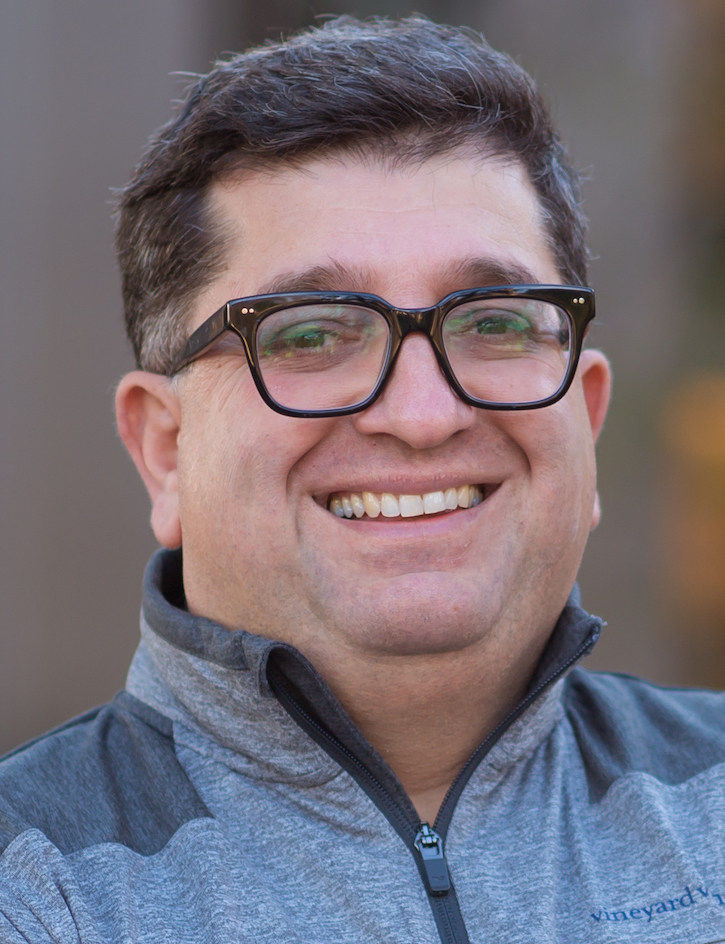}}]%
{Ali Jadbabaie} (Fellow, IEEE) is the JR East Professor and Head of the Department of Civil and Environmental Engineering at Massachusetts Institute of Technology (MIT), where he is also a core faculty in the Institute for Data, Systems, and Society (IDSS) and a Principal Investigator with the Laboratory for Information and Decision Systems. Previously, he served as the Director of the Sociotechnical Systems Research Center and as the Associate Director of IDSS which he helped found in 2015. 

He received a B.S. degree with High Honors in electrical engineering with a focus on control systems from the Sharif University of Technology, an M.S. degree in electrical and computer engineering from the University of New Mexico, and a Ph.D. degree in control and dynamical systems from the California Institute of Technology. He was a Postdoctoral Scholar at Yale University before joining the faculty at the University of Pennsylvania, where he was subsequently promoted through the ranks and held the Alfred Fitler Moore Professorship in network science in the Department of Electrical and Systems Engineering.   He is a recipient of a National Science Foundation Career Development Award, an US Office of Naval Research Young Investigator Award, the O. Hugo Schuck Best Paper Award from the American Automatic Control Council, and the George S. Axelby Best Paper Award from the IEEE Control Systems Society. He has been a senior author of several student best paper awards, in several conferences including ACC, IEEE CDC and IEEE ICASSP. He is an IEEE fellow, and the recipient of a Vannevar Bush Fellowship from the Office of Secretary of Defense. His research interests are broadly in systems theory, decision theory and control,  optimization theory, as well as multiagent systems, collective decision making in social and economic networks, and computational social science. 
\end{IEEEbiography}


\fi 

\end{document}
